\def\toto{\mathop{\to}\limits^{TO}}
\def\<{\langle}
\def\>{\rangle}
\def\ot{\otimes}
\def\rhor{\rho_R}
\def\rhoinr{\rho^{\rm in}_R}
\def\rhoins{\rho^{\rm in}_S}
\def\rhoinw{\rho^{\rm in}_W}
\def\rhoinc{\rho^{\rm in}_C}
\def\rhoouts{\rho^{\rm out}_S}
\def\rhooutw{\rho^{\rm out}_W}
\def\rhooutc{\rho^{\rm out}_C}
\def\sgibbs{\rho_S^\beta}
\def\cgibbs{\rho_C^\beta}
\def\ancgibbs{\rho_{\rm anc}^\beta}
\newcommand{\be}{\begin{eqnarray} \begin{aligned}}
\newcommand{\ee}{\end{aligned} \end{eqnarray} }
\newcommand{\benn}{\begin{eqnarray*} \begin{aligned}}
\newcommand{\eenn}{\end{aligned} \end{eqnarray*} }
\newcommand{\ben}{\begin{eqnarray} \begin{aligned}}
\newcommand{\een}{\end{aligned} \end{eqnarray} }
\newcommand{\h}{\mathrm{H}}
\newcommand{\bc}{\begin{center}}
\newcommand{\ec}{\end{center}}
\newcommand{\half}{\frac{1}{2}}
\newcommand{\id}{\mathbb{I}}
\newcommand{\tr}{\mathop{\mathsf{tr}}\nolimits}
\newcommand{\e}{\mathrm{e}}
\newcommand{\beq}{\begin{eqnarray} \begin{aligned}}
\newcommand{\eeq}{\end{aligned} \end{eqnarray} }
\newcommand{\bea}{\begin{array}}
\newcommand{\eea}{\end{array}}
\newcommand{\bee}{\begin{enumerate}}
\newcommand{\eee}{\end{enumerate}}
\newcommand{\bei}{\begin{itemize}}
\newcommand{\eei}{\end{itemize}}
\newtheorem{theorem}{Theorem}
\newtheorem{proposition}[theorem]{Proposition}
\newtheorem{lemma}[theorem]{Lemma}
\newtheorem{definition}[theorem]{Definition}
\newtheorem{remark}[theorem]{Remark}
\newtheorem{corollary}[theorem]{Corollary}
\def\id{\mathbb{I}}
\def\01{\{0,1\}}
\newcommand{\ceil}[1]{\lceil{#1}\rceil}
\newcommand{\ket}[1]{|#1\rangle}
\newcommand{\bra}[1]{\langle#1|}
\newcommand{\proj}[1]{|#1\rangle\langle#1|}
\newcommand{\rank}{\operatorname{rank}}
\newtheorem{fact}{Fact}
\newcommand{\supl}{Supplementary Information}
\def\<{\langle}
\def\>{\rangle}
\def\ot{\otimes}
\def\rhor{\rho_R}
\def\gibbs{\rho_\beta}
\def\final{\rho'}
\def\s{\,\,\,\,}
\def\initial{\rho}
\newcommand{\alfree}[1]{F_\alpha(#1,\gibbs)}
\newcommand{\qalfree}{{\hat F}}
\newcommand{\qalfreesimple}{{\tilde F}}
\def\fmin{F_{\rm min}}
\def\fmax{F_{\rm max}}
\def\ep{\epsilon}
\newcommand{\sgn}{\operatorname{sgn}}
\def\qrenyi{S}
\def\qrenyisimple{\tilde S}
\def\trumpd{\rm D_{work}}
\def\expbound{\exp(-\Omega(\sqrt{\log(N)}))} 
\def\gibbsin{\rho_\beta^{(0)}}
\def\gibbsout{\rho_\beta^{(1)}}
\def\genFs{generalized free energies}
\newtheorem*{rep@theorem}{\rep@title}
\newcommand{\newreptheorem}[2]{%
\newenvironment{rep#1}[1]{%
 \def\rep@title{#2 \ref{##1} (restatement)}%
 \begin{rep@theorem}}%
 {\end{rep@theorem}}}
\begin{document}
\title{The second laws of quantum thermodynamics}
\author{Fernando G.S.L. \surname{Brand\~ao}}
\affiliation{University College London, Department of Computer Science}
\author{Micha\l\ \surname{Horodecki}}
\affiliation{IFTIA, University of Gda\'{n}sk, 80-952 Gda\'{n}sk, Poland}
\author{Nelly Huei Ying \surname{Ng}}
\affiliation{Centre for Quantum Technologies, National University of Singapore, 3 Science Drive 2, 117543 Singapore}
\author{Jonathan \surname{Oppenheim}}
%\affiliation{DAMTP, University of Cambridge, CB3 0WA, Cambridge, UK}
\affiliation{University College of London, Department of Physics \& Astronomy, London, WC1E 6BT and London Interdisciplinary Network for Quantum Science}                        
\affiliation{Centre for Quantum Technologies, National University of Singapore, 3 Science Drive 2, 117543 Singapore}
\author{Stephanie \surname{Wehner}}
\affiliation{Centre for Quantum Technologies, National University of Singapore, 3 Science Drive 2, 117543 Singapore}
\affiliation{School of Computing, National University of Singapore, 13 Computing Drive, 117417 Singapore}

\begin{abstract}
The second law of thermodynamics tells us which state transformations
are so statistically unlikely that they are effectively forbidden. Its
original formulation, due to Clausius, states that ``Heat can never pass
from a colder to a warmer body  without some other change, connected
therewith, occurring at the same time'' \cite{clausius1850ueber}. The second
law applies to systems composed of many particles interacting;
however, we are seeing that one can make sense of thermodynamics in the regime where we
only have a small number of particles interacting with a heat 
bath~\cite{uniqueinfo,dahlsten2011inadequacy, del2011thermodynamic,HO-limitations,aaberg-singleshot,
egloff2012laws,skrzypczyk2013extracting,faist2012quantitative}. 
Is there a second law of thermodynamics in this
regime? Here, we find that for processes which are cyclic or very close to cyclic,
the second law for microscopic systems
%or highly correlated systems 
takes on a
very different form than it does at the macroscopic scale, imposing not
just one constraint on what state transformations are possible, but an
entire family of constraints. In particular, we find a family of free energies
which generalise the traditional one, and show that they can never increase.
The ordinary second law
just corresponds to the non-increasing of one of these
free energies, with the remainder imposing additional constraints on thermodynamic transitions.
We further find that there are three regimes which determine which family
of second laws govern state transitions, depending on how cyclic the
process is. In one regime one can cause an apparent violation 
of the usual second law, through a process of embezzling work
from a large system which remains arbitrarily close to its original state.
These second laws are not only relevant for small systems, 
but also apply to individual macroscopic systems interacting via long-range interactions,
which only satisfy the ordinary second law on average. 
By making precise the definition of thermal operations,
the laws of thermodynamics take on a simple form with the first law defining the 
class of thermal operations, the zeroeth law, as derived here, emerging as a unique condition ensuring the
theory is nontrivial, and the remaining laws being a monotonicity property of our generalised free energies.
\end{abstract}
\maketitle

In attempting to apply the Clausius statement of the second law to the microscopic or quantum scale, we immediately run into a problem, 
because it talks about cyclic processes in which there is\textit{ no other change} occurring at the same time, and at this scale, 
it is impossible to design a process in which
there is no change, however slight in our devices and heat engines. Interpreted strictly, the Clausius statement of the second law,
applies to situations which never occur in nature. The same holds true for other versions of the second law, such as the Kelvin-Planck statement, where 
one also talks about cyclic processes, in which all other objects beside the system of interest are returned back to their original state. Here,
we derive a quantum version of the Clausius statement, by looking at 
processes where a microscopic or quantum system undergoes a transition from one state to another, 
while the environment, and working body or heat engine is returned back to their original state.
While macroscopically, only a single second law restricts transitions, we find that there are an entire family of more fundamental 
restrictions at the quantum level. At the macroscopic scale, and for systems with short range correlations, 
this entire family of second laws become equal to the ordinary second law, but outside of this regime, these other second laws impose additional
restrictions on thermodynamical transitions. What's more, one needs to be more precise about what one means by a cyclic process.
At the macroscopic scale, the fact that a process is only approximately cyclic has generally been assumed to be enough to guarantee the second law.  
Here, we show that this is not the case in the microscopic regime, and we therefore needs to talk about ``how cyclic'' a process is when stating 
the second law. We also derive in this work, a zeroeth law of thermodynamics, which is stronger than the ordinary zeroeth law.

For thermodynamics at the macroscopic scale, a system in state $\initial$ can be transformed into state $\final$ provided that the free energy goes
down, where the free energy for a state $\rho$ is
\begin{align}
F(\rho) = \<E(\rho)\> - kTS(\rho),
\label{eq:helmfreeenergy}
\end{align}
with $T$ the temperature of the ambient heat bath that surrounds the system, $k$, the Boltzmann constant, $S(\rho)$ the entropy of the system, and $\<E\>$ 
its average energy. This is a version of the second law, where we also use the fact that the total energy of the system and heat bath must
be conserved. This criteria governing state transitions is valid if the system is composed of many particles, and there are no long range correlations. 
In the case of microscopic, quantum or highly correlated systems, a criteria for state transitions of a total system was proven in
\cite{HO-limitations} and named thermo-majorisation (See Figure \ref{fig:thermomaj}).
This criterion has been conjectured~\cite{ruch1975diagram} and claimed 
to be~\cite{egloff2012laws} a second law. 
However, here, we will see that if elevated to such high status, it can be violated. Namely, we will give examples where
$\initial\rightarrow\final$ would violate the thermo-majorisation criteria, but nonetheless, the transition is possible via
a cyclic process in which a working body $\sigma$ - an ancilla or catalyst - is returned back into its original state.

This phenomenon is related to entanglement catalysis~\cite{JonathanP}, where it can be shown
that some forbidden transitions can be possible, if we can use an additional system $\sigma$ as a catalyst, i.e. 
we may have $\initial \not\rightarrow \final$ and yet $\initial \otimes\sigma\rightarrow\final\otimes\sigma$.
In the case of thermodynamics, the catalyst $\sigma$ may be thought of as a working body or heat engine
which undergoes a cyclic process and is returned back into its original state. In deciding whether
one can transform $\initial$ into $\final$, one therefore needs to ask whether there exists a working body
or other ancillas $\sigma$ for which $\initial \ot\sigma \rightarrow \final \ot \sigma$ (see Figure \ref{fig:setting}).  Thus, thermomajorisation
should only be applied to total resources including catalysts and working bodies and not the system of interest itself.
In the case of entanglement theory,
and when the catalyst is returned in exactly the same state, the criteria for when one pure state
may be transformed into another has been found~\cite{Klimesh-trumping,Turgut-trumping-2007} and they are called trumping conditions. We will generalise and adapt the trumping conditions to enable their application to the case of thermodynamics.

\begin{figure}
\includegraphics[width=10cm]{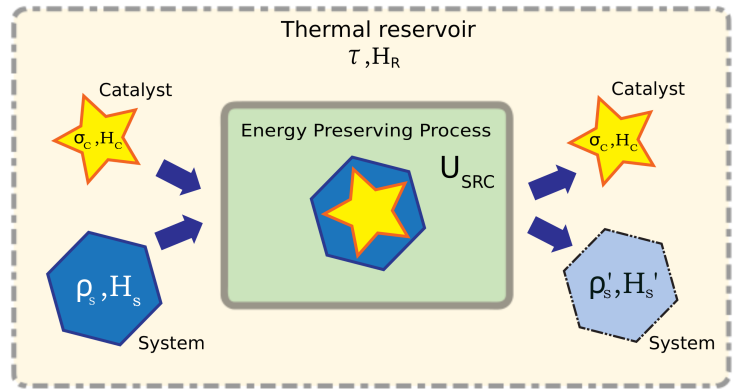}
	\caption{In the microregime, when can a state $\rho_S$ with Hamiltonian $H_S$ be transformed to 
	a state $\rho_S'$ and Hamiltonian $H_S'$? In order to do so, one can couple the system to a heat bath $\gibbs=e^{-\beta H_R}/Z$ 
with Hamiltonian $H_R$ 
%at some inverse temperature $\beta$ with partition function $Z$, 
use any devices as long as they are returned back in their original state 
(thus we may think of them as a catalyst - $\sigma$) and we are allowed to perform any action as long as we preserve the 
overall energy (see below for a more detailed description of these operations, which we call {\it catalytic thermal operations}). 
Loosely speaking, our second law says that $\rho_S$ can transition to $\rho_S'$ if and only if $\rho_S'$ is closer to the 
thermal state $\gibbs$ of the system at inverse temperature $\beta$ 
with respect to \emph{all} R{\'e}nyi-divergences. In the thermodynamic limit, 
all these quantities converge so that we recover the usual second law.}
\label{fig:setting}
\end{figure}

\begin{figure}
  \centering
  \includegraphics[width=15cm]{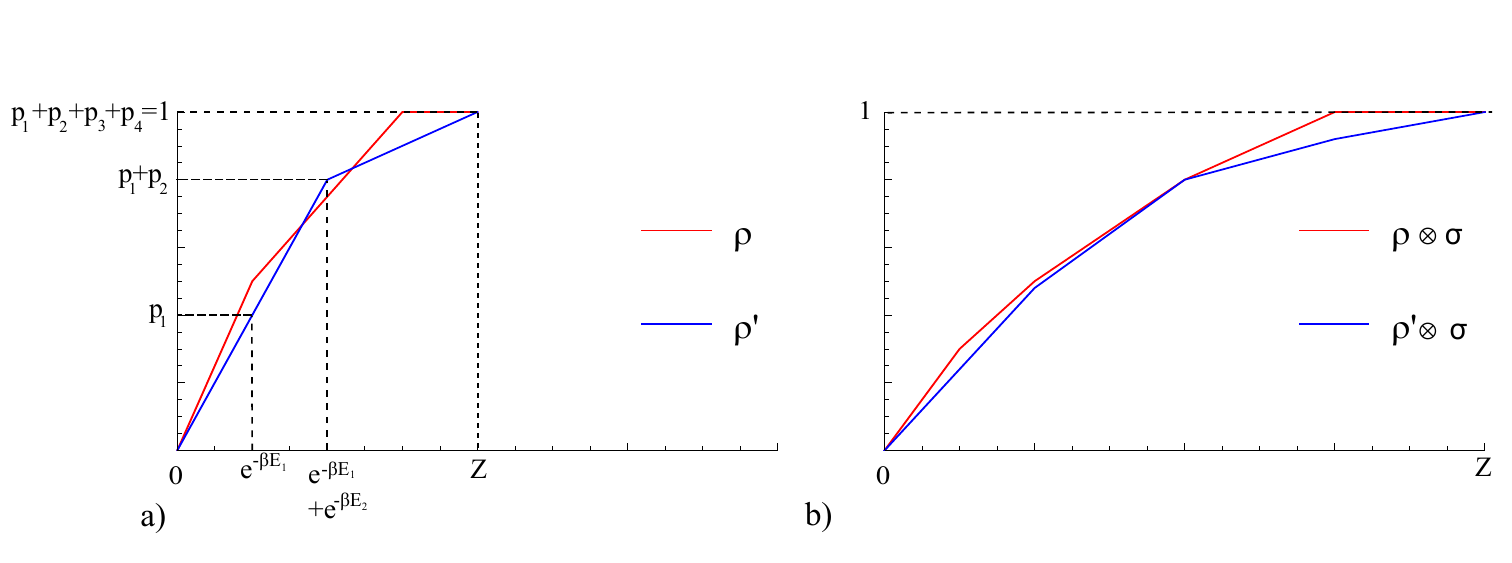}
\caption{The thermo-majorisation criteria is as follows: Consider probabilities $p(E,g)$ of the initial system $\initial$ to be in the $g$'th state of energy $E$.  
Now let us
put $ p(E,g)\e^{\beta E}$ in decreasing order $p(E_1,g_1)\e^{\beta E_1}  \geq  p(E_2,g_2)\e^{\beta E_2}\geq p(E_3,g_3)\e^{\beta E_3} ...$ -- we say
that the eigenvalues are $\beta$-ordered.  We can do the same for system $\sigma$ 
i.e. $\e^{\beta E_1} q(E_1,g_1) \geq \e^{\beta E_2} q(E_2,g_2)\geq\e^{\beta E_3} q(E_3,g_3)...$.  Then the condition which determines whether
 we can transform $\initial$ into $\final$ is depicted in the above figure.  Namely, for any state, we construct a curve with points $k$ given
 by $\{\sum e^{-\beta E_i},\sum_i^k p_i \}$. Then a thermodynamical transition from $\initial$ to $\final$ is possible if and only if, the curve of $\initial$ 
lies above the curve of $\final$.  One can make a previously impossible transition possible by adding work in the form of the pure state $\psi_W$
which will scale each point by an amount $e^{-\beta W}$ horizontally~\cite{HO-limitations}. The above
criteria should not be applied to the system of interest alone, but the system plus any additional resources which are used to enable the transition. In the example above,
figure (a) we cannot transform $\rho$ into $\rho'$ without performing work (or visa-versa). However, by using a resource $\sigma$ which is returned in it's initial state, we see in (b) that 
$\rho\otimes\sigma$ thermo-majorizes $\rho'\otimes\sigma$ so that we can make a thermodynamical transition from $\rho$ to $\rho'$ without adding additional work. }
%We can define $G(\rho)$ to be the new probability distribution defined by probabilities $p(E_1,g_1)\e^{\beta E_1}$ with multiplicity proportional to $\e^{-\beta E_1}$ }
\label{fig:thermomaj}
\end{figure}

\subsubsection*{A family of second laws}

Here we consider all possible cyclic thermodynamical processes, and show 
that transition laws are affected by using ancillary systems which are returned back to their initial state. 
Rather than a single free energy which determines which transitions are possible, we find
necessary 
%\jono
and sufficient conditions 
for thermodynamic transitions 
which form, not just one, but a family of second laws.  We define the free energies
\be
F_\alpha(\rho,\gibbs):=kTD_\alpha(\initial\|\gibbs)-kT\log{Z},
\label{eq:genfree}
\ee
with the R{\'e}nyi divergences $D_\alpha(\initial\|\gibbs)$ defined as 
\be
D_\alpha(\initial\|\gibbs)= \frac{\sgn(\alpha)}{\alpha-1} \log \sum_i p_i^\alpha q_i^{1-\alpha},
\label{eq:renyidivergence}
\ee
where $p_i$, $q_i$ are the eigenvalues of $\initial$ and the thermal state of the system is $\gibbs=\sum_{i,g} e^{-\beta H_S}/Z$ with Hamiltonian $H_S$, partition function
$Z=\sum_{i,g} e^{-\beta E_i}$  and $\beta = 1/T$ with $T$ the temperature of the surrounding heat bath.
%To see this, let us recall that the second laws are monotonicity of $F_\alpha(\initial,H)$ for all $\alpha\in (-\infty,\infty)$. However, 

We can then state quantum second laws, and ones that hold for states block diagonal in the 
energy basis. In the latter case, we find the following set of second laws:

\smallskip
%\begin{center}
\noindent
{\it In the presence of a single heat bath, the free energies
$F_\alpha(\initial,\gibbs)$ 
%between the state $\initial$ of the system 
%and the equilibrium state $\gibbs$
do not increase for $\alpha\geq 0$
% all $\alpha\in (-\infty,\infty)$. 
That is, $\forall \alpha\geq 0$, $F_{\alpha}(\initial,\gibbs) \geq F_{\alpha}(\final,\gibbs)$ where 
$\initial$ and $\final$ are the initial and final state. Moreover, if $F_{\alpha}(\initial,\gibbs) \geq F_{\alpha}(\final,\gibbs)$ holds $\forall\alpha \geq 0$, then there exists a catalytic thermal operation that transforms $\rho$ to $\rho'$.} 
%\end{center}

\smallskip
\noindent
We say that $F_\alpha(\initial,\gibbs)$ are {\it monotones} -- the system always gets closer to the thermal state, 
thus the function always decreases. 
By including an auxiliary system as described in \cite{HO-limitations}, the above statement of the second law is equivalent to
the case where one changes the Hamiltonian of the system, in which case, one could write $F_{\alpha}(\initial,\gibbs) \geq F_{\alpha}(\final,\gibbs')$
where the initial Hamiltonian $H_S$ has been changed via external control to the final Hamiltonian $H_S'$, with $\gibbs$ and $\gibbs'$ being the respective thermal states. This is described in Section 
\ref{ss:changingham} of the \supl. Note that in fact $F_{\alpha}(\initial,\gibbs)$ is a monotone for all $\alpha\in (-\infty,\infty)$
but since we are allowed to borrow a pure state and return it in a state arbitrarily close to its initial state, only $\alpha\geq 0$ are relevant, as can be seen by noting that if any of the probabilities $p_i$ in Equation \eqref{eq:renyidivergence} are zero, then for $\alpha<0$, $F_\alpha(\initial,\gibbs)$ diverges and will thus always be monotonic.

These set of limitations are less stringent than thermo-majorisation. 
%They are also sufficient for states which are block diagonal in the energy-eigenbasis. 
Not only do these second laws provide limitations, but they are also sufficient -- 
whenever the free energies of one state are all greater than for the other state,
one can transform the one into the other one. We prove this in Sections \ref{sec:secondlaws} and 
\ref{sec:adultHamil} of the \supl. Note that the monotonicity of \eqref{eq:genfree} establishes a continuous family of 
conditions, one for each value of $\alpha$.
%, and that we have been able to derive conditions without the usual error terms associated with such single-shot quantities\cite{Rennerphd}. 
However, in the case of larger systems, one can perform a quick check, namely:
we find that for any distribution $p$ we can construct \emph{smoothed} distributions that are very close to $p$, and in terms of these smoothed distributions, check two conditions in terms of the two free energies for $\alpha=0,\infty$ found in~\cite{HO-limitations}. If such conditions are satisfied on the smoothed distribution, it implies that the infinite set of conditions is satisfied as well.

For $\alpha\rightarrow 1$, 
%$D_\alpha(\initial\|\gibbs)$ 
$\alfree{\initial}$ is equal to the ordinary Helmholtz free energy $F(\initial)$,
% and the free energy of the thermal state $F(\gibbs)$. 
hence our conditions include the ordinary second law (combined with energy conservation), 
and we thus see that it is merely one of many constraints on thermodynamical state transitions.
%\steph{Strictly speaking not highly correlated below means iid. Should we make this more precise?} \jono{maybe, but we have mentioned this in previous papers. If we do, perhaps in v2}
In the macroscopic regime, and for systems which are not highly correlated, $F_\alpha(\initial,\gibbs)\approx F_1(\initial,\gibbs)$ for all $\alpha$
which explains why the single constraint given by the usual second law is more or less adequate in this limit. 
For $\alpha=0$, $F_0(\initial,\gibbs)=\fmin(\initial)$, which we previously found
to quantify the maximal amount of work extractable from a system in contact with a reservoir under all thermal operations~\cite{HO-limitations} or
in a model of alternating adiabatic and isothermal operations~\cite{aaberg-singleshot}. In the case of trivial Hamiltonians, this quantity is 
a strengthened version of that found previously in \cite{dahlsten2011inadequacy}
in that we have tight necessary and sufficient conditions. 
Generally, we find that although generic state transitions are affected by catalysts, the results in \cite{HO-limitations} on distillable work 
%by going from $\initial$ to $\gibbs$ 
is not. Likewise,
the reverse process, the so-called {\it work of formation} $\fmax(\initial)$~\cite{HO-limitations} in creating $\rho$ corresponds to $\alpha\rightarrow \infty$. We thus see that
the Helmholtz free energy and two free energies proposed in~\cite{HO-limitations} are special cases of our family of conditions and they 
hold even in the presence of catalysts.

However in other cases, we find that one can distill more work than would be obtainable without considering an ancillary system which is used as a catalyst. 
As a simple application of our results, we find that by using an ancillary system, one can erase or reset a memory register at a lower work cost than previously known. 
In particular, this can occur when
resetting a memory register to a pure state, while retaining
correlations with a reference system~\cite{del2011thermodynamic,faist2012quantitative}. Classically, resetting a memory requires work, but 
the authors of~\cite{del2011thermodynamic} found that due to entanglement, there were cases where a memory could be reset at a cost
of a negative amount of work (i.e. work could actually be extracted while the memory was reset). Here, we find that even more work
can be extracted during a memory reset.
In general, we find 
that more work can be extracted in a cyclic process, and in Section \ref{ss:erasure} of the \supl, we derive the optimal amount of extractable work in such a case, providing an operational interpretation for the difference of R{\'e}nyi entropies, and an interpretation for when this quantity is negative.

For states that are not diagonal in the energy basis, we provide generalization 
of the above limitations in terms of quantum alpha-free energies in Sec. \ref{sec:secondlaws} of \supl. 
These form a family of fully quantum second laws, which are necessary, but not sufficient conditions for state transformations. 
Due to the non-commutative nature of the state of the system and
the thermal state, our new free energies have a more complicated form and are based on quantum 
Renyi divergences \cite{HiaiMPB2010-f-divergences,Muller-LennertDSFT2013-Renyi,WildeWY2013-strong-converse} (see also  \cite{JaksicOPP2012-entropy}).
Defining two quantum versions of $F_\alpha(\rho,\gibbs)$
\be
\qalfreesimple_\alpha(\rho,\gibbs)=kT \frac{{\rm sgn}(\alpha)}{\alpha-1} \log\tr \rho^\alpha \gibbs^{1-\alpha}-kT\log{Z}
\ee
and 
\be
\qalfree_\alpha(\rho,\gibbs)=kT\frac{1}{\alpha-1}\log 
\left(\tr( \gibbs^{\frac{1-\alpha}{2\alpha}} \rho \gibbs^{\frac{1-\alpha}{2\alpha}})^\alpha \right) -kT\log{Z}
\ee
We are able to find
\begin{itemize}
\item{\bf Quantum second laws:} {\it A transition from $\initial$ to $\final$ is possible, only if
\be
\qalfree_\alpha(\rho,\gibbs) \geq \qalfree_\alpha(\final,\gibbs)
\nonumber
\ee
for $\alpha\geq \frac12$ and 
\be
\qalfree_\alpha(\gibbs , \rho) \geq \qalfree_\alpha(\gibbs,\rho)
%\label{eq:cond2_second_laws}
\nonumber
\ee
for  $\frac12 \leq \alpha\leq 1$ and 
\be
\qalfreesimple_\alpha(\initial, \gibbs) \geq \qalfreesimple_\alpha(\final,\gibbs)
\nonumber
\ee
for  $0 \leq \alpha\leq 2$.
}
\end{itemize}
where once again the above laws include transitions where the Hamiltonian changes by making use of an ancillary system as is done in 
\cite{HO-limitations}.

\subsubsection*{Work distance}

Given the monotonicity of $F_\alpha(\initial,\gibbs)$ we may easily compute the maximum amount of deterministic work which can be extracted when going from 
a system in state $\initial$ to one in state $\final$. Namely, in \cite{HO-limitations} we introduced the notion of a work bit, or {\it wit}, 
which starts off in state $|0\rangle$ and gets raised or lowered to a state $|W\rangle$ with energy $W$. This corresponds to extracting
an amount of work $W$ if $W$ is positive, or performing work if $W$ is negative.  Since our second laws concern general state transformations, 
they can be applied to the case of deterministic work extraction or to the case of extracting a fluctuating amount of work, but here we apply our second
laws to the former case.

From them, we know that a transition is possible if and only if
\begin{align}
F_\alpha(\initial \otimes\proj{0},\gibbs) \geq F_\alpha(\final\otimes\proj{W},\gibbs) \s \forall \alpha \geq 0
\end{align}
which implies (see Section \ref{sec:inexact}) of the \supl), that $W= \trumpd(\rho\succ\rho')$ is achievable, where
\begin{align}
\trumpd(\rho\succ\final):= kT\inf_{\alpha} [F_\alpha(\rho,\gibbs)-F_\alpha(\final,\gibbs)].
\label{eq:trumpd}
\end{align}
We thus see that the $F_\alpha(\rho,\gibbs)$ are very much like free energies, not only in the sense that they are monotones, but 
also in the sense that the amount of work is given by the function's difference between the initial and final state (albeit for the minimal one).
%\jono{we should say something about the approximate case somewhere, e.g. $W^\epsilon$}
The quantity on the right hand side of Equation \eqref{eq:trumpd} can also be thought of as a distance measure between states, as was done with the thermo-majorisation criteria in \cite{egloff2012laws}  
 and we will henceforth refer to it as the \textit{work distance} from $\initial$ to $\final$.
%The quantity on the right hand side of Equation \eqref{eq:trumpd} can also be thought of as a distance measure between states, as was done with the thermo-majorisation criteria in \cite{egloff2012laws}  
 %and we will henceforth refer to it as the \textit{work distance} from $\initial$ to $\final$.
%$\trumpd(\initial\succ\final):=kT\ln(2)\inf_\alpha[D_\alpha(\initial\|\gibbs)-D_\alpha(\final\|\gibbs)]$.

\subsubsection*{Approximately cyclic processes}
As in the context of entanglement theory~\cite{BBPS1996,BBPSSW1996,thermo-ent2002,DevetakHW2005-resource}, 
we consider thermodynamics as a resource theory \cite{Beth-thermo,uniqueinfo,HO-limitations,thermoiid},  
where we are allowed to implement a class of operations, and then quantify 
the resources which cannot be created under the class of operations. For thermodynamics, various classes of operations have been considered in the micro-regime
\cite{Beth-thermo,Alicki79,allahverdyan2000extraction,feldmann2006lubrication,linden2010small,HO-limitations,aaberg-singleshot,gemmer2009quantum,hovhannisyan2013role,AlickiFannes2012-battery,GalbwaserAK2013-work}. 
In particular, we consider {\it thermal operations} \cite{HO-limitations,thermoiid,Beth-thermo} where we allow the systems
of interest $\initial$ to be coupled to a thermal reservoir $\rhor$ in the thermal state at temperature $T$,
 and we allow arbitrary unitaries between the system, working body and reservoir 
which conserve energy. Energy conservation is important, because we need to account for all sources of energy which might get added to our system. 
Operations which pump energy into or from the system can be incorporated into the paradigm by including the source of energy as 
an ancillary system. This paradigm is equivalent to ones in which we allow interaction
Hamiltonians rather than unitaries, or where we allow for a Hamiltonian which changes with time, provided that all ancillas are carefully accounted for~\cite{thermoiid}. 

In this article we consider a natural additional ingredient, namely, we allow additional resources such as a heat engine and working body (or ancilla system) $C$ in state $\rhoinc$ which must be returned in its initial form. Considering such resources are crucial, since the experimenter who is trying to extract work or otherwise manipulate a system, should be allowed as much ingenuity as possible. However, to ensure that work is not being added to the system from the heat engine itself, we demand that the thermodynamic process be cyclic, in the sense that the additional resources are returned in state $\rhooutc$, which is approximately equal to its original state $\rhoinc$. In essence, how cyclic a process is can be understood in terms of how good an approximation $\rhoinc$ is to $\rhooutc$.  In the macroscopic case, the notion of how cyclic a process is, has not been deemed important -- it was assumed that it was enough to take the initial and final state of the heat engine as being close to the same macroscopic state. We find that this is not the case, even if one demands that the initial and final states are arbitrarily close. This is increasingly important when we are manipulating microscopic systems, where small differences can have more noticeable effects. We find that
depending on the desired approximation, i.e., depending on how cyclic we demand the process to be, there are several different regimes of second laws.
%\jono{smoothing of Eqn \ref{eq:trumpd}?}

The simplest case is where $\rhoinc=\rhooutc$, that is, the process is perfectly cyclic and the catalyst is restored to its original form. In this setting, we have the second law as stated above. But no real process is perfectly cyclic, and so, it is important to consider the case where  $\rhoinc\approx\rhooutc$. 
This requires us to derive approximate transformation conditions which we expect to also find application in entanglement theory, and are contained in Section \ref{sec:inexact} of the \supl.
We find that the form the second law takes when the process is not perfectly cyclic, depends on 
the degree to which our process is cyclic.
We find three separate regimes
which are quantified by how cyclic the process is, in terms of how close $\rhoinc$ is to $\rhooutc$. 
%as a function of the dimension $d$ of system $C$. 

%The first regime, we will term  \emph{subextensive}, occurs when $\|\rhoinc-\rhooutc\|_1\leq \epsilon/d$ where $\|\cdot\|_1$ is the so-called trace distance. That is, any change in the state of the working body $\rhoinc$ scales more slowly than its size. We also demand that the average energy of \jono{to add}. 
In the first regime, we demand that the change in the working body through a cycle is small, in the sense that $\trumpd(\rhooutc\succ\rhoinc)\leq \epsilon$. 
In other words, any change in the working body could be corrected by applying a small amount of work. In this case, we recover the second laws as stated above. 

The second regime is when the change in the working body has error inversely proportional to the
number of particles it has,  i.e. $\|\rhoinc-\rhooutc\|_1\leq \epsilon/\log N$, where $N$ is the dimension of the catalyst. In this case, we retrieve the
standard second law. The ordinary free energy continues to govern whether a thermodynamical transition is possible, while the R{\'e}nyi-divergences do not. 
%\steph{We presently don't explain this very well further down in the manuscript.}
We thus see that the ordinary second law can arise 
in the macroscopic limit, or if we allow processes which deviate from being cyclic in a manner which is constant per number of particles in the working body. We detail this in Section \ref{sec:extensive_error} of the \supl\ by showing that when the standard second law holds, we can construct a catalyst to enable the transition, while being returned with small error per particle.
%In order to achieve the transitions governed by the ordinary second law, we require a catalyst of a particular form, and this is detailed in Section \ref{sec:extensive_error} of the \supl. \nn{i changed this because it sounds like we have to restrict ourselves to this particular kind of catalyst for the standard second law to hold, instead of showing that there exists a construction.}

Finally, we consider the regime where we simply 
demand that the process is close to cyclic regardless of the size of the ancillary system. i.e. $\|\rhoinc-\rhooutc\|_1\leq \epsilon$. Since $\epsilon$ can be arbitrarily small,
one would imagine that for such an approximately cyclic process one recovers a second law of some sort. Nonetheless, we find that for any $\epsilon$, no matter how small, one can construct a working body, and cyclic
process, such that one can pump heat from a cold reservoir to a hot reservoir, in violation of the Clausius statement of the second law. In fact, we can make arbitrary state transformations by taking the size of the working body to be so large, that work can be extracted from a single heat bath, while barely modifying the 
state of the working body. This is related to a phenomenon in entanglement theory known as 
embezzling~\cite{Hayden-embezzling}. In particular, for any desired approximation $\epsilon$ there exists a dimension $d$ such that the catalyst 
$\rhoinc = \sum_{j=1}^{d} \frac{1}{j}\proj{j}$ allows us to transform any initial state $\initial$ to any final state $\final$ such that $\|\rhoinc-\rhooutc\|_1\leq \epsilon$.

\subsubsection*{Discussion}

The second law is often seen as arising from an experimenter's lack of control over the system of interest. Here we see that this is not the case -- we obtain
our fundamental limitations even in the case where the experimenter can access the microscopic degrees of freedom of the heat bath and couple it in an arbitrary way with the system.
The reason that such fine control does not lead to a violation of the second law is related to the fact that a Maxwell's demon with microscopic control over a system cannot
violate the second law -- a demon which knows the positions and momentums of the particles of a system, must record this information in a memory, which then needs to be reset at the
end of a cyclic process~\cite{Bennett82,Landauer}. For the same reason, an ability to access the degrees of freedom of the heat bath would also require work to perform such a memory resetting step.
Remarkably, although the limitations are derived assuming that one can perform all possible operations, they are achievable using a very limited set of operations -- namely, changing the energy levels
of the system, and putting parts of  the system in thermal contact with the reservoir.
%
%\steph{Not sure whether we want to say this here.. if we had a discussion ''at the end'''then this might be a better place, but I'm not sure where '''the end''is at present :) }
%\jono{I've moved it (purified distance) to the discussion on how to define approximate catalysis. At the moment, it's in the l1 norm section, but not sure that's right}

%The second law is often regarded as being statistical in nature, which can be violated in particular instances but not on average. But here, using tools from single-shot information theory, we have seen that it can be applied to single systems.
We have derived a family of fundamental limitations on thermodynamical state transformations for both quasi-classical states, and fully quantum states.  
%The criteria is governed by the probabilities of the systems energy levels, and are necessary and sufficient between states which are block-diagonal in the energy basis or for maximal work extraction or work of formation. However, as noted in \cite{HO-limitations,thermoiid}, there are additional restrictions which place constraints on coherences between those energy levels. These restrictions can be considered as additional second laws. We have derived them for two level systems in \cite{cwiklinski2013quantum} (in a setup without catalysis), and as necessary conditions for higher dimension, but the full set of criteria is an interesting open question.
Since these limitations are given in terms of generalisations of the free energy, they can be thought of as second laws, 
combined with the first law, i.e. energy conservation. For an isolated system, one could take the second law to be the increase 
in the Renyi entropies, which holds if the allowed class of operations are mixtures of unitaries. This can be thought of as resulting from 
a coarse-graining or lack of information about the full dynamics, but we do not consider in detail this approach here. 
%The third law of thermodynamics appears to have disappeared, but in fact, it is merely a special case of the second laws.

Thinking of thermodynamics as a resource theory, allows us to re-formulate the laws of thermodynamics in a very natural way.  In essence, the zeroeth law defines the set of allowed free states (the thermal state), the first law the set of allowed operations (namely, energy conservation), and the second law is derived from these conditions to specify the set of allowed transitions. 
This has the advantage of separating out laws of fundamental physics e.g. that evolution be unitary and energy conserving, from those of thermodynamics. 
  
To state the zeroeth and first law of thermodynamics more explicitly, let us define the set of {\it catalytic thermal operations} introduced here.

\begin{itemize}
\item {\bf Catalytic thermal operations:} {\it Given a system in initial state $\rho_S$ with respect to Hamiltonian $H_S$, we can 
    \begin{enumerate}
    \item borrow a catalyst system in state $\sigma$ w.r.t. some Hamiltonian $H_C$ and returning it in a state at most $\epsilon$-close to $\sigma$. What one means by {\it close} will determine which family of second laws apply, and in the subsequent statement we invoke the most stringent conditions, namely closeness in terms of work distance,    
    \item add an arbitrary number of copies of a state $\tau$ with any Hamiltonian $H_R$,
    \item perform any unitary operation $U$ such that $[U,H_S+H_R+H_C]=0$ ({\bf The First law}), and
    \item perform partial trace over systems $R$ and $C$.
    \end{enumerate}}
\end{itemize}
Note that the demand that the unitary commutes with the total Hamiltonian implies that energy is conserved. Conversely, it is easy to see that the only process that conserves energy for an aribtrary state must commute with the Hamiltonian.  We emphasize that the equivalence of this paradigm to others has already been addressed in~\cite{thermoiid}. We here view the first law, not as something which is a consequence of thermodynamics, but rather, one which defines what thermodynamics is.
\begin{itemize}
	\item{\bf Zeroeth law:} {\it Let $\mathcal{S} = \{(\tau,H_R)\mid \tau = \gibbs^R = e^{-\beta H_R}/Z {\rm\ is\ a\ thermal\ state\ for\ } H_R\}$. 
		If $(\tau,H_R) \notin \mathcal{S}$, then arbitrary state transitions are possible and the theory becomes trivial.}
%\item{\bf Second laws:} {\it For transitions between initial and final states which are diagonal w.r.t. $H_S$, a transition is possible if and only if $F_\alpha(\initial,H)$ do not increase for all $\alpha\geq 0$.}
%$\alpha\in (-\infty,\infty)$}
%  \begin{itemize}
%  \item{\bf Corollary (third laws):} {\it $F_\alpha(\initial,H)$ do not increase for all $\alpha \leq 0$. In particular, this implies...  }
%   \end{itemize}
\end{itemize}

We thus see that the ordinary zeroeth law
%, which usually helps one to define temperature, 
 is replaced by the following fact: if our class of operations include energy conserving operations and the ability to add an arbitrary number of copies of some state $\tau$ corresponding to a Hamiltonian $H_R$, then the only pair of ($\tau,~H_R$) which does not make for a trivial theory (in the sense that all state transformations would be possible), is if $\tau=\gibbs$, where $\gibbs$ is the thermal state \cite{HO-limitations} with respect to $H_R$. 
This is related to the fact that the thermal state is the only state that is {\it completely passive}, i.e. one cannot draw work from arbitary number of copies of the state \cite{pusz_passive_1978}. 
%
%At first glance, this version of the zeroeth law seems rather different than other formulations, in particular one might question whether it would also allow for the definition
%of temperature. 
%Note that the definition of $\mathcal{S}$ defines which $(\tau,H_R)$ are to be used as free resources in catalytic thermal operations, in the sense that also no work can be gained from them alone.
Taking a resource perspective, we now see we have
an equivalence relation on $\mathcal{S}$ which defines for us the notion of temperature. More precisely, we will call $(\tau_1,H_{R_1})$ and $(\tau_2,H_{R_2})$ equivalent resources if and only
if no work can be gained from $\tau_1^{\otimes \ell_1} \otimes \tau_2^{\otimes \ell_2}$ for arbitrary number of copies $\ell_1$ and $\ell_2$. Thus the ordinary Zeroeth law, is replaced here by a unique condition which tells us what class of free states make thermodynamics non trivial.

%From the above, we have that $(\tau,H_R)$ is equivalent to itself, that is, the relation is  reflexive, and from the resource theory framework~\cite{thermoiid} one can furthermore see that it is also transitive. In other words, we can define a set of equivalence  classes $[(\tau,H_R)] := \{(\tau',H_R') \in \mathcal{S}\mid (\tau,H_R) \equiv (\tau',H_R')\}$. One can now see that these classes depend only on a single parameter, defining the temperature $T$. \steph{What do you think? I was debeting whether we want to take the transitive part from the paper or postulate, because its a bit circular as that goes over what is known before, but is not independently derived (without knowing the meaning of $T$)? In any case, I do think that our present law basically says the relation is reflexive only. Whereas if we would postulate transitivity, than the above version would be a special case.}

Here, our derivation of the second laws is information theoretic in nature, requiring none of the assumptions usually required for 
the second law to hold. This includes ergodicity, mixing, coarse-graining of degrees of freedom and lack of control over the system. 
%One might think that without additional assumptions, the generality of our approach would preclude deriving the third law of thermodynamics,  which is generally taken to be about perfect crystals or systems with specific requirements on the degeneracy of the ground state. Yet even here, we have something which has a flavour of the third law -- namely, that reaching absolute zero (or more generally creating any perfect pure state) is not possible. Indeed monotonicity of  $F_\alpha(\rho,\gibbs)$ for $\alpha< 0$ give exactly this -- since  $F_\alpha(\initial,\gibbs)$ diverges for a pure state and $\alpha <0$,  we obtain that it requires infinite work to create one perfectly if one didn't previously have a pure state.  What is more, this is quantitative, since we can ask how much work is required to get close to a pure state, if one doesn't already have a work system in a pure state, as discussed in Section \ref{sec:third} of the \supl.
Monotonicity of  $F_\alpha(\rho,\gibbs)$  thus provides a powerful tool to determine what sorts of thermodynamical transitions
are possible on the quantum scale, or equally well, for systems which have long range interactions.  
From a foundational perspective, the laws of thermodynamics take on a very simple and elegant form-- a class of operations, and a set of statistical distances to the free state $\gibbs$ which can never decrease. One hopes that such information theoretic laws
can be used to discover a broad range of thermodynamical phenomena at the quantum level.

\pagebreak

\section{Supplementary Information}

In what follows
we present the full technical details of all our findings. Section \ref{preliminaria} provides the definition of the basic quantities used in our work, namely the R{\'e}nyi entropies and divergences and useful relations between them. In Section \ref{sec:trivialHamilExact}, we briefly summarize the theory of majorization and trumping, then introduce the connection between R{\'e}nyi divergences and trumping, and show how to reduce the number of trumping relations required. We then show that this provides the basis of catalytic transformations in the simple setting where the Hamiltonian is trivial. 

In Section \ref{sec:TO} and \ref{sec:adultHamil}, we consider catalytic operations in the context of full thermodynamics with general Hamiltonians. Our main technical result is Theorem\ref{th:catalysis}, which generalises trumping relations from the bistocastic case, to the case of arbitrary channels which take a subset of states to some other subset of states. We expect this to be of independent interest with applicability in a broad range of contexts. This provides the mathematical tools to formulate necessary and sufficient conditions for thermodynamic state transformations for states block diagonal in the energy eigenbasis.  These are detailed in Section \ref{sec:secondlaws}. Here, we also observe the elimination of conditions for ranges of $\alpha$, when certain states are allowed to be used and returned with close fidelity. In particular, borrowing a pure state allows us to eliminate conditions where $\alpha<0$, which significantly simplifies the conditions on state transformations. We also apply our second laws to the erasure of states while preserving a memory in Subsection \ref{ss:erasure}. In Subsection \ref{ss:qsecondlaw} we present a family of fully quantum second laws.

Section \ref{sec:inexact} makes an extensive presentation on the case where inexact catalysis is allowed, and we show that for different regimes of closeness between the catalyst's initial and final state, we obtain different second laws.  The different regimes include exact catalysis, returning the catalyst such that a small amount of work is required to bring it to it's original state, returning the catalyst with small error per particle, returning it such that only a small amount of work on average is required to bring it back to its original state, and returning it with arbitrary good fidelity, where we find that all state transformations are possible.
%Apart from second laws, we observe that in the framework of thermal operations, the other laws of thermodynamics are incorporated as well. We explain this in Appendix \ref{sec:third}.
In \supl\ \ref{proofofprops} we show how for larger systems, one can often check just two quantities to determine if a state transformation is possible. In \ref{sec:comparison}, we discuss in more detail how our paradigm incorporates changing Hamiltonians. Since we have conditions for state transformations using ancillary systems, 
we are in a position to prove the optimality of the procedure used in \cite{HO-limitations}. The work storage system is also an ancillary system, and we further discuss the equivalence
of the work bit, to other batteries.

\appendix

\section{Preliminaria: R{\'e}nyi relative entropies and their properties}
\label{preliminaria}
\subsection{R{\'e}nyi divergence}
%??\mh{add citations}
Consider probability distributions $p=\lbrace p_1, p_2, ..., p_n\rbrace$ and $q=\lbrace q_1, q_2, ..., q_n\rbrace$. The R{\'e}nyi divergences are defined for $\alpha \in [-\infty, \infty]$ as follows 
\be
D_\alpha(p \| q)= \frac{{\rm sgn}(\alpha)}{\alpha-1} \log\sum_i^n p_i^\alpha q_i^{1-\alpha},
\ee
where 
\beq
{\rm sgn}(\alpha)=\left \{\bea{ll}
1 &(\alpha\geq 0); \\
-1 & (\alpha < 0).
\eea\right.
\label{hfunction}
\eeq
%\steph{I replaced h with sgn because I think of the binary entropy when reading h}
We use the conventions $\frac00=0$  and $\frac{a}{0}=\infty$ for $a>0$. The cases $\alpha=0,1,\infty,-\infty$ are defined via the suitable limit, namely 
\beq
&D_0(p \| q)=\lim_{\alpha\to 0^+}D_\alpha(p \| q)=-\log\sum_{i:p_i\not=0}^n q_i,& \quad D_1(p \| q)=\lim_{\alpha\to 1}D_\alpha(p \| q)=\sum_i^n p_i (\log p_i - \log q_i),  \nonumber \\
& D_\infty(p \| q) = \lim_{\alpha\to \infty}D_\alpha(p \| q)=\log \max_i \frac{p_i}{q_i},& \quad D_{-\infty}(p \| q) =- \lim_{\alpha\to -\infty}D_\alpha(p \| q)=D_\infty( q \| p).
\eeq
Note that for $\alpha\rightarrow\infty$, $D_\alpha (p\|q)$ can also be expressed as:
\beq
\label{infdivergence}
D_\infty (p\|q) = \log \min \left\lbrace \lambda| \forall i, \lambda \geq \frac{p_i}{q_i} \right\rbrace
\eeq
Also, there exists a useful relation between two divergences for $\alpha\not\in\{0,1\}$:
\be
\alpha {\rm sgn}(1 - \alpha) D_{1-\alpha}(p || q)= (1-\alpha) {\rm sgn}(\alpha) D_\alpha(q || p).
\ee
\label{eq:symmetry}
%Note that for $\alpha\not\in\{0,1\}$ we have 
%\be
%\alpha D_{1-\alpha}(p|q)= (1-\alpha) D_\alpha(q|p).
%\label{eq:symmetry}
%\ee
%In particular, the relative entropies with $\alpha<0$ are nonpositive. However they are also 
%redundant, as by the above equality all the entropies with $\alpha\leq\frac12$ 
%can be expressed by the entropies with $\alpha \geq \frac12$ with exchanged arguments. 
%For $\alpha\not =0$ we get it directly from the above equality,
%for $\alpha=0$ by taking the limit $\alpha\to 1^+$ on both sides, and for $\alpha=-\infty$ by 
%\be
%D_{-\infty}(p|q)=\lim_{\alpha\to \infty} \frac{1-\alpha}{\alpha} D_\alpha(q|p).
%\label{eq:d0fromd1}
%\ee
For some properties of R{\'e}nyi divergence, the reader can refer to \cite{timerven2010, timerven2012}. Note that discussions in other literatures define R{\'e}nyi divergence for only non-negative alphas. 
However, the relative entropy $D_\alpha$ we define satisfies the data processing inequality for all $\alpha\in[-\infty,\infty]$:
\be
D_\alpha(\Lambda(p) \|\Lambda(q))\leq D_\alpha(p \| q),
\label{eq:data-proc}
\ee
where $\Lambda$ is a stochastic map. 
%\steph{Add reference} \mh{Asked Milan, should provide one.}
%\steph{Nelly, please add here the definitions of the smoothed versions (remember subnormalization) and state the lemmas we prove in the appendix relating them)}
%% michal bi-stochastic -> stochastic
%\nn{noted and added.}
%The smoothed R{\'e}nyi divergences are defined by smoothing over the $\epsilon$-ball of subnormalized states, where maximization/minimization is taken depending on $\alpha$. Formally, 
%\beq
%D^\epsilon_\alpha(p\|q)=\left \{\bea{ll}
%\displaystyle\min_{\tilde{p}}~D_\alpha(\tilde{p}\|q) & (\alpha < 0);\\
%\displaystyle\max_{\tilde{p}}~D_\alpha(\tilde{p}\|q) & (0\leq\alpha\leq 1); \\
%\displaystyle\min_{\tilde{p}}~D_\alpha(\tilde{p}\|q) & (\alpha > 1).
%\eea\right.
%\eeq
%where optimization occurs over states that are $\epsilon$-close to $p$ in terms of trace distance.Mathematically, $\tilde{p}\in\mathcal{B}^\epsilon(p)$, for $\mathcal{B}^\epsilon(p) = \lbrace \tilde{p}: \half\sum_i |p_i - \tilde{p}_i| \leq \epsilon \rbrace$.
%\nn{i think this is how we want to smooth for -ve alpha, but should check again.} 
For $\alpha\in [0,\infty]$ the Renyi divergence is  nondecreasing in $\alpha$:
\be
D_\alpha(p|q)\leq D_{\gamma}(p|q)
\ee
for $\alpha\leq \gamma$ (see Theorem 3 of  Ref. \cite{timerven2010}).  

\subsection{R{\'e}nyi entropy}
The R{\'e}nyi entropies are defined for $\alpha\in R\setminus\{0,1\}$ as 
\be
H_\alpha(p)=\frac{{\rm sgn}(\alpha)}{1-\alpha} \log \sum_{i=1}^n p_i^\alpha, 
\ee
where ${\rm sgn}(\alpha)$ has been defined in \eqref{hfunction}. Again, for $\alpha \in \{-\infty,0,1,\infty\}$ we define $H_\alpha  (p)$ by taking limits. Explicitly we have 
\be
H_0(p)=\log \rank(p),\quad H_1(p)=-\sum_{i=1}^n p_i \log p_i,\quad  H_{\infty}(p)= -\log p_{\max}\quad 
H_{-\infty}(p)= \log p_{\min}
\ee 
where $\rank(p)$ is the number of nonzero elements of $p$, and  $p_{\max},~p_{\min}$ are the maximal and minimal 
element of $p$, respectively.  
%For $\alpha\leq 0$ the Renyi entropies are negative. 
%it makes some sense, as can be seen by considering $\alpha=-\infty$. 
%This entropy is equal to $-\infty$ if the smallest eigenvalue is $0$ and some negative number,
%if it is nonzero. 
The R{\'e}nyi entropies can be recovered from the relative R{\'e}nyi entropies as follows 
\be
\label{interchangeHD}
H_{\alpha}(p) = \sgn(\alpha) \log n - D_\alpha(p \| \eta)\ ,
\ee
with $\eta = \lbrace\frac{1}{n},\frac{1}{n},...,\frac{1}{n}\rbrace$ is the uniform probability distribution.

It is worth noting that the R{\'e}nyi divergences and entropies have generally been defined only for positive alphas. However, for completeness, we have generalized the definitions to negative alphas so that conditions for state transformations can be described fully by these quantities. 

\subsection{Quantum Renyi divergence}
\label{ss:quantumdivergence}

The Renyi divergence can be generalized to the quantum case in many different ways (see e.g. \cite{HiaiMPB2010-f-divergences,LiebFrank2013-Renyi}) due to noncommutativity of quantum states. The most straightforward generalization is the following candidate:
\be
\qrenyisimple_\alpha(\rho||\sigma)=\frac{{\rm sgn}(\alpha)}{\alpha-1} \log\tr \rho^\alpha \sigma^{1-\alpha}.
\ee
For $\alpha=0$ it reduces to the min-relative entropy given by $\qrenyisimple_{\rm min}(\rho||\sigma)=-\log \tr (\Pi_\rho \sigma)$, where 
$\pi_\rho$ is the projector onto the support of $\rho$, and for $\alpha\rightarrow1$ it reduces to the standard quantum relative entropy:
\be
\lim_{\alpha\rightarrow1^+} \qrenyisimple_\alpha(\rho||\sigma)= S(\rho||\sigma)=\tr(\rho \log \rho - \rho \log \sigma).
\label{eq:relent}
\ee
As proved in \cite{marcothesis}, this version of Renyi divergence is known to be monotonic under quantum operations for $\alpha\in[0,2]$,
i.e. for any completely positive map $\Lambda$ and states $\rho$, $\sigma$ we have 
\be
\qrenyisimple_\alpha(\Lambda(\rho)||\Lambda(\sigma))\leq \qrenyisimple(\rho||\sigma),
\ee
for $0\leq\alpha\leq 2$. 

In \cite{Muller-LennertDSFT2013-Renyi,WildeWY2013-strong-converse} (see also  \cite{JaksicOPP2012-entropy})
another version of the quantum divergence was introduced for $\alpha\in(0,\infty]$:
\be
\qrenyi_\alpha(\rho||\sigma)=\frac{1}{\alpha-1}\log 
\left(\tr( \sigma^{\frac{1-\alpha}{2\alpha}} \rho \sigma^{\frac{1-\alpha}{2\alpha}})^\alpha \right) 
\ee
The cases $\alpha=1$ and $\alpha=\infty$ are obtained by limits so that 
$\qrenyi_1(\rho||\sigma)$  is the standard relative entropy \eqref{eq:relent} 
and $\qrenyi_\infty(\rho||\sigma)= \log ||\sigma^{- \frac12}\rho \sigma^{- \frac12}||_\infty$ 
where $||\cdot||_\infty$ is the operator norm. 
Recently it was proven in \cite{LiebFrank2013-Renyi} (see also \cite{Beigi2013-Renyi}) that this entropy is monotonic under quantum completely positive trace preserving maps  for $\alpha\geq 1/2$, namely for any completely positive map $\Lambda$  and states $\rho$, $\sigma$ we have 
\be
\qrenyi_\alpha(\Lambda(\rho)||\Lambda(\sigma))\leq \qrenyi(\rho||\sigma),
\label{eq:qmonotonicity}
\ee
for $\alpha\geq 1/2$. 

Note, that if $\rho$ and $\sigma$ commute, then both types of quantum Renyi divergences reduce to the classical version. 

For both of these quantities, we can further define a Renyi-skew-divergence generalising the skew divergence of \cite{audenaert-skew}. I.e.
\be
\qrenyi_\alpha^s(\rho||\sigma):=\qrenyi_\alpha^s(\rho||\gamma) 
\label{eq:skew}
\ee
with $\gamma=s\rho+(1-s)\sigma$ and $s\in[0,1]$. The definition is similar for $\qrenyisimple$, and they will both turn out to be a monotone under thermal operations when $\sigma$ is the thermal state.

\subsection{Majorization and Schur convexity}
There is a partial order between probability distributions called majorization,
which is defined for arbitrary vectors $x,y \in R^+_k$. We say that $x=(x_1,\ldots x_k)$  majorizes $y=(y_1, \ldots y_k)$ 
if for all $l=1,\ldots k$ 
\be
\sum_{i=1}^l x_i^\downarrow \geq \sum_{i=1}^l y_i^\downarrow , \quad \text{and} \quad  
\sum_{i=1}^k x_i=\sum_{i=1}^k y_i,
\ee 
where $x^\downarrow$ is a vector obtained by arranging the components of $x$ in decreasing order:
$x^\downarrow = (x^\downarrow_1, \ldots, x^\downarrow_k)$ where  $x^\downarrow_1\geq  \ldots \geq  x^\downarrow_k$. We write 
\be
x\succ y
\ee
to indicate that $x$ majorizes $y$.

A function $f$ is called Schur convex if it always preserves the majorization order, 
i.e.  if  $x\succ y $ implies $f(x) \geq f(y)$. If the majorization order is always reversed, the function is called Schur concave. A function is called {\it strictly} Schur convex if $x\succ y $ implies $f(x) > f(y)$ except when $x^\downarrow=y^\downarrow$, then $f(x)=f(y)$. A useful criterion for strict Schur convexity is stated in the following lemma:
\begin{lemma}
A function $f:R^k_+\to R$ of the form $f(x)=\sum_i g(x_i)$  is (strictly) Schur convex/concave,
iff $g(x)$ is (strictly) convex/concave. 
% took it partially from Shi, partially from Roventa...
\end{lemma}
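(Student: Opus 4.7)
The plan is to prove the two directions separately via the Hardy--Littlewood--P\'olya characterization of majorization. Recall that for vectors with a common sum, $x\succ y$ holds if and only if $y$ can be obtained from $x$ by a finite sequence of T-transforms: each T-transform modifies only two coordinates, replacing a pair $(x_i,x_j)$ with $x_i\geq x_j$ by $(x_i-t,\, x_j+t)$ for some $0\leq t\leq (x_i-x_j)/2$, and leaves every other coordinate untouched. This reduces everything to two-point arguments.

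For the forward direction (convexity of $g$ implies Schur convexity of $f$), by the above remark it suffices to show that $f$ does not increase under a single T-transform, and since $g$ acts coordinate-wise this reduces to the two-point inequality $g(a)+g(b)\geq g(a+t)+g(b-t)$ for $0\leq a\leq b$ and $0\leq t\leq (b-a)/2$. Writing $a+t=\lambda a+(1-\lambda)b$ and $b-t=(1-\lambda)a+\lambda b$ with the common coefficient $\lambda=(b-a-t)/(b-a)\in[1/2,1]$, convexity of $g$ yields $g(a+t)\leq\lambda g(a)+(1-\lambda)g(b)$ and $g(b-t)\leq(1-\lambda)g(a)+\lambda g(b)$; summing the two gives the claim. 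For the strict version, both interpolation inequalities are strict whenever $a<b$ and $0<t<(b-a)/2$, which is exactly the regime in which the T-transform produces $x^\downarrow\neq y^\downarrow$.

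For the reverse direction (Schur convexity of $f$ implies convexity of $g$), fix $c,d\geq 0$ and consider $x=(c,d,0,\ldots,0)$ together with $y=\bigl((c+d)/2,(c+d)/2,0,\ldots,0\bigr)$; a direct check of the partial-sum conditions shows $x\succ y$. Schur convexity of $f$ applied to this pair, once the contributions from the zero coordinates cancel, yields $g(c)+g(d)\geq 2\,g\bigl((c+d)/2\bigr)$, i.e.\ midpoint convexity of $g$. Assuming $g$ is continuous (as is standard whenever one invokes Schur convexity of a separable sum on a connected domain), midpoint convexity upgrades to full convexity in the usual way. The strict statement is analogous: if $c\neq d$ then $x^\downarrow\neq y^\downarrow$, so strict Schur convexity gives strict midpoint convexity, which yields strict convexity under continuity. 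The concave statements follow by applying the convex ones to $-g$.

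The main obstacle is essentially bookkeeping rather than a novel idea, as HLP, T-transforms, and the midpoint-convexity-plus-continuity step are all classical. The one point that requires care is matching the notion of strictness across the two characterizations: one must verify that nontrivial T-transforms (those with $t>0$ and $a<b$) are precisely those that change $x^\downarrow$, and symmetrically that the two-point majorization used in the reverse direction satisfies $x^\downarrow\neq y^\downarrow$ exactly when $c\neq d$, so that ``strictly Schur convex'' corresponds cleanly to ``strictly convex'' on the other side of the equivalence.
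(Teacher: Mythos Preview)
Your argument is correct and follows a route close to but distinct from the paper's. The paper does not spell out a proof: it refers to the Schur--Ostrowski criterion in Marshall--Olkin and indicates that the forward implication can be read off directly from the Birkhoff--von~Neumann decomposition (write the majorized vector as a convex combination of permutations of $x$ and apply Jensen's inequality coordinate-wise). You instead use the Hardy--Littlewood--P\'olya factorization into T-transforms. Both approaches are classical and equivalent for the non-strict forward direction; your T-transform route has the advantage that the \emph{strict} case becomes transparent, since a single nontrivial T-transform already forces a strict drop under strict convexity of $g$, whereas extracting strictness from the Birkhoff sum requires a separate analysis of when Jensen is tight across the permutation mixture.

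Two small remarks. First, your claim that ``$0<t<(b-a)/2$ is exactly the regime in which the T-transform produces $x^{\downarrow}\neq y^{\downarrow}$'' should include the endpoint $t=(b-a)/2$: there $\lambda=1/2$, both coordinates collapse to the midpoint, and the sorted vector still differs from the original when $a<b$; since $\lambda\in(0,1)$ the strict convexity inequality still fires, so your conclusion is unaffected. Second, the continuity assumption you invoke in the reverse direction is genuinely needed and not merely cosmetic: from Schur convexity of $f$ one only recovers the two-point inequality $g(a)+g(b)\geq g(\lambda a+(1-\lambda)b)+g((1-\lambda)a+\lambda b)$ (Wright convexity), which without continuity or measurability is strictly weaker than convexity. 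The lemma as stated thus tacitly assumes some regularity on $g$; the paper sidesteps this by citing Marshall--Olkin rather than arguing that direction in detail.
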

This lemma follows from the general criterion of Schur convexity \cite{majorization_Marshall_2011}. However it is easy to prove directly, using the Birkhoff-von Neumann theorem, which states that if $p\succ q$, then $q$ is a convex combination of permutations on $p$.
(cf. Theorem \ref{prop:major}).
Using this, and the strict monotonicity of the logarithm, we see that the R{\'e}nyi entropies reverse the majorization order:
\begin{lemma}
\label{lem:strict_Schur}
%For all $\alpha \in R$, $\alpha \not \in 0$, the Renyi entropies $H_\alpha$ are strictly Schur concave. 
%The Renyi entropies for $\alpha=0,\infty,-\infty$ are Schur concave (but not strictly Schur concave).
For $\alpha\in (-\infty,0) \cup (0,\infty)$, the R{\'e}nyi entropies $H_\alpha$ are strictly Schur concave. 
For $\alpha=0, \pm \infty$, the R{\'e}nyi entropies are Schur concave. The function $\sum_i \log p_i $ 
is also strictly Schur concave.
\end{lemma}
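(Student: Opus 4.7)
The plan is to reduce nearly every case to the preceding lemma on additive Schur functions. For $\alpha \notin \{0,1,\pm\infty\}$ one has $H_\alpha(p) = \frac{\sgn(\alpha)}{1-\alpha} \log \sum_i g_\alpha(p_i)$ with $g_\alpha(x) = x^\alpha$. Since the outer logarithm is strictly monotone and the scalar $\frac{\sgn(\alpha)}{1-\alpha}$ has a fixed sign on each of the three intervals $(-\infty,0)$, $(0,1)$, $(1,\infty)$, the strict Schur convexity/concavity of $H_\alpha$ will be dictated by the strict convexity/concavity of $g_\alpha$ via the preceding lemma.

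First I would check that $g_\alpha$ is strictly convex on $(0,\infty)$ for $\alpha<0$ and for $\alpha>1$, and strictly concave for $0<\alpha<1$. Combining with the prefactor: for $\alpha>1$, $\sum_i g_\alpha(p_i)$ is strictly Schur convex while $\tfrac{1}{1-\alpha}<0$, and composition with the strictly increasing $\log$ preserves strict order, hence $H_\alpha$ is strictly Schur concave; for $0<\alpha<1$, $\sum_i g_\alpha(p_i)$ is strictly Schur concave and $\tfrac{1}{1-\alpha}>0$, giving the same conclusion; for $\alpha<0$, $\sum_i g_\alpha(p_i)$ is strictly Schur convex and $\tfrac{-1}{1-\alpha}<0$, again yielding strict Schur concavity of $H_\alpha$.

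The remaining entropic case $\alpha=1$ I would handle by applying the preceding lemma directly to $g(x)=-x\log x$, which is strictly concave on $[0,1]$, so $H_1(p)=\sum_i g(p_i)$ is strictly Schur concave. For the auxiliary function $\sum_i \log p_i$ the same lemma applies with $g(x)=\log x$, which is strictly concave on $(0,\infty)$. The boundary cases $\alpha\in\{0,\pm\infty\}$ fall outside the scope of the preceding lemma because $H_\alpha$ is no longer a sum of a function of the individual entries; here I would argue directly from the definition of majorization. If $p\succ q$ then $p_1^\downarrow\geq q_1^\downarrow$, giving $H_\infty(p)=-\log p_1^\downarrow \leq H_\infty(q)$; an analogous argument, using that the equality of totals forces $p_n^\downarrow \leq q_n^\downarrow$ on the supports, yields Schur concavity of $H_{-\infty}$; and $H_0=\log\rank$ is Schur concave because $\rank p \geq \rank q$ would otherwise let some partial sum for $q$ exceed one. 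Strictness fails for each since two distinct (up to permutation) distributions can share the same maximum, minimum nonzero entry, or rank while being in a nontrivial majorization relation.

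The main obstacle is the sign bookkeeping across the three intervals of $\alpha$; once it is tabulated, strict Schur concavity on the open intervals reduces to a single application of the preceding lemma, and the failure of strictness at the boundary points follows from $H_\alpha$ depending only on coarse features of the distribution (rank, maximum, minimum nonzero value) which are insensitive to the finer rearrangements encoded by majorization.
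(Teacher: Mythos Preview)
Your proposal is correct and follows essentially the same approach as the paper: invoke the preceding lemma on additive functions $\sum_i g(x_i)$ for the power functions $g_\alpha(x)=x^\alpha$ (and $g(x)=-x\log x$, $g(x)=\log x$), then compose with the strictly monotone logarithm and track the sign of the prefactor. The paper's own argument is only the one-line remark ``use the previous lemma and the strict monotonicity of the logarithm''; you have simply unpacked this, and additionally handled the boundary cases $\alpha\in\{0,\pm\infty\}$ explicitly, which the paper leaves implicit.
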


%\subsection{Model for work system}
%As mentioned in the introduction, ...... (actually already mentioned by jonathan in intro, or should we move the discussion here?)
%\steph{Nelly, I think its fine not to, you recall the basic things later on anyway}

\section{Exact catalysis with trivial Hamiltonian}
\label{sec:trivialHamilExact}%\subsection{Trivial Hamiltonian}
We are interested in the interplay of energy and entropy, which is the essence of thermodynamics.
However, before we approach this problem, it is instructive to first consider the case where the Hamiltonian 
is trivial in the sense that $H=0$, and thermodynamics is reduced to bare information theory. 

This toy model for thermodynamics was described in \cite{uniqueinfo} and it has its roots in 
the problem of exorcising Maxwell demon \cite{Bennett82,Landauer}.
It is constructed within a framework of so called resource theories fo thermodynamics \cite{Beth-thermo,uniqueinfo,HO-limitations,thermoiid} 
having its roots in research on entanglement manipulations \cite{BBPS1996,BBPSSW1996,thermo-ent2002,DevetakHW2005-resource},
where one is interested in transformations between 
states by means of an allowed class of operations. Some states can be brought in for free, 
and they constitute a free resource. The others cannot be created, but only manipulated, i.e. we may transform
one resource state into some other one. 
In the mentioned toy thermodynamics, which is perhaps the simplest known resource theory, the free resource are just maximally mixed states, and all unitary transformations as well as partial trace are allowed operations. 
The emerging class of operations was called {\it noisy operations}. It was shown that in the case of systems of the same size, the class of noisy operations is equivalent to mixtures of unitaries. Therefore the condition that $\rho$ can be transformed into $\rho'$ is equivalent to majorization: $\rho$ can be transformed into $\rho'$ iff the spectrum of $\rho$ majorizes the spectrum of $\rho'$ \cite{Uhlmann-ordering}. The noisy operations are equivalent to thermal operations applied to a system with a trivial Hamiltonian. 

%In particular, 
%the amount of work that can be drawn from a given state of dimension $d$ is equal to \cite{Dahlsten}
%\be
%W_{dist}= kT D_0(\rho|\frac{\id}{d})= kT (\lfloor \log d  - H_0(\rho))
%\ee
%where $H_0=\log\rank(\rho)$. Conversely, the work needed to create state $\rho$ out of equilibrium 
%state is given by 
%\be
%W_{form}= kT  D_\infty(\rho|\frac{\id}{d})= kT  (\log d  - H_\infty(\rho)) 
%\ee

%???\mh{How to refer to Ruch and Mead - they had similar things}
%\mh{Subtlety: in our thermo for kids paper, we would have "floors", while 
%with more real thermo - where the work system has Hamiltonian, we do not have them  - also 
%Dahlsten does not have floors.}
%\mh{subtlety: if we only distill purity, and then multiply times kT, then the amount of work will
%be always discrete. However, when we will use the work system with Hamiltonian, 
%we can get from a given state a continuous amount of work - sort out this.} 
As we mentioned, in this toy thermodynamics, the law that governs state-to-state transitions is 
majorization. However, this is only so when we do not allow ancillary systems that are then returned in the same state. 
It is interesting to analyse how the laws change if we allow catalytic transitions between states. Since the transitions without catalysis 
are governed by majorization, the catalytic transitions will be governed by {\it trumping}. We say that $x$ can be trumped into $y$ if there exists some $z$ such that 
\be
x\ot z \succ y\ot z.
\ee
In \cite{JonathanP} it was for the first time shown that using catalysis, one can perform 
transitions otherwise impossible. 
In particular the following explicit example of  states that do not  majorize one another, but allow catalytic transition 
was given
\be
p=\left(\frac{4}{10},\frac{4}{10},\frac{1}{10},\frac{1}{10}\right), \quad q= \left(\frac12, \frac14, \frac14, 0\right) 
\ee
One checks that $p_1 < q_1$ but  $p_1 +p_2 > q_1 +q_2$. Now, if one takes the catalyst $r=(\frac{6}{10},\frac{4}{10})$,
then $q\ot r \succ p\ot r $, so that $q$ can be trumped into $p$.  

Recently Klimesh and Turgut \cite{Klimesh-trumping,Turgut-trumping-2007} independently provided necessary and sufficient conditions for $x$ to be trumped into $y$.  These were in terms of functions closely related to the Renyi entropies. 
We present here a set of conditions, equivalent to the Klimesh-Turgut ones, written in terms of Renyi divergences w.r.t. the uniform distribution $\eta$. 
We first argue, that original conditions can be equivalently stated in terms of non-strict inequalities,
which in particular allows to remove any discrete subset of conditions by exploiting continuity with respect
to parameter $\alpha$ below. 

\begin{proposition} \label{trumping} %[Klimesh \cite{Klimesh}, Turgut \cite{Turgut}]
Let $x\in R_k^+$ and $y\in R_k^+$ be probability vectors which do not both contain components equal to zero. Then $x$ can be trumped into $y$ if, and only if, 
\begin{equation}
D_{\alpha}(x || \eta) \geq D_{\alpha}(y || \eta),
\end{equation}
for all $\alpha\in (-\infty, \infty)$, with $\eta = (1/k, \ldots, 1/k)$ the uniform distribution. 
\end{proposition}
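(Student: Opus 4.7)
My strategy is to view Proposition~\ref{trumping} as a R\'enyi-divergence reformulation of the trumping theorem of Klimesh and Turgut~\cite{Klimesh-trumping,Turgut-trumping-2007}. The bridge is identity~\eqref{interchangeHD}, $D_\alpha(p\|\eta)=\sgn(\alpha)\log k - H_\alpha(p)$, which converts any monotonicity statement for $H_\alpha$ under trumping into the corresponding statement for $D_\alpha(\cdot\|\eta)$, and vice versa. I would prove the ``only if'' direction directly from data processing and tensor additivity of $D_\alpha$, and reduce the ``if'' direction to Klimesh--Turgut after a continuity argument that upgrades non-strict inequalities to strict ones outside the degenerate case $x^\downarrow=y^\downarrow$.

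\textbf{Necessity.} Assume a catalyst $z$ with $x\otimes z\succ y\otimes z$ exists. By the Birkhoff--von Neumann theorem, majorization of two distributions on the same finite set is equivalent to the existence of a doubly stochastic map sending the former to the latter, and every such map fixes the uniform distribution. The data-processing inequality~\eqref{eq:data-proc}, stated in the paper for every $\alpha\in[-\infty,\infty]$, thus gives $D_\alpha(x\otimes z\|\eta)\geq D_\alpha(y\otimes z\|\eta)$. A direct calculation yields tensor additivity
\begin{equation}
D_\alpha(p\otimes q\|\eta\otimes\eta)=D_\alpha(p\|\eta)+D_\alpha(q\|\eta),
\end{equation}
so cancelling the common term $D_\alpha(z\|\eta)$ delivers $D_\alpha(x\|\eta)\geq D_\alpha(y\|\eta)$. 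The only subtlety is that $D_\alpha(z\|\eta)$ is infinite for $\alpha<0$ when $z$ has zero entries; this is handled by a standard approximation argument allowing us to take $z$ of full support.

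\textbf{Sufficiency.} Assume $D_\alpha(x\|\eta)\geq D_\alpha(y\|\eta)$ for every $\alpha$ and first dispose of the degenerate case $x^\downarrow=y^\downarrow$, in which any catalyst trumps $x$ to $y$. In the remaining non-degenerate case I claim these inequalities are strict at every $\alpha$ except at most a discrete set. Indeed, $\alpha\mapsto\sum_i p_i^\alpha$ is real-analytic in $\alpha$, so equality of $D_\alpha(x\|\eta)$ and $D_\alpha(y\|\eta)$ on any interval would propagate to all $\alpha\in\mathbb{R}$, force equality of all moments $\sum_i x_i^n=\sum_i y_i^n$ for $n\in\mathbb{N}$, and hence by moment-determinacy of discrete measures (equivalently, by the strict Schur concavity of $H_\alpha$ in Lemma~\ref{lem:strict_Schur}) yield $x^\downarrow=y^\downarrow$, a contradiction. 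Translating the resulting dense set of strict inequalities back through~\eqref{interchangeHD} produces exactly the hypotheses of the Klimesh--Turgut trumping theorem, whose conclusion then supplies the catalyst $z$.

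\textbf{Main obstacle.} The Klimesh--Turgut theorem itself is the deep input and is used as a black box; without it, the ``if'' direction would require an explicit construction of the catalyst. The substantive work in my plan is the careful bookkeeping of the translation between formulations, in particular matching the $\sgn(\alpha)/(\alpha-1)$ prefactor of $D_\alpha$ across the regimes $\alpha>1$, $0<\alpha<1$, $\alpha<0$, and the limiting values $\alpha\in\{0,1,\pm\infty\}$ where $D_\alpha$ takes a different form. The principal delicacy is the behaviour at $\alpha<0$: there $D_\alpha(p\|\eta)$ is infinite precisely when $p$ has a zero entry, so the condition of the Proposition is non-trivial only when at most one of $x,y$ carries zero entries, which is exactly the hypothesis excluded in the statement.
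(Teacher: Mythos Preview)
Your necessity direction is fine, and the data-processing route is a clean alternative to simply citing Klimesh's ``only if'' direction. The reduction to a full-rank catalyst is better done by restricting $z$ to its support rather than by an unspecified approximation, but this is a detail.

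The sufficiency direction has a genuine gap. Klimesh--Turgut, as you invoke it, requires \emph{strict} inequalities $f_\alpha(x)>f_\alpha(y)$ for \emph{every} real $\alpha$. Your analyticity argument correctly shows that, in the non-degenerate case $x^\downarrow\neq y^\downarrow$, the equality set of $g(\alpha):=D_\alpha(x\|\eta)-D_\alpha(y\|\eta)$ has empty interior, hence is discrete. But a nonnegative analytic function can perfectly well have isolated zeros of even order; nothing in your argument excludes this. A dense set of strict inequalities is therefore \emph{not} ``exactly the hypotheses of Klimesh--Turgut,'' and the black box does not apply. (Incidentally, strict Schur concavity of $H_\alpha$ is not equivalent to the moment argument here: Schur concavity presupposes a majorization relation between $x$ and $y$, which you do not have.)

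The paper avoids this by perturbing the \emph{target} rather than upgrading the inequalities: set $y_\epsilon=(1-\epsilon)y+\epsilon\eta$. Then $y\succ y_\epsilon$ with $y^\downarrow\neq y_\epsilon^\downarrow$, so strict Schur convexity of the $f_\alpha$ (Lemma~\ref{lem:strict_Schur}) gives $f_\alpha(x)\geq f_\alpha(y)>f_\alpha(y_\epsilon)$ strictly for every $\alpha$; now Klimesh applies and yields a catalyst for $x\to y_\epsilon$, after which one passes to the limit $\epsilon\to 0$ via continuity of the majorization inequalities. This perturbation trick is the missing idea in your plan. A secondary point you also skip: at $\alpha=0$ Klimesh's condition is the Burg entropy $-\sum_i\log x_i$, not $D_0(\cdot\|\eta)$, so the translation through~\eqref{interchangeHD} does not cover it; the paper recovers it as a limit $\lim_{\alpha\to 0^+}\frac{1-\alpha}{\alpha}D_\alpha(x\|\eta)$, which is only available once one has moved to non-strict inequalities.
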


\begin{proof}
Klimesh \cite{Klimesh-trumping} proved that if $x$ and $y$ do not both contain components equal to zero, 
than $x$ can be trumped into $y$ if, and only if, for all $\alpha\in (-\infty,\infty)$:
\be \label{strict}
f_\alpha(x)> f_\alpha(y),
\ee
where 
\beq
f_\alpha(x)=\left \{\bea{ll}
\log\sum_{i=1}^k x_i^\alpha &(\alpha>1); \\
\sum_{i=1}^k x_i \log x_i & (\alpha = 1);\\
-\log \sum_{i=1}^k x^\alpha_i &(0 < \alpha < 1);\\
-\sum_{i=1}^k \log x_i & (\alpha=0);\\
\log \sum_{i=1}^k x^\alpha_i & (\alpha < 0),
\eea\right.
\label{eq:f}
\eeq

%The reader can verify that all the functions above, excluding $\alpha=0$ are proportional to $-H_\alpha(x)$, which is strictly 
%Schur convex by Lemma \ref{lem:strict_Schur}. Note that the R{\'e}nyi entropy is in turn related to $D_\alpha (x\|\eta)$ by 
%Equation \eqref{interchangeHD}. Therefore, Eq. (\ref{strict}) is equivalent to $f_\alpha(x) \geq f_\alpha(y)$ for all $\alpha\in 
%(-\infty,0)\cup(0, \infty)$ since equality holds iff the ordered distributions $x^\downarrow = y^\downarrow$ and in this case, 
%$x$ trumps $y$ trivially.
Let us first show that we can replace strict inequalities with non-strict ones. Obviously, when $x$ can be trumped 
into $y$ then the conditions with non-strict inequalities are satisfied. 

Suppose, conversely, that conditions \eqref{strict} are satisfied for all $\alpha\in R$ but with non-strict inequalities.
Then consider $y_\epsilon=(1-\epsilon) y +\epsilon \eta$ where $\eta$ is the uniform distribution. We have that $y^\downarrow\not=y^\downarrow$, and  $y^\downarrow \succ y_\epsilon^\downarrow$. 
All the functions $f_\alpha$ are strictly Schur convex, so we have $f_\alpha(x)>f_\alpha (y_\epsilon)$. 
Therefore, $x$ can be trumped into $y$, i.e. there exists $z$ such that   $x\ot z \succ y_\epsilon \ot z$. 
since this is true for arbitrary $\epsilon$, we must have also that $x\ot z \succ y \ot z$, 
as majorization conditions are continuous - if the perturbation is small enough, it does not change the ordering, 
and the conditions are inequalities for continuous functions of the ordered probability distirbutions.

Now, all the functions $f_\alpha$, excluding $\alpha=0$ are proportional to $-H_\alpha(x)$. In turn, the R{\'e}nyi entropy is related to $D_\alpha (x\|\eta)$ by  Equation \eqref{interchangeHD}. 
The conditions  $\alpha=0$  we recover as follows:
first we note that if $x$ or $y$ do not have full rank, then $f(x)$ or $f(y)$ diverges to infinity, and this feature is already detected when $\alpha<0$. If both $x$ and $y$ have full rank, then it can be verified that
\begin{equation}
-\frac{1}{k} \sum_{i=1}^k \log x_i - \log k = \lim_{\alpha\rightarrow 0^+} \frac{1-\alpha}{\alpha} D_\alpha (x\|\eta).
\end{equation}
%\begin{equation}
%\sum_{i=1}^k \log x_i  = \lim_{\alpha \rightarrow 0^+} \frac{\alpha - 1}{\alpha} D_{\alpha}(x || \eta).
%\end{equation} 
%\nn{is the last equation necessary, once we have the one before it? we can move the remark 4 here is you want to, which makes 
%things clear.} 
\end{proof}

\begin{remark} Knowing that non-strict inequalities are enough,  one can remove any discrete amount of trumping conditions, as they can be obtained in the limit from the other conditions.
\end{remark}
%Note that if any of the components of $x$ is $0$, then  $f_{\alpha}(x) = \infty$ for $\alpha \leq 0$,
%and only $\alpha>0$ are relevant.

%A direct consequence of Theorem \ref{trumping} is the following 
%\begin{theorem} \label{kids}
%A state $\rho$ can be transformed into a state $\sigma$ by a bi-stochastic map if, and only if, 
%\begin{equation}
%D_{\alpha}(\rho || \eta) \geq D_{\alpha}( \sigma || \eta),
%\end{equation}
%for all $\alpha\in (-\infty, \infty)$, with $\tau$ the maximally mixed state. 
%\end{theorem}

The Klimesh-Turgut conditions have a peculiar feature. Namely, if there are zeros in both $x$ and $y$ 
we have to truncate the zeros and compute the conditions on smaller vectors.
This means that we do not simply compare two functions, but which function we will choose depends on 
some relation between the vectors. Second, the condition of having zeros in both vectors is very unstable. 
If we slightly perturb $y$ so that we remove zeros, then we do not need to truncate anymore. Note that the problem here 
is only with negative $\alpha$, as for positive $\alpha$ the functions $D_\alpha( p || \eta)$ do not depend on additional zeros, 
while the functions with negative $\alpha$ are infinite, when at least one component vanishes.

First let us note, that we can eliminate the first problem by requiring that the output is returned not exactly,
but only with arbitrary high accuracy. Namely we have 

%Indeed, by allowing 
%the output state to be a full rank, one never needs to truncate: if the input and output states have the same number of zeros,
%one takes the ouput state being full rank, so much  close to the required one, 
%then conditions for $\alpha>0$ will be again all satisfied. 
%its $D_\alpha$ is infinite, hence conditions for negative $\alpha$ are already satisfied. We hence obtain

\begin{proposition} \label{trumping-better} %[Klimesh \cite{Klimesh}, Turgut \cite{Turgut}]
Let $x\in R_k^+$ and $y\in R_k^+$ be probability vectors. Then the following conditions are equivalent:
\begin{enumerate}
\item For arbitrary $\epsilon>0$, there exists  $y_\epsilon$ such that $||y-y_\epsilon||\leq \epsilon$ and $x$ can be trumped into $y_\epsilon$. 
\item  The following inequality:
\begin{equation} 
D_{\alpha}(x || \eta) \geq D_{\alpha}(y || \eta)
\label{eq:DxDy2}
\end{equation}
holds for all $\alpha\in (-\infty, \infty)$, with $\eta = (1/k, \ldots, 1/k)$ being the uniform distribution. 
\end{enumerate}
\end{proposition}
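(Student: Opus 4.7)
The plan is to prove both directions by reducing to Proposition~\ref{trumping} via a smoothing of $y$ toward the uniform distribution $\eta$. The preliminary observation is that for any $\delta\in(0,1)$, the smoothed vector $y_\delta := (1-\delta)y + \delta\eta$ has strictly positive components and is majorized by $y$: since $\eta = \frac{1}{k!}\sum_{\pi} \pi y$, the vector $y_\delta$ is a convex combination of permutations of $y$, and Birkhoff--von Neumann gives $y\succ y_\delta$. By Schur concavity of the R{\'e}nyi entropies (Lemma~\ref{lem:strict_Schur}) together with the relation~\eqref{interchangeHD}, the map $p\mapsto D_\alpha(p\|\eta)$ is Schur convex for every $\alpha$, hence
\[
D_\alpha(y\|\eta) \ge D_\alpha(y_\delta\|\eta)\quad \text{for all } \alpha\in\mathbb R.
\]

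For the direction $(2)\Rightarrow(1)$: given $\epsilon>0$, choose $\delta$ small enough that $\|y-y_\delta\|\le\epsilon$. By the hypothesis of~(2) combined with the preliminary observation,
\[
D_\alpha(x\|\eta)\ge D_\alpha(y\|\eta)\ge D_\alpha(y_\delta\|\eta)\quad\text{for all }\alpha.
\]
Since $y_\delta$ has no zero components, Proposition~\ref{trumping} applies to the pair $(x,y_\delta)$ and produces a catalyst $z$ with $x\otimes z\succ y_\delta\otimes z$. Hence $y_\delta$ plays the role of the required $y_\epsilon$.

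For the direction $(1)\Rightarrow(2)$: let $y_\epsilon$ be a vector within $\epsilon$ of $y$ which is trumped by $x$. Applying another round of smoothing, $y_\epsilon \mapsto (1-\delta_\epsilon)y_\epsilon+\delta_\epsilon\eta$ with $\delta_\epsilon\to 0$, I may assume $y_\epsilon$ has strictly positive components (trumping is preserved when the target is replaced by a distribution it majorizes, and the extra perturbation is absorbed into $\epsilon$). Proposition~\ref{trumping} then yields $D_\alpha(x\|\eta)\ge D_\alpha(y_\epsilon\|\eta)$ for every $\alpha$. Taking $\epsilon\to 0$: for $\alpha\ge 0$, $D_\alpha(\cdot\|\eta)$ is continuous on the probability simplex, so $D_\alpha(x\|\eta)\ge D_\alpha(y\|\eta)$ directly. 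For $\alpha<0$, if $y$ has no zero components then $D_\alpha$ is still continuous at $y$; if $y_i=0$ for some $i$, then $(y_\epsilon)_i\to 0$ forces $(y_\epsilon)_i^\alpha\to+\infty$, hence $D_\alpha(y_\epsilon\|\eta)\to+\infty$, which forces $D_\alpha(x\|\eta)=+\infty=D_\alpha(y\|\eta)$ and the inequality holds trivially.

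The point I expect to require the most care is the $\alpha<0$ regime, where $D_\alpha(\cdot\|\eta)$ blows up at the boundary of the simplex. The two smoothing steps must be set up so that (i) in $(2)\Rightarrow(1)$, the smoothing does not increase $D_\alpha$ (this is where Schur convexity is essential), and (ii) in $(1)\Rightarrow(2)$, the blow-up at zero components is exactly captured by the $+\infty$ convention so that condition~2 remains well-posed. This is precisely the pathology Proposition~\ref{trumping-better} is designed to cure: the $\epsilon$-approximation moves the problem off the boundary of the simplex and into the interior, where all the $D_\alpha$ are continuous and form a single uniform family of sufficient conditions.
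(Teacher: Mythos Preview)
Your proof is correct and follows essentially the same approach as the paper's: both directions are reduced to Proposition~\ref{trumping} by smoothing the target toward $\eta$ to obtain a full-rank vector, and the limit $\epsilon\to 0$ is handled via continuity (with the $\alpha<0$ boundary case treated through the $+\infty$ convention). Your argument is in fact slightly more detailed than the paper's---you explicitly invoke Schur convexity via Lemma~\ref{lem:strict_Schur} to justify that admixing $\eta$ can only decrease $D_\alpha(\cdot\|\eta)$, and you spell out the $\alpha<0$ blow-up at the boundary, both of which the paper leaves implicit.
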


\begin{proof}
Let us first note that without loss of generality we can assume that $y_\epsilon$ is of full rank.
Indeed, if it is not,  we can perturb the state to add a small amount 
of noise (which is a valid noisy operation), obtaining $\tilde y_{\epsilon'}$ which has full rank, i.e. no zeros,
and stisfies $||\tilde y_{\epsilon'}-y||\leq\epsilon'$, with $\epsilon'$ arbitrarily small.

"Condition 1 implies 2": Assume that for any $\epsilon$ we can trump $x$ into $y_\epsilon$. 
We can then apply proposition \ref{trumping} to obtain 
\begin{equation}
D_{\alpha}(x || \eta) \geq D_{\alpha}(y_{\epsilon} || \eta).
\label{eq:DxDy3}
\end{equation}
Now, we use continuity of $D_\alpha$ in first argument for fixed full rank second argument,
and by letting $\epsilon\to 0$ obtain \eqref{eq:DxDy2}.

"Condition 2 implies 1:" Assume that \eqref{eq:DxDy2} hold.
Consider then $y_\epsilon=(1-\epsilon) y + \epsilon \eta$ where $\epsilon\in (0,1]$ can be arbitrarily small. We then have 
\be
D_{\alpha}(x || \eta) > D_{\alpha}(y_{\epsilon} || \eta)
\ee
since admixing $\eta$ can only decrease $D_\alpha (y || \eta)$. Applying proposition \ref{trumping} tells us that $x$ can be trumped into $y_\epsilon$. 
\end{proof}

We note that earlier, Aubrun and Nechita \cite{AubrunNechita-catalysis} gave conditions for trumping, in which only $H_\alpha$ with $\alpha> 1$ were needed. This is because they considered a special kind of closure,  where one is allowed to add an arbitrary number of zeros to the initial vector $x$ while returning an (arbitrarily good) approximation of the needed output $y$. This is a kind of embezzling (see section \ref{subsec-embezzling} for definition of embezzling): 
One adds an ancilla and returns it with arbitrary small error, but the size of ancilla needs to grow in order to make the error smaller. In thermodynamics this is not allowed, as according to the second law, we should 
consider processes which do not change
the environment. The proposition \ref{trumping} already suggests how the conditions for thermodynamics should look like: the maximally mixed state should get replaced by the Gibbs state. In section \ref{sec:adultHamil} we will prove that this is indeed the case.

\subsection{Restricting to $\alpha \geq 0$ by investing a small amount of extra work}
\label{sub:extra_work}
In this section, we will argue  that if we are allowed to invest an arbitrary small amount of work,
only the conditions with positive $\alpha$ 
are relevant. This was considered by Aubrun and Nechita in \cite{AubrunNechita-catalysis}, and the way we do that below is very similar.

We have three cases
\begin{itemize}
\item[(i)] $p$ has more zeros than $q$.
\item[(ii)] $p$ has less zeros than $q$.
\item[(iii)] $p$ and $q$ have the same number of zeros (in particular, can both have no zeros).
\end{itemize} 
In case $(i)$, after truncation $p$ will still have zeros, and so relative entropies for $p$ with negative $\alpha$ will be infinite. Therefore the conditions with negative $\alpha$ are always satisfied. In case $(ii)$ the transition cannot be realised, but this is reported by comparing ranks, 
which can be obtained using $H_0=\lim_{\alpha\to 0^+}H_\alpha$. So again already the conditions with $\alpha>0$ report the impossibility of transition, and negative $\alpha$'s are not needed. Finally, in case $(iii)$, if we consider the  transition $p \to q$, with a small amount of work invested in addition:
\be
p \ot|0\>_{d+1}\<0| \ot \eta_{d}  \to q \ot |0\>_{d}\<0| \ot \eta_{d+1},
\ee
where $|0\>_{k}\<0|$ stands for the distribution $\underbrace{(1,0, \ldots, 0)}_k$ (equivalently, a pure state  on $C^k$) and $\eta_k$ stands for the uniform distribution with $k$ elements. The invested amount of work is  $\log \frac{d+1}{d}$, hence it is arbitrary small, when $d\to \infty$.
Now, on the left hand side, we have more zeros than on the other side (as initially, we had the same number of zeros). Therefore, we are back to case $(i)$. 

Note however, that we can rule out negative $\alpha$ in a different way, if instead of insisting on preparing exact output state, we allow for preparing an $\varepsilon$-approximation, with arbitrary accuracy (as discussed in Prop. \ref{trumping-better}). Therefore, in case $(iii)$ considered above we can add to both sides an ancilla in a pure state of the same dimension (which is actually needed only 
when $p$ and $q$ both have no zeros). Then both input and ouptut will have the same number of zeros. 
But as the desired output we can take an approximation of
it which is full rank. Then the transition is governed solely by the conditions with positive $\alpha$. The returned state of ancilla is now only approximately pure, but the accuracy is arbitrarily good. Note that we can choose the approximation in such a way that it affects only the ancilla; the original output state 
is not changed and will be produced exactly. 

The above two methods of ruling out negative $\alpha$ do not differ very much: in the first one we input some pure state 
of large dimension and return a pure state of dimension smaller by $1$, while in the second we input a qubit in a pure state 
and return it with arbitrarily good approximation. However, in both cases one has to borrow a perfectly pure state, which in 
thermodynamic context would mean initially investing an infinite amount of work.

\section{Cathalytic thermal operations}
\label{sec:TO}
In Sec. \ref{sec:trivialHamilExact}, we considered catalytic noisy operations, where apart from using noisy operations, one can add any catalyst, provided it is returned in the same state. 
To obtain a full picture of thermodynamic interactions, we consider a generalisation of these operations to the setting where we have non-trivial Hamiltonians, and a class of {\it thermal operations} introduced in \cite{Beth-thermo,HO-limitations,thermoiid}. The relevant objects now come in pairs $(\rho,H)$ - the system's state and its Hamiltonian. 

Consider a fixed temperature $T$. Then given a system $S$ in some initial state $\initial$, thermal operations include the following procedures:
\bei
\item adding any system $\gibbs^R$ which is the thermal state w.r.t. an arbitrary Hamiltonian $H_R$ at temperature $T$,
\item applying arbitrary unitary that commutes with the joint Hamiltonian of system and reservoir $H_S+H_R$,
\item removing a system by performing partial trace.
\eei  
It is clear that thermal operations preserve the thermal state. Conversely, for states diagonal in energy basis, 
any operation which preserves the thermal state belongs to the class of thermal operations as proved in 
\cite{Beth-thermo,HO-limitations}. 
%by use of many copies of the same Gibbs state and in \cite{HO-limitations} by the use of a heat bath which satisfies some assumptions (which are in particular met by many copies of a Gibbs state). 

We now consider catalytic thermal operations, where in addition to thermal operations, 
one can add an ancilla called the catalyst $(\rho_C, H_C)$, provided it is returned in the same state (in product form with the other systems). In other words, $(\rho, H)$ can be catalytically transformed into $(\rho',H')$ 
if there exists catalyst $(\rho_C, H_C)$ such that $(\rho\ot\rho_C, H+H_C)$ 
can be transformed into $(\rho' \ot \rho_C, H'+H_C)$. Demanding that the catalyst be returned uncorrelated from the system is natural, since one is likely to want to re-use the catalyst over many cycles, and having correlations between the catalyst and the system could result in inefficiencies later on. One could also imagine that the purification of the catalyst is held somewhere, and require that the catalyst and its purification remain close to its original state. 

\section{Equivalence relations from set of allowed resources defines the zeroeth law}
\label{sec:zeroeth}
In this section, we spell the proof of the zeroeth law in this framework of thermal operations, which singles out Gibbs states as the unique free resource in thermodynamics. This has been shown to be true \cite{thermoiid} for qubits, however we provide a proof here for general systems. More precisely, if we consider the resource theory framework of thermal operations, which allows arbitrary energy preserving unitaries across the global system, then the only resource states that, when allowed for free, does not give rise to trivial state conversion conditions is that of the Gibbs state. It is clear that this is directly related to work extraction via thermal operations, since one should make sure that work cannot be extracted from a freely allowed resource. Otherwise, one could, by using indefinite amounts of free resources, extract enough work to facilitate any state transformation. Therefore, we want to show that given a system Hamiltonian $H_A$, and the corresponding thermal state $\tau_A$ with some fixed temperature $T$, then by allowing arbitrarily many copies of any other state $\rho_A \neq \tau_A$, one can always extract work deterministically by thermal operations. 
%\nn{i actually have a question here: is the temperature arbitrary? because in this derivation we do not have a thermal bath (in fact, the set of free states constitutes the bath...). }

Let us first show that this is true for states diagonal in the energy eigenbasis. The case where coherences between energy eigenstates exist can be dealt with, as shown in \cite{Skrzypczyk13}, by decohering multiple copies of the state in its energy subspace, and invoking Landauer's principle to extract work. We shall base on the result of Pusz and Woronowicz \cite{pusz_passive_1978} who introduced the notion of {\it passive} states,  i.e. the states whose energy cannot be decreased by arbitrary cyclic Hamiltonian evolution. 
They further introduced a notion of {\it completely passive} states:

\begin{definition}[Completely passive states]
A state $\rho$ is completely passive iff for all $n\in\mathbb{Z}^+$, $\rho^{\otimes n}$ is also passive.
\end{definition}

They proved for general quantum systems described by a $C^*$ algebra that the only completely passive states are either so called KMS states or ground states. Lenard \cite{lenard78} has translated their results to the case of  finite-dimensional systems. 
For such systems one has:
 
%To do so, we first show that we can extract work from many copies of non-passive states \cite{lenard78}. 

\begin{fact}
Consider a state $\rho$ corresponding to some Hamiltonian $H$, and let $\lbrace p_i\rbrace$ and $\lbrace E_i\rbrace$ be the eigenvalues of $\rho$ and $H$ respectively. Then $\rho$ is passive, iff $[\rho,H]=0$, and for any $i,j$, $E_i > E_j$ implies that $p_i \leq p_j$.
\end{fact}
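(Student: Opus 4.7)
The plan is to prove the two directions separately, starting with the easier sufficiency.

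For sufficiency, suppose $[\rho,H]=0$ so that $\rho=\sum_i p_i \proj{E_i}$ in a common eigenbasis, and suppose the spectra are anti-ordered (larger energy $\Rightarrow$ smaller population). For any unitary $U$, the post-evolution energy is
\be
\Tr(U\rho U^\dagger H)=\sum_{i,j} p_i\,|\bra{E_j}U\ket{E_i}|^2\,E_j = \sum_{i,j} D_{ji}\,p_i\,E_j,
\ee
where $D_{ji}=|\bra{E_j}U\ket{E_i}|^2$ is doubly stochastic. So the claim reduces to $\sum_{ij}D_{ji}p_iE_j\ge\sum_i p_iE_i$. By Birkhoff--von Neumann, $D$ is a convex combination of permutation matrices, so it suffices to check $\sum_i p_{\sigma(i)}E_i\ge\sum_i p_iE_i$ for every permutation $\sigma$; this is the rearrangement inequality applied to anti-ordered sequences (equivalently, Hardy--Littlewood--P\'olya). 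Hence the state is passive.

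For necessity, assume $\rho$ is passive. First I will show $[\rho,H]=0$ by an infinitesimal argument. For any Hermitian $B$, take $U_t=e^{-itB}$ and consider
\be
f(t)=\Tr(U_t\rho U_t^\dagger H),\qquad f'(0)=-i\Tr(B[\rho,H]).
\ee
Passivity forces $f(t)\ge f(0)$ for all $t$ and all $B$, hence $f'(0)=0$ for every Hermitian $B$. Since $[\rho,H]$ is anti-Hermitian, write $[\rho,H]=iC$ with $C$ Hermitian; then $\Tr(BC)=0$ for all Hermitian $B$ implies $C=0$, i.e.\ $[\rho,H]=0$. Thus $\rho$ and $H$ share an eigenbasis $\{\ket{E_i}\}$.

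With the commutation established, I will prove the anti-ordering by a swap argument. For any pair $i\neq j$, the permutation unitary $U_{ij}$ that swaps $\ket{E_i}\leftrightarrow\ket{E_j}$ (and fixes all other basis vectors) is a valid unitary on the whole Hilbert space. The energy difference is
\be
\Tr(U_{ij}\rho U_{ij}^\dagger H)-\Tr(\rho H)=(p_i-p_j)(E_j-E_i).
\ee
Passivity requires this to be non-negative, so $E_i>E_j$ forces $p_i\le p_j$, which is exactly the stated ordering condition. The main (mild) subtlety is handling degeneracies in either the Hamiltonian or $\rho$: within a degenerate energy eigenspace the eigenbasis of $\rho$ is not unique, but any choice works because all such eigenvectors carry the same energy, so the anti-ordering statement is unambiguous and the swap argument applies to any pair of non-degenerate energies. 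This completes both directions.
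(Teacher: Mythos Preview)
Your proof is correct. The paper does not actually prove this statement: it is presented as a \emph{Fact}, attributed to Pusz--Woronowicz and to Lenard's finite-dimensional translation, and quoted without argument. Your self-contained proof follows the standard route (essentially Lenard's): Birkhoff--von Neumann plus the rearrangement inequality for sufficiency, the first-variation argument $f'(0)=0$ to force $[\rho,H]=0$, and the two-level swap to obtain the anti-ordering. There is nothing to compare against in the paper itself; you have simply supplied the omitted proof, and done so cleanly.
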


Moreover a state which is completely passive is either ground state or a Gibbs state (since in finite dimensional state,
the KMS state is unique and it is just the Gibbs state). 
%Passivity is closely related to thermalization and work extraction as shown in \cite{lenard78}. In particular, \cite{lenard78} has shown that %if $\rho$ is completely passive, then it must be either the ground state or the Gibbs state. 

Either from the very definition or from the above characterization one sees that given any non-passive state $\rho_{np}$, one can extract on \textit{average} a non-zero amount of work by performing a population inversion (a switch of energy levels). By invoking typicality arguments, we show that by performing extraction over many identical copies of $\rho_{np}$, an amount of work closely related to average work can always be extracted except with small probability. Then to conclude, one uses the mentioned result, that any state which is not a Gibbs state 
or a ground state, becomes non-passive, once we take sufficiently many copies. 

Before beginning the proof, we state for completeness Theorem \ref{th:hoeffding} which is the main tool we use to invoke typicality.
\begin{theorem}[Hoeffding inequality]\label{th:hoeffding}
Consider $x = \sum_{i=1}^N h_i$, where $h_1,\cdots,h_N$ are independent random variables such that for every $i, h_i\in[a_i,b_i]$. Denote $R_i=b_i-a_i$. Then for any $\alpha>0$, the probability
\begin{equation}
P\left[|x-E[x]| \geq \alpha N\right]\leq e^{-\frac{2\alpha^2N^2}{\sum_i R_i^2}}.
\end{equation}
\end{theorem}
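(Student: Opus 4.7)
The plan is to prove Hoeffding's inequality by the standard Chernoff-bound route. First I would reduce to centered variables: set $Y_i = h_i - E[h_i]$, so the $Y_i$ are independent, mean zero, and take values in an interval $[a_i - E[h_i],\, b_i - E[h_i]]$ of length $R_i$. Writing $S = \sum_{i=1}^N Y_i = x - E[x]$, the goal becomes a bound on $P[|S| \geq \alpha N]$, and by symmetry (running the same argument with $-Y_i$) it suffices to control the one-sided tail $P[S \geq \alpha N]$.

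The main step is the exponential Markov inequality. For any $s>0$,
\[
P[S \geq \alpha N] \;=\; P\!\left[e^{sS} \geq e^{s\alpha N}\right] \;\leq\; e^{-s\alpha N}\, E\!\left[e^{sS}\right] \;=\; e^{-s\alpha N} \prod_{i=1}^N E\!\left[e^{sY_i}\right],
\]
where independence is used to factorise the moment generating function. This reduces the problem to bounding $E[e^{sY_i}]$ for each mean-zero, bounded variable $Y_i$.

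The key ingredient is Hoeffding's lemma: if $Y$ is mean zero with $Y \in [a,b]$, then $E[e^{sY}] \leq \exp\!\bigl(s^2 (b-a)^2/8\bigr)$. I would prove it by convexity of the exponential: for $y \in [a,b]$, write $y = \lambda b + (1-\lambda) a$ with $\lambda = (y-a)/(b-a)$, so $e^{sy} \leq \lambda e^{sb} + (1-\lambda) e^{sa}$; take expectations using $E[Y]=0$, then take the logarithm, obtaining a function $\phi(s)$ of the single parameter $s(b-a)$. A direct computation shows $\phi(0)=\phi'(0)=0$ and $\phi''(s) \leq (b-a)^2/4$ uniformly, so a second-order Taylor expansion yields the claimed bound with the sharp constant $1/8$.

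Inserting the lemma for each $i$ gives
\[
P[S \geq \alpha N] \;\leq\; \exp\!\left(-s\alpha N + \tfrac{s^2}{8}\sum_{i=1}^N R_i^2\right).
\]
Optimising over $s>0$ yields $s^{\star} = 4\alpha N / \sum_i R_i^2$, and substituting gives the one-sided bound $\exp\!\bigl(-2\alpha^2 N^2 / \sum_i R_i^2\bigr)$. The two-sided statement then follows by applying the same argument to $-S$ and using a union bound (which in the usual statement introduces a factor $2$ on the right-hand side that appears to have been absorbed or suppressed in the display above). The only nontrivial step is Hoeffding's lemma itself: the Chernoff scheme and the optimisation in $s$ are essentially mechanical, but extracting the sharp constant $(b-a)^2/8$ (rather than the cruder $(b-a)^2/2$ one gets from the obvious bound $|Y|\leq b-a$) is what makes the exponent $2\alpha^2 N^2/\sum_i R_i^2$ come out correctly, and it is where the real work lies.
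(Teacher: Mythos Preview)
Your proof is the standard and correct argument for Hoeffding's inequality: Chernoff's method combined with Hoeffding's lemma, followed by optimisation in the exponential tilting parameter. You also correctly flag that the two-sided bound as displayed suppresses the usual factor of $2$ from the union bound.

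There is, however, nothing to compare against: the paper does not prove this theorem. It is introduced with the phrase ``we state for completeness Theorem~\ref{th:hoeffding} which is the main tool we use to invoke typicality'' and is then used as a black box in the proof of Theorem~\ref{th:workext}. So your write-up supplies a proof where the paper simply quotes a standard result.
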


\begin{theorem}\label{th:workext}
Given any non-passive, diagonal state $\rho$ corresponding to some Hamiltonian $H$. Then for any probability $\varepsilon>0$, there exists $m$ such that given $m$ copies of $\rho$, it is possible to extract a non-zero amount of work, except with probability $\varepsilon$.
\end{theorem}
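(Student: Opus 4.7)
The plan is to exhibit a population inversion that, on a single copy of $\rho$, extracts positive work on average, and then use Hoeffding concentration over $m$ independent copies to make the result deterministic up to probability $\varepsilon$. Since $\rho$ is diagonal in the energy eigenbasis we have $[\rho,H]=0$, so the Pusz--Lenard characterisation stated just before the theorem tells us non-passivity is equivalent to the existence of indices $i,j$ with $E_i>E_j$ but $p_i>p_j$. Fix such a pair and set $\Delta E:=E_i-E_j>0$ and $\Delta:=(p_i-p_j)\,\Delta E>0$; this is the per-copy average work I will target.

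Next I would realise the inversion as an honest, energy-conserving (hence thermal) operation. Adjoin a work battery $W$ with equally spaced energy ladder of gap $\Delta E$, initialised in some $|E_0\>$ with $E_0\geq \Delta E$, and let $U$ be the unitary that implements $|i\>_S|w\>_W\leftrightarrow |j\>_S|w+\Delta E\>_W$ and acts as the identity on the orthogonal complement. By construction $[U,H_S+H_W]=0$, so this is a thermal operation (no heat bath is even required). A direct calculation shows that on the input $\rho\otimes|E_0\>\<E_0|$ the gain in battery energy is a random variable $w$ taking values $+\Delta E$, $-\Delta E$, $0$ with probabilities $p_i$, $p_j$, $1-p_i-p_j$, so that $|w|\leq \Delta E$ and $\esp[w]=\Delta>0$.

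Applying this construction independently to each of $m$ i.i.d. copies of $\rho$ (each with its own battery) yields a total extracted work $W_m=\sum_{k=1}^m w_k$, a sum of independent random variables in $[-\Delta E,\Delta E]$ with mean $m\Delta$. Theorem~\ref{th:hoeffding} applied with deviation $\alpha=\Delta/2$ gives
\begin{equation}
P\!\left[W_m\leq \tfrac{m\Delta}{2}\right]\leq P\!\left[|W_m-m\Delta|\geq \tfrac{m\Delta}{2}\right]\leq \exp\!\left(-\tfrac{m\Delta^2}{8(\Delta E)^2}\right),
\end{equation}
which is at most $\varepsilon$ as soon as $m\geq 8(\Delta E)^2\Delta^{-2}\log(1/\varepsilon)$. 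For any such $m$ the strictly positive amount $W_m\geq m\Delta/2>0$ is extracted with probability at least $1-\varepsilon$, which is exactly the content of the theorem.

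The only non-routine point I anticipate is verifying that $U$ is a bona fide thermal operation: one must check energy conservation $[U,H_S+H_W]=0$ and supply enough initial battery energy $E_0$ so that the branch $|j\>|E_0\>\mapsto |i\>|E_0-\Delta E\>$ lands in the physical ladder; matching the battery gap to $\Delta E$ and taking $E_0\geq \Delta E$ makes both conditions automatic. Everything else is the elementary concentration step, and no R\'enyi-divergence machinery or thermal reservoir is needed for the argument.
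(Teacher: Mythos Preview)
Your proof is correct and follows essentially the same approach as the paper: identify a population inversion $(i,j)$ from non-passivity, realise it via an energy-conserving swap with a battery ladder of gap $\Delta E=E_i-E_j$, and apply Hoeffding to the resulting i.i.d.\ work increments. The only cosmetic difference is that the paper uses a \emph{single} battery initialised at level $|m\>$ and applies the swaps $U_m\cdots U_1$ sequentially, whereas you give each copy its own battery; both variants yield the same bounded i.i.d.\ random variables and the same concentration bound, so this does not affect the argument.
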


\begin{proof}
There are two main steps in this proof: first we construct the unitary that performs work extraction from a single copy of $\rho$, while storing the work in a battery system $B$ similar to that of \cite{Skrzypczyk13}. Subsequently, we form the joint unitary over the $m$ systems and the battery system, and use the Hoeffding inequality to bound the amount of extracted work, albeit with some small failure probability.

Consider a non-passive $\rho_{A_k}$ on system $A_k$, $1\leq k\leq m$. Then by the definition of passivity, there exists some $i,j$ where $E_i > E_j$, $p_i > p_j$ holds. Construct the battery as a harmonic osciilator system with Hamiltonian $H_B = \sum_{n=1}^N n\hbar\omega |n\rangle\langle n|$, where $\hbar\omega=E_i-E_j$. Define the lowering operator $\hat{a}_B=\sum_{i=1}^N |i-1\rangle\langle i|$ and the joint energy-preserving unitary over system $A_k$ and $B$,
\begin{equation}
U_k = |i\rangle\langle j|_{A_k}\otimes \hat{a}_B + |j\rangle\langle i|_{A_k} \otimes \hat{a}^\dagger_B + |i\rangle\langle i|_{A_k}\otimes|N\rangle\langle N|_B+|j\rangle\langle j|_{A_k}\otimes|0\rangle\langle 0|_B+\sum_{r\neq i,j} |r\rangle\langle r|_{A_k} \otimes \mathbb{I}_B.
\end{equation}
Clearly, this unitary acted upon the initial joint state $\rho_{A_k}\otimes|m\rangle\langle m|_B$ extracts $\hbar\omega \geq 0$ amount of work into the battery system with probability $p_i$, and $-\hbar\omega$ with probability $p_j$, while the expectation value of work extracted is given by the difference in expected energy in the battery system, $\langle W \rangle = \hbar\omega (p_i-p_j) > 0$.

Now, similarly, consider the initial state $\rho_{A_1}\otimes \cdots \otimes \rho_{A_m} \otimes |m\rangle\langle m|_B$, and the unitary transformation $U_m \cdots U_1$. Each unitary raises, lowers or leave unchanged the battery state with certain probabilities; hence this operation can be represented with a string $x = x_1\cdots x_m$ numbers, where for all $y=1, \cdots, m$, 
\begin{equation}\label{eq:probdist}
p(x_y=c)=\left \{
\begin{array}{ll}
p_i & (c=1); \\
p_j & (c=-1);\\
1-p_i-p_j &(c=0).
\end{array}
\right.
\end{equation}

The total amount of work extracted in this process is equal to $W_T=x_t \hbar\omega$, where $x_T=\sum_{i=1}^m x_i$, while the expectation value of $x_T$ is $\langle x_T\rangle = m (p_i-p_j)>0$. Since $x_1, \cdots, x_m$ are i.i.d. random variables with bounded values between -1 and 1, we can invoke Theorem \ref{th:hoeffding} for some small number $\alpha>0$, obtaining
\begin{equation}
P[x_T \leq \langle x_T\rangle-\alpha m] \leq P[|x_T-\langle x_T\rangle |\geq \alpha m] \leq {\rm exp}\left[\frac{-2\alpha^2m^2}{\sum_{i=1}^m 2^2}\right] = {\rm exp}\left[\frac{-\alpha^2m}{2}\right] = \varepsilon,
\end{equation}
Therefore, we conclude that $W_T \geq \left(\langle x_T\rangle-\alpha m\right) \hbar\omega >0$ except with probability $\varepsilon$. For any $\varepsilon>0$, pick some small $\alpha$ and $m=\frac{2}{\alpha^2}\log\frac{1}{\varepsilon}$ suffices.
\end{proof}

\begin{theorem}\label{th:zeroeth}
Given a fixed Hamiltonian $H$, if $\rho'$ is not the Gibbs state or the ground state corresponding to the Hamiltonian $H$ at a certain temperature $T$, then for any $\varepsilon>0$, there exists $n$ such that given $n$ copies $\rho'$, it is possible to extract a non-zero amount of work, except with probability at most $\varepsilon$.
\end{theorem}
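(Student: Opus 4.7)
The strategy is to reduce the statement to Theorem \ref{th:workext}, which delivers strictly positive work extraction (up to probability $\varepsilon$) from any non-passive diagonal state. The reduction splits into two cases, according to whether $\rho'$ commutes with $H$ or has genuine coherences in the energy eigenbasis.

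\emph{Diagonal case.} Suppose first that $[\rho', H] = 0$. By the Pusz--Woronowicz theorem in its finite-dimensional form due to Lenard (cf.\ the fact quoted just before Theorem \ref{th:workext}), the only completely passive states are Gibbs states at some inverse temperature and ground states. Since by hypothesis $\rho'$ is neither, it is not completely passive, so by the very definition of complete passivity there exists a finite integer $k$ such that $\rho'^{\otimes k}$, regarded as a single state on $k$ copies with Hamiltonian $\sum_{i=1}^{k} H^{(i)}$, is not passive. Theorem \ref{th:workext} then applies to this non-passive composite state: for the given $\varepsilon > 0$, there is an $m$ such that $m$ copies of the block $\rho'^{\otimes k}$, i.e., $n = mk$ copies of $\rho'$, permit extraction of a strictly positive amount of work except with failure probability at most $\varepsilon$.

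\emph{Coherence case.} If $\rho'$ has nonzero off-diagonal elements in the energy basis, first apply the full dephasing $\mathcal{D}$ in that basis. The map $\mathcal{D}$ is a thermal operation, since it equals $\lim_{T\to\infty}\tfrac{1}{T}\int_0^T e^{-iHt}(\cdot)e^{iHt}\, dt$, a convex combination of unitaries commuting with $H$, and can therefore be implemented by an energy-conserving unitary on a suitable ancilla. Following \cite{Skrzypczyk13}, dephasing many copies combined with a Landauer-style erasure procedure extracts an amount of work proportional to the relative entropy $D(\rho' \| \mathcal{D}(\rho')) > 0$ per copy. If $\mathcal{D}(\rho')$ is neither Gibbs nor a ground state, we then iterate by invoking the diagonal case on the remaining copies of $\mathcal{D}(\rho')$; if it is, the coherence-erasure step alone already delivers strictly positive deterministic work.

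The main obstacle is the coherence subcase. Naive dephasing destroys exactly the resource one wants to exploit, so the argument has to be packaged as a coherent protocol, as in \cite{Skrzypczyk13}, that transfers the free-energy surplus stored in the coherences onto an explicit battery before (or while) the dephasing is completed. Once this is done, the remaining work is routine: the total extracted work is a sum of i.i.d.\ random variables with strictly positive mean, so a Hoeffding estimate of the kind used in the proof of Theorem \ref{th:workext} shows it is strictly positive except on an event of probability at most $\varepsilon$, and the failure probabilities of the coherence-extraction and diagonal-extraction steps combine by a union bound.
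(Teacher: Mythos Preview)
Your diagonal case is exactly the paper's proof: invoke Pusz--Woronowicz/Lenard to conclude that $\rho'$ is not completely passive, choose $a$ (your $k$) so that $\rho'^{\otimes a}$ is non-passive, and apply Theorem~\ref{th:workext} to that block, yielding $n = am$ copies.

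The paper's proof of Theorem~\ref{th:zeroeth} itself addresses \emph{only} the diagonal case; the coherence case is dispatched in a single sentence in the preamble to Theorem~\ref{th:workext}, by citation to \cite{Skrzypczyk13}. Your coherence discussion is therefore more elaborate than what the paper provides, and in fact more elaborate than necessary: since any state not commuting with $H$ is already non-passive (by the Fact quoted before Theorem~\ref{th:workext}), the issue is only that Theorem~\ref{th:workext} is stated for diagonal inputs, which is precisely why the paper defers to \cite{Skrzypczyk13} rather than reproving it. Your sketch of that subcase is reasonable but, as you yourself flag, not fully rigorous; this matches the paper's level of detail, which also stops at a citation.
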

\begin{proof}
The proof comes from a straightforward realisation that any such $\rho'$ is not \emph{completely} passive, hence there exists some positive integer $a$ such that $\rho'^{\otimes a}$ is non-passive. One can then invoke Theorem \ref{th:workext} with $\rho = \rho'^{\otimes a}$. For any $\varepsilon>0$, pick some small $\alpha$ and $n=am=a\frac{2}{\alpha^2}\log\frac{1}{\varepsilon}$ suffices.
\end{proof}

\section{Catalysis with general Hamiltonians}
\label{sec:adultHamil}

We now turn to the case of the full theory of thermodynamics, where we have an interplay between energy and entropy. In this case, the Hamiltonian of the system and reservoir are arbitrary.
To derive the conditions for state transformations, we will need a generalisation of the majorisation condition, known as $d$-majorisation, which we describe in \ref{ss:dmaj}. We then derive the catalytic version of
this, in \ref{sec:ruch_trumping}, theorem \ref{th:catalysis}. Next we show that it is sufficient to consider catalysts diagonal w.r.t its Hamiltonian, when the initial state is already block diagonal in the energy eigenbasis.
We can then apply our results to the case of thermodynamics to derive second laws -- these are stated seperately in Section \ref{sec:secondlaws}. We will later in Section \ref{sec:inexact} discuss the 
case where the thermodynamic processes do not return the catalyst in exactly the same state, but only approximately close.

\subsection{$d$-majorization} 
\label{ss:dmaj}

\begin{definition}
Given probability distributions $p, q, p', q'$, we say that $(p,q)$ $d$-majorizes $(p',q')$ if and only if for any convex function $g$, 
\be
\sum_i q_i g\left(\frac{p_i}{q_i} \right) \geq \sum_i q_i' g \left(\frac{p_i'}{q_i'} \right).
\ee
We denote this as $d(p || q) \succ d(p' || q')$.
\end{definition}

The Birkhoff-von Neumann theorem relates majorization to transition between states, which we state here:
\begin{theorem}
For two probability distributions $p$ and $p'$ the following two conditions are equivalent:
\bei
\item[(i)] $p$ majorizes $p'$.
\item[(ii)] there exists a channel $\Lambda$ such that 
\be
\Lambda(p) = p', \quad \Lambda(\eta)=\eta,
\ee
where $\eta$ is the uniform distribution.
\eei
\label{prop:major}
\end{theorem}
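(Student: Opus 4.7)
The plan is to prove the two directions separately, exploiting the fact that any channel $\Lambda$ on probability distributions is represented by a (column-)stochastic matrix, and that the condition $\Lambda(\eta)=\eta$ together with stochasticity forces $\Lambda$ to be doubly stochastic.

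For the direction (ii) $\Rightarrow$ (i), I would first observe that $\Lambda$ being a classical channel means its matrix entries $\Lambda_{ij}$ are nonnegative with $\sum_i \Lambda_{ij}=1$ for every $j$. The additional condition $\Lambda(\eta)=\eta$ gives $\sum_j \Lambda_{ij}=1$ for every $i$, so $\Lambda$ is doubly stochastic. By the Birkhoff-von Neumann theorem, one can write $\Lambda = \sum_k \lambda_k P_k$ as a convex combination of permutation matrices with $\lambda_k \geq 0$ and $\sum_k \lambda_k=1$. Then $p' = \Lambda(p) = \sum_k \lambda_k P_k p$ exhibits $p'$ as a convex combination of permutations of $p$. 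By the Hardy-Littlewood-P\'olya theorem (which is precisely the characterization that $p \succ p'$ iff $p'$ lies in the convex hull of permutations of $p$), this yields $p \succ p'$.

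For the direction (i) $\Rightarrow$ (ii), I would construct a doubly stochastic matrix mapping $p$ to $p'$. The cleanest route is via T-transforms: a T-transform is a doubly stochastic matrix of the form $T = tI + (1-t)Q$ where $Q$ is a transposition and $t \in [0,1]$, acting nontrivially on two coordinates by replacing $(a,b)$ with $(ta+(1-t)b, (1-t)a+tb)$. The classical result (Muirhead, also covered in Marshall--Olkin) states that $p \succ p'$ if and only if $p'$ can be obtained from $p$ by a finite sequence of T-transforms composed with a permutation. The composition of these doubly stochastic matrices is again doubly stochastic, giving the desired $\Lambda$ with $\Lambda(p)=p'$; since every T-transform fixes the uniform vector and permutations fix $\eta$, the resulting $\Lambda$ satisfies $\Lambda(\eta)=\eta$.

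The main obstacle is really just invoking the correct classical machinery rather than a genuine technical difficulty; the nontrivial direction (i) $\Rightarrow$ (ii) relies on the T-transform decomposition, whose proof proceeds by induction on the number of coordinates where $p^\downarrow$ and $p'^\downarrow$ differ, repeatedly flattening the largest discrepancy. An alternative, more self-contained route would be to note that the set of $p'$ reachable from $p$ by doubly stochastic maps is closed and convex, contains all permutations of $p$, and hence contains their convex hull, which by Hardy-Littlewood-P\'olya coincides with $\{p' : p \succ p'\}$; this gives (i) $\Rightarrow$ (ii) with essentially no calculation once Birkhoff-von Neumann is in hand.
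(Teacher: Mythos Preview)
Your proof is correct and follows the standard route through doubly stochastic matrices, Birkhoff--von~Neumann, and either T-transforms or the convex-hull characterization. Note, however, that the paper does not actually prove this theorem: it simply states it as the classical Birkhoff--von~Neumann result (with a cross-reference to \cite{majorization_Marshall_2011}) and moves on to its $d$-majorization generalization. So there is no ``paper's own proof'' to compare against; you have supplied a valid proof where the paper supplies none.
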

%Due to Klimesh-Turgut result we know, that if catalysis is allowed, the conditions can be relaxed. 

The generalization of Birkhoff-von-Neumann theorem to $d$-majorization is as follows:

\begin{proposition}[Theorem 2, \cite{RuchSS-1978}]
For probability distributions $p, p', q, q'$ the following two conditions are equivalent:
\bei
\item[(i)] $(p,q)$ $d$-majorizes $(p',q')$,
\be
d(p || q) \succ d(p' || q').
\ee
\item[(i)] There exists a channel $\Lambda$ such that 
\be
\Lambda(p)=p',\quad \Lambda(q)=q'.
\ee
\eei
\label{prop:Ruch}
\end{proposition}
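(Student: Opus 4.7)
The strategy has two prongs, one easy and one requiring a convexity/LP-duality argument.

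For the easy direction (ii) $\Rightarrow$ (i), the plan is to apply Jensen's inequality. Writing $\Lambda$ as a column-stochastic matrix $\Lambda_{ji}$, the relations $\Lambda p = p'$, $\Lambda q = q'$ give, for each $j$ with $q'_j>0$, the convex combination
\be
\frac{p'_j}{q'_j}=\sum_i \frac{\Lambda_{ji}q_i}{q'_j}\cdot\frac{p_i}{q_i},
\ee
with weights summing to one in $i$. Applying Jensen's inequality to a convex $g$, multiplying by $q'_j$, and summing over $j$ yields $\sum_j q'_j g(p'_j/q'_j)\leq \sum_i q_i g(p_i/q_i)$. Indices with $q'_j=0$ force $\Lambda_{ji}=0$ on the support of $q$ and can be handled separately by adopting the usual conventions $0\cdot g(0/0)=0$.

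For the hard direction (i) $\Rightarrow$ (ii), the plan is a separating-hyperplane argument. The set $\mathcal{P}=\{(\Lambda p,\Lambda q):\Lambda\text{ column stochastic}\}$ is the image of a polytope under a linear map and is therefore a convex polytope. Suppose, for contradiction, that $(p',q')\notin\mathcal{P}$. By the separating hyperplane theorem there exist vectors $a,b\in\mathbb{R}^m$ with
\be
a\cdot p'+b\cdot q'\;>\;\sup_{\Lambda}\bigl(a\cdot\Lambda p+b\cdot\Lambda q\bigr).
\ee
Because the stochasticity constraint $\sum_j\Lambda_{ji}=1$ is separable in the input index $i$, the supremum decouples: for each $i$ one independently picks the column $j$ maximizing $a_j p_i+b_j q_i$, giving
\be
\sup_{\Lambda}\bigl(a\cdot\Lambda p+b\cdot\Lambda q\bigr)=\sum_i q_i\,g(p_i/q_i),\qquad g(x):=\max_j\bigl(a_j x+b_j\bigr),
\ee
where $g$ is convex as a pointwise maximum of affine functions. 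On the left, rewriting $a\cdot p'+b\cdot q'=\sum_j q'_j(a_j(p'_j/q'_j)+b_j)$ and using $a_j x+b_j\leq g(x)$ pointwise, one obtains $a\cdot p'+b\cdot q'\leq \sum_j q'_j g(p'_j/q'_j)$. Combining the two bounds contradicts (i) for this particular convex $g$.

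The main obstacle will be the bookkeeping: handling the degenerate cases where $q_i=0$ or $q'_j=0$ (where the ratios are ill-defined but $d$-majorization must still make sense via limits), and ensuring that the convex function $g$ produced by the separation argument lies in the class appearing in the definition of $d$-majorization. A useful reduction is that it suffices to test $d$-majorization against the family $g_t(x)=(x-t)_+$, since any piecewise-linear convex $g$ is a positive combination of such hinge functions plus an affine term (and affine terms give equality from $\sum_i p_i=\sum_j p'_j=1$ and $\sum_i q_i=\sum_j q'_j=1$). With this observation in hand, the separation argument above reduces to verifying $\sum_i(p_i-tq_i)_+\geq\sum_j(p'_j-tq'_j)_+$ for all $t\geq 0$, which is precisely the ``thermo-majorization/Lorenz-curve'' statement already used elsewhere in the paper.
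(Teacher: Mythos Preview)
Your proof is correct, but there is nothing to compare it against: the paper does not prove this proposition. It is quoted verbatim as Theorem~2 of Ruch--Schranner--Seligman~\cite{RuchSS-1978} and used as a black box; the paper's own contribution begins only when it generalises this to the catalytic setting in Theorem~\ref{th:catalysis}.

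As for the argument itself, both directions are the standard ones. Jensen for (ii)$\Rightarrow$(i) is exactly right. For (i)$\Rightarrow$(ii), the separating-hyperplane construction is clean and essentially the textbook proof; your observation that the supremum over column-stochastic $\Lambda$ decouples columnwise into $\sum_i\max_j(a_jp_i+b_jq_i)$ is the key step, and rewriting this as $\sum_iq_ig(p_i/q_i)$ with $g(x)=\max_j(a_jx+b_j)$ is correct when $q$ has full support. The degenerate cases you flag (some $q_i=0$ or $q'_j=0$) are real but not needed for the paper's purposes, since in every application here $q=q'$ is a Gibbs state and hence has full rank. Your final reduction to the hinge functions $g_t(x)=(x-t)_+$ is also correct and indeed recovers the thermo-majorization curve criterion of Figure~\ref{fig:thermomaj}; this is a nice way to see that the uncountable family of convex tests collapses to a one-parameter family, which is what makes $d$-majorization checkable in practice.
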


In fact it has been showned that a particular limited set of convex functions is sufficient. In the case when $q=q'$, the conditions can be  
expressed by the so-called {\it thermo-majorization} diagrams 
identified in \cite{HO-limitations}. The thermo-majorization diagrams can be easily extended also to the case $q \neq q'$, although this is not relevant for our application to thermodynamics in which $q=q'$ is the Gibbs state. 

One can see that Proposition \ref{prop:Ruch} implies Theorem \ref{prop:major}, by taking $q=q'=\eta$ and noting that
\be
p \succ p' \quad \leftrightarrow \quad d(p |\eta) \succ d(p'|\eta).
\ee

A natural question is whether there is a trumping analogue for $d$-majorization. We show that this is the case
in section \ref{sec:ruch_trumping}, where we prove a version of Prop. \ref{prop:Ruch} allowing catalysis. We will recover analogous relations to the Klimesh-Turgut ones in the case where $q=q'=\eta$. We illustrate the situation in Table \ref{tab:orders}.
\begin{table*}[h!]
	\centering
		\begin{tabular}{|c|c|c|} \hline
			& standard majorization & d-majorization \\
			\hline
		no catalysis	& $d(p|\eta) \succ d(p|\eta)$ & $d(p|q) \succ d(p'|q')$ \\ \hline
		            	& $\exists \Lambda: \,\,  \Lambda(p)=p',\quad  \Lambda(\eta)=\eta$ &
		            	$\exists \Lambda: \,\,  \Lambda(p)=p', \quad  \Lambda(q)=q'$ \\ \hline
		            	& $\forall_l\sum_{i=1}^l p_i\geq \sum_{i=1}^l p_i'$ & comparing diagrams (Fig. \ref{fig:thermomaj})
		            	 \\ \hline \hline
	    catalysis	& $D_\alpha(p\|\eta) \geq D_\alpha(p'\|\eta)$ 
	     & $D_\alpha(p\|q) \geq D_\alpha(p'\|q')$ \\ \hline
	    & $\exists \Lambda,r: \,\,  \Lambda(p\ot r )=p'\ot r, \quad  \Lambda(\eta\ot \tilde \eta)=\eta\ot \tilde \eta$ &
		            	$\exists \Lambda,r,s: \,\,  \Lambda(p\ot r )=p'_\ep\ot r,\quad  \Lambda(q\ot s)=q'\ot s $ \\ \hline
		\end{tabular}
	\caption{Partial orderings as criteria for state transformations. $p'_\epsilon$ is an arbitrarily good approximation of $p'$.}
	\label{tab:orders}
\end{table*}

\subsection{Notations and technical tools} 
\label{ss:dmaj_catalysis_tools}

Before setting out to prove the main result in Section \ref{sec:ruch_trumping}, we list several technical tools and lemmas that we will use.

We will start by describing an embedding channel, which will be used later to prove the above results. Consider the simplex of probability distributions $\{p_i\}_{i=1}^k$ and a set of natural numbers $\{d_i\}_{i=1}^k, i=1,\ldots, k$, and let $N=\sum_{i=1}^k d_i$. 
We define the embedding 
\be 
\Gamma(p)= \oplus_{i} p_i \eta_i ,
\label{eq:gamma}
\ee
where $\eta_i = \{\frac{1}{d_i} ,\ldots, \frac{1}{d_i}\}$ is the uniform distribution. More clearly, the image can be writen as the following $N$-dimensional probability distribution:
\be
\Gamma(p) = \biggl\{\underbrace{\frac{p_1}{d_1},\ldots, \frac{p_1}{d_1} }_{d_1},\ldots, 
\underbrace{\frac{p_k}{d_k},\ldots, \frac{p_k}{d_k} }_{d_k} \biggr\}.
\ee 
The inverse map $\Gamma^*$ acts on the space of $N$-dimensional probability distributions $\tilde p$, where
\be
\tilde p= \oplus_{i=1}^k \tilde p^{(i)},
\ee
with each $\tilde p^{(i)} = \{ \tilde p_1^{(i)}, \cdots, p_{d_i}^{(i)} \}$ being a (unnormalized) $d_i$-dimensional probability distribution. $\Gamma^*$ can be written as 
\be
\Gamma^*(\tilde p) = r,
\ee
with $r=\{r_i\}_{i=1}^k$ being a normalized probability distribution with the form $r_i=\sum_{j=1}^{d_i}\tilde p_j^{(i)}$.
The maps $\Gamma$ and $\Gamma^*$ are channels, and for all probability distributions $p$ we have $\Gamma^* (\Gamma(p)) =p$. Moreover, for the specific state $\gamma = \{ \frac{d_1}{N}, \ldots, \frac{d_k}{N} \}$, 
\be
\label{embeddedgamma}
\Gamma(\gamma) = \id/N
\ee 
is the uniform $N$-dimensional distribution. Recall that $\sum_{i=1}^k \frac{d_i}{N} = N\cdot \frac{1}{N} =1$, and $\gamma$ is normalized.

Also, we will be making use of the following simple lemmas, which we state here for the reader's convenience. Lemma \ref{lem:rel_ent_H} relates the R{\'e}nyi divergence $D_\alpha (p\|q)$ to $D_\alpha (\Gamma(p)\|\Gamma(q))$. Lemma \ref{strictaux} describes the change in $D_\alpha (p\|q)$ when $p$ is replaced by a mixture of distributions $p$ and $q$. Lemma \ref{lem:channelrevised} is a technical tool that enables us to work with distributions containing irrational probability values, by introducing small corrections such that we need only to consider rational values. Lastly, Lemma \ref{lem:direct_sum} shows us when a channel can be writen as a direct sum of two channels acting disjointly on partitions of the total input/output space. 

\begin{lemma}
\label{lem:rel_ent_H}
Let $p=\{p_i\}_{i=1}^k$ be an ordered probability distribution, and $\{d_i\}_{i=1}^k$be natural numbers with $\sum_{i=1}^k d_i=N$.
Define the following fine-grained, $N$-dimensional probability distribution
\be
\tilde p = \biggl\{\underbrace{\frac{p_1}{d_1},\ldots, \frac{p_1}{d_1} }_{d_1},\ldots, 
\underbrace{\frac{p_k}{d_k},\ldots, \frac{p_k}{d_k} }_{d_k} \biggr\},
\ee 
and let $\gamma$ be the $k$-dimensional probability distribution 
\be
\gamma = \left \{ \frac{d_1}{N}, \ldots, \frac{d_k}{N} \right \}.
\ee
Then for $\alpha\in[-\infty, \infty]$ we have 
\be
D_\alpha(p ||  \gamma) =  D_{\alpha}( \tilde p || \eta_N),
\ee
with $\eta_N = (1/N, \ldots, 1/N)$ the uniform distribution. By \eqref{embeddedgamma}, this means that equivalently 
\be
D_\alpha(p ||  \gamma) =  D_{\alpha}( \Gamma (p) || \Gamma(\gamma)).
\ee
%and moreover 
%\be
%D_1(\tilde q|p)= \log N - \frac1N f_0(\tilde p)
%\label{eq:dqp_vs_H}
%\ee
%where $f_0$ as a function of $x=(x_1,\ldots,x_N)$ is defined as defined 
%as 
%\be
%f_0(x)=-\sum_{i=1}^N \log x_i
%\ee
%(as in \eqref{eq:f2}). 
\end{lemma}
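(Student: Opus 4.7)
The plan is to verify the identity by a direct calculation from the definition of $D_\alpha$, treating the generic range $\alpha \in \mathbb{R}\setminus\{0,1\}$ first and then handling the boundary values $\alpha \in \{0,1,\pm\infty\}$ separately by continuity or via the explicit limiting formulas given in Section \ref{preliminaria}.

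For the main range, I would fix $\alpha \notin \{0,1\}$ and compute
\[
\sum_{m=1}^{N} \tilde p_m^{\,\alpha}\,\eta_{N,m}^{1-\alpha}
\;=\;\sum_{i=1}^{k}\sum_{j=1}^{d_i}\Bigl(\tfrac{p_i}{d_i}\Bigr)^{\!\alpha}\!\Bigl(\tfrac{1}{N}\Bigr)^{\!1-\alpha}
\;=\;\sum_{i=1}^{k} d_i\cdot\tfrac{p_i^{\alpha}}{d_i^{\alpha}}\cdot N^{\alpha-1}
\;=\;\sum_{i=1}^{k} p_i^{\alpha}\Bigl(\tfrac{d_i}{N}\Bigr)^{\!1-\alpha},
\]
where the inner sum over $j$ simply produces the factor $d_i$ because every entry in the $i$-th block of $\tilde p$ is identical. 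This last expression is precisely $\sum_i p_i^{\alpha}\gamma_i^{1-\alpha}$. Multiplying by $\sgn(\alpha)/(\alpha-1)$ and taking the logarithm gives $D_\alpha(p\|\gamma)=D_\alpha(\tilde p\|\eta_N)$ as required.

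For the boundary values, I would handle them one at a time. The cases $\alpha = 0, 1, \infty$ follow from the explicit formulas: for $\alpha=\infty$ one checks $\max_m \tilde p_m/\eta_{N,m} = \max_i (p_i/d_i)/(1/N) = \max_i p_i/\gamma_i$, and symmetrically for $\alpha=-\infty$; for $\alpha=0$ one notes $\sum_{m: \tilde p_m\neq 0}\eta_{N,m} = \sum_{i: p_i\neq 0} d_i/N = \sum_{i: p_i\neq 0}\gamma_i$; and for $\alpha=1$ the relative entropy is computed directly as $\sum_{i,j}(p_i/d_i)\log\bigl((p_i/d_i)/(1/N)\bigr) = \sum_i p_i\log(p_i/\gamma_i)$. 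Alternatively, each of these follows immediately from the already-established identity for $\alpha\in\mathbb{R}\setminus\{0,1\}$ together with the continuity (in $\alpha$) used to define $D_\alpha$ at the boundary.

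There is really no significant obstacle here: the statement is a coarse-graining identity, and the key observation is just that $\tilde p$ is constant on each of the $k$ blocks so that the $N$-term sum collapses into a $k$-term sum with multiplicity $d_i$, which is exactly absorbed by the factor $(1/N)^{1-\alpha}/d_i^{-\alpha} = (d_i/N)^{1-\alpha}/d_i$. The final equality $D_\alpha(p\|\gamma) = D_\alpha(\Gamma(p)\|\Gamma(\gamma))$ is then an immediate rewriting, since by construction $\tilde p = \Gamma(p)$ and $\Gamma(\gamma) = \id/N = \eta_N$ as noted in \eqref{embeddedgamma}.
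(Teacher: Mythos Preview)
Your proof is correct and follows essentially the same approach as the paper's own proof: a direct computation of the sum $\sum_i p_i^\alpha \gamma_i^{1-\alpha}$ (or equivalently $\sum_m \tilde p_m^\alpha \eta_{N,m}^{1-\alpha}$) for generic $\alpha$, followed by explicit verification at the boundary values $\alpha\in\{0,1,\pm\infty\}$. The only cosmetic difference is that the paper expands $D_\alpha(p\|\gamma)$ into the fine-grained sum whereas you collapse $D_\alpha(\tilde p\|\eta_N)$ into the coarse-grained one, but the algebra is identical.
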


\begin{proof}
Let us first assume that $\alpha\not \in \{-\infty,0,1,\infty\}$. Then
\begin{align}
D_\alpha(p || \gamma) & = \frac{1}{\alpha-1} \log\sum_{i=1}^k p_i^\alpha \left(\frac{d_i}{N} \right)^{1-\alpha}\\
& = \frac{1}{\alpha-1} \log\sum_{i=1}^k d_i \left(\frac{p_i}{d_i}\right)^\alpha \left(\frac{1}{N} \right)^{1-\alpha}\\ 
&= \frac{1}{\alpha-1} \log\sum_{i=j}^N \tilde{p}_j^\alpha \left(\frac{1}{N} \right)^{1-\alpha}=D_{\alpha}(\tilde p || \eta_N),
\end{align}
where note that in the third inequality we sum over the fine-grained distribution $\tilde{p}$, and for each $i\in\{1,\cdots,k\}$, $\tilde{p}$ contains $d_i$ number of degenerate values $\frac{p_i}{d_i}$.
For $\alpha\in\{-\infty, 0,1,\infty\}$ one can obtain the relation by considering limits. Here we show this explicitly. For $\alpha=0$ we have 
\be
D_0( p ||\gamma)= -\log \sum_{i:p_i\not=0} \gamma_i=-\log \sum_{j:p_j\not=0}\frac{d_i}{N}=
-\log \sum_{i:\tilde{p}_i\not=0} \frac{1}{N} = D_{0}( \tilde{p} || \eta_N).
\ee
For $\alpha\rightarrow\infty $, by using \eqref{infdivergence} we have 
\be
D_\infty(p \| \gamma) =\log \min \left\{\lambda: \forall i, \lambda \geq \frac{p_i}{d_i/N}\right\} =
\log \min \left\{\lambda:\forall i, \lambda\geq\frac{p_i/d_i}{1/N}\right\} = D_{\infty}(\tilde p || \eta_N),
\ee
and similarly for $\alpha\rightarrow-\infty$. Finally, for $\alpha\rightarrow1$, 
\be
D_1( p || \gamma )=\sum_{i=1}^k p_i \log \frac{p_i N}{d_i} = \sum_{i=1}^k d_i \frac{p_i}{d_i} \log \frac{p_i N}{d_i} = D_1(\tilde p || \eta_N).
\ee
%Finally, we compute the separate case 
%\eqref{eq:dqp_vs_H}:
%\be
%D_1(\tilde q|p)= \sum_{i=1}^k \frac{d_i}{N} \log \frac{d_i}{N p_i} = - 
%\frac1N \sum_{i=1}^k \frac{d_i}\log \frac{p_i}{d_i} -  \log N = \frac1N f_0(\tilde p) -\log N
%\ee
\end{proof}

\begin{lemma} \label{strictaux}
Let $p \neq q$ be distributions with $q$ full rank. Then for all $\alpha \in (- \infty, \infty)$ and all $0 < \delta < 1$,
\begin{equation}
\label{strictineq}
D_{\alpha}((1 - \delta)p + \delta q || q) < D_{\alpha}(p  || q).
\end{equation}
\end{lemma}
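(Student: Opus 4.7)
The plan is to reduce the claim to a strict comparison of two expectations taken against $q$. Since $q$ is full rank, define the likelihood ratio $r_i := p_i/q_i$, and note that for finite $\alpha \notin \{0,1\}$,
\begin{equation}
D_\alpha(p\|q) \;=\; \frac{\sgn(\alpha)}{\alpha - 1}\, \log \esp_q[r^\alpha], \qquad \esp_q[f(r)] := \sum_i q_i f(r_i),
\end{equation}
and the mixture $p^{(\delta)}:=(1-\delta)p+\delta q$ obeys $p^{(\delta)}_i/q_i = (1-\delta) r_i + \delta$, a strict convex combination of $r$ with the constant $1$. Because $p \neq q$ and $q$ has full support, $r$ is not constant, and in particular some index of positive $q$-weight has $r_i \neq 1$.

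The engine of the proof is the strict convexity or concavity of $x \mapsto x^\alpha$, which I would apply twice. First, pointwise, for each $i$ with $r_i \neq 1$ one has $((1-\delta) r_i + \delta)^\alpha < (1-\delta)r_i^\alpha + \delta$ when $x^\alpha$ is strictly convex (i.e.\ $\alpha > 1$ or $\alpha < 0$), and the reverse strict inequality when $x^\alpha$ is strictly concave (i.e.\ $0 < \alpha < 1$); averaging against $q$ preserves strictness. Second, Jensen's inequality applied to $\esp_q[r]=1$ with non-constant $r$ yields $\esp_q[r^\alpha] > 1$ in the convex regime and $\esp_q[r^\alpha] < 1$ in the concave regime. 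Combining the two inputs I would deduce that $\esp_q[(r^{(\delta)})^\alpha]$ lies strictly between $1$ and $\esp_q[r^\alpha]$.

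Plugging back into the formula for $D_\alpha$, the prefactor $\sgn(\alpha)/(\alpha - 1)$ is positive on $\alpha > 1$ and $\alpha < 0$ and negative on $0 < \alpha < 1$; in each case the sign of the prefactor combines with the direction of the inequality on the expectations to yield $D_\alpha(p^{(\delta)}\|q) < D_\alpha(p\|q)$. The boundary values I would handle directly: $\alpha = 1$ from strict convexity of $D_1(\cdot\|q)$ in its first argument together with $D_1(q\|q) = 0 < D_1(p\|q)$; $\alpha = \infty$ from $\max_i r^{(\delta)}_i = (1-\delta)\max_i r_i + \delta < \max_i r_i$, using $\max_i r_i > 1$ which is forced by $p \neq q$ and $\sum_i p_i = \sum_i q_i = 1$; and dually $\alpha = -\infty$ from $\min_i r^{(\delta)}_i > \min_i r_i$, using $\min_i r_i < 1$.

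The main obstacle is simply the bookkeeping of signs: both the direction of Jensen's inequality and the prefactor $\sgn(\alpha)/(\alpha-1)$ flip as $\alpha$ crosses $1$, and the strict inequality on $D_\alpha$ survives only because these two flips happen in sync. A subtle edge case is $\alpha = 0$: when $p$ is not full rank, $D_0(p\|q) = -\log \sum_{i:p_i \neq 0} q_i > 0$ while $D_0(p^{(\delta)}\|q) = 0$ (since $p^{(\delta)}$ inherits full support from $q$), so strictness is immediate; the boundary case of $p$ full rank can be obtained as a limit from nearby $\alpha \neq 0$ using continuity of $D_\alpha$ in $\alpha$ on this neighbourhood.
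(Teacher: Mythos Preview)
Your argument for finite $\alpha \notin \{0,1\}$ is correct and coincides with the paper's: both reduce to the convexity or concavity of $x \mapsto x^\alpha$ together with the observation that $\esp_q[r^\alpha]$ lies on the correct side of $1$. The paper phrases this as convexity of $F(p) = \sum_i q_i^{1-\alpha} p_i^\alpha$ in $p$ combined with $F(q) = 1 < F(p)$, which is exactly your pointwise-plus-Jensen argument in different notation. For $0 < \alpha < 1$ the paper invokes joint convexity of $D_\alpha$ from the literature, whereas you argue directly from the concavity of $x^\alpha$; your route is slightly more self-contained but otherwise equivalent. The treatment of $\alpha = 1$ is the same in both.

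There is one genuine gap, at $\alpha = 0$. Your continuity argument cannot deliver a strict inequality: limits of strict inequalities are in general only non-strict. In fact the strict inequality \emph{fails} at $\alpha = 0$ whenever $p$ has full support, since then both $D_0(p\|q)$ and $D_0(p^{(\delta)}\|q)$ vanish. The paper's own proof has the same defect---it asserts $D_\alpha(p\|q) > 0$ for $p \neq q$ throughout $[0,1]$, which is false at $\alpha = 0$ for full-rank $p$---so the lemma as stated is slightly too strong at that single point. This is harmless for the lemma's downstream use, where a non-strict inequality at the isolated value $\alpha = 0$ is all one needs and follows by continuity from the strict inequalities at nearby $\alpha$.
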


\begin{proof}
The proof is given separately for different ranges of $\alpha$.\\
In the range of $\alpha\in[0,1]$, \eqref{strictineq} is obtained by the joint convexity of the R{\'e}nyi divergence, as proven in Theorem 11 of \cite{timerven2012}. The special case of $\alpha=1$ is also proven in \cite{thomascover}. For any two sets of distributions $(A_1,B_1)$ and $(A_2,B_2)$, and a parameter $0<\delta<1$, the R{\'e}nyi divergence in this range of $\alpha$ satisfies 
\begin{equation}
D_\alpha ((1-\delta)A_1+\delta A_2 \| (1-\delta)B_1 + \delta B_2) \leq (1-\delta)D_\alpha (A_1\|B_1) + \delta D_\alpha (A_2\|B_2).
\end{equation}
Setting $B_1 = B_2 = q$, $A_1 = p$ and $A_2 = q$, then 
\begin{equation}
D_\alpha ((1-\delta)p+\delta q \| q) \leq (1-\delta)D_\alpha (p\|q) + \delta D_\alpha(q\|q) < D_\alpha (p\|q)
\end{equation}
since $D(q\|q)=0$, $\delta>0$ and $D_\alpha (p\|q)>0$ for $p\neq q$.

For $\alpha>1$ joint convexity does not hold. However, denote $r=(1-\delta)p+\delta q$, note that \eqref{strictineq} is equivalent to
\begin{equation}\label{convexsum}
\sum_i r_i^{\alpha} q_i^{1-\alpha} > \sum_{i=1}^k p_i^\alpha q_i^{1-\alpha}.
\end{equation}
We now make use of the fact that the function $f(x)=x^\alpha$ is convex for $\alpha>1$, over all $x>0$ (This be checked since the second derivative of $f(x)$ is positive over the range of $x>0$). and since each $q_i^{1-\alpha}>0$, the linear combination 
\begin{equation}
 F(p) = \sum_i q_i^{1-\alpha} p_i^\alpha
 \end{equation} 
is convex w.r.t. $p$. For any $p\neq q$, note that $F(p)> 1$ comes from the positivity of the R{\'e}nyi divergences, since $D_\alpha(p\|q) = \frac{\sgn(\alpha)}{\alpha-1} \log F(p)$. Also, note that for a fixed $q$, $F(q) = \sum_i q_i = 1$.
This implies that for \eqref{convexsum},
\begin{align}
{\rm L.H.S.} \equiv F(r) &\leq (1-\delta) F(p) + \delta F(q)\\
&= F(p) -\delta [F(p)-F(q)]\\
&< F(p) \equiv {\rm R.H.S.}
\end{align}
For $\alpha<0$, whenever the kernel of $p$ is non-empty, $D_\alpha (p\|q)=\infty$, However $D_\alpha ((1-\delta)p+\delta q\|q)$ is finite since $q$ has full rank, hence the desired inequality holds always. When $p$ also has full rank, the same approach can be used as in $\alpha>1$, since $f(x)=x^\alpha$ is also convex for positive $x$, and since $\frac{\sgn(\alpha)}{\alpha-1}>0$, \eqref{strictineq} is also equivalent to \eqref{convexsum}.
\end{proof}

\begin{lemma}
\label{lem:channelrevised}
Let $q=\{q_i\}_{i=1}^k$ be an ordered (descendingly) $k$-dimensional probability distribution of full rank, possibly containing irrational values. Then for any $\varepsilon>0$, there exists a state $\tilde{q}$ such that
\bei
\item[(i)]  $\| q-\tilde{q}\|\leq \varepsilon$,
\item[(ii)] Each entry of $\tilde{q}$ is a rational number, i.e. there exists a set of natural numbers $\{d_i\}_{i=1}^k$ such that $\tilde{q}= \{\frac{d_i}{N} \}_{i=1}^k$, with $\sum_{i=1}^k d_i = N$,
\item[(iii)] There exists a valid channel $E$ such that $E(q)=\tilde{q}$ and for any other distribution $p$, $\|p-E(p)\|\leq O(\sqrt{\varepsilon})$.
\eei
\end{lemma}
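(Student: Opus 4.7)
The plan is to construct $\tilde q$ by entry-wise rational rounding of $q$ and then realize the channel $E$ as a small convex combination of the identity channel with a replacement channel whose constant output is engineered so that $E(q)$ lands exactly on $\tilde q$. Fix a large integer $N$ (to be specified below), set $d_i' := \lfloor N q_i + \tfrac{1}{2} \rfloor$ for each $i$, and adjust a few of the $d_i'$ by $\pm 1$ to obtain integers $d_i$ with $\sum_{i=1}^k d_i = N$. The resulting distribution $\tilde q := \{d_i/N\}_{i=1}^k$ is of the required rational form and satisfies $|\tilde q_i - q_i| \leq 1/N$ for every $i$, so that $\|q - \tilde q\| \leq k/N$. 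Taking $N \geq k/\varepsilon$ already secures conditions (i) and (ii).

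For (iii), set the mixing parameter $\delta := \sqrt{\varepsilon}$ and introduce the vector
\begin{equation}
r := q + \tfrac{1}{\delta}(\tilde q - q).
\end{equation}
Its entries sum to $1$ because both $q$ and $\tilde q$ do. To guarantee nonnegativity of each entry I further demand $N \geq 1/(q_k \sqrt{\varepsilon})$, which forces $|\tilde q_i - q_i| \leq 1/N \leq q_k \sqrt{\varepsilon} \leq q_i\, \delta$ for all $i$ and therefore
\begin{equation}
r_i \;=\; q_i + \tfrac{\tilde q_i - q_i}{\delta} \;\geq\; q_i - \tfrac{|\tilde q_i - q_i|}{\delta} \;\geq\; q_i - q_i \;=\; 0,
\end{equation}
so $r$ is a bona fide probability distribution. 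Now define the stochastic channel
\begin{equation}
E(p) \;:=\; (1-\delta)\, p + \delta\, r \qquad \text{for every input } p,
\end{equation}
which is manifestly a convex combination of the identity channel and the constant-output (replacement) channel $p \mapsto r$. A direct computation gives $E(q) = (1-\delta)q + \delta\bigl(q + \tfrac{1}{\delta}(\tilde q - q)\bigr) = \tilde q$, while for any distribution $p$,
\begin{equation}
\|p - E(p)\| \;=\; \delta\, \|p - r\| \;\leq\; 2\delta \;=\; 2\sqrt{\varepsilon},
\end{equation}
establishing (iii).

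The only delicate point in the argument, and the one responsible for the $\sqrt{\varepsilon}$ rate rather than the naive $\varepsilon$ rate, is the compatibility between the mixing parameter $\delta$ and the relative rounding error $|\tilde q_i - q_i|/q_i$ at the smallest entry $q_k$: the correction $(\tilde q - q)/\delta$ added to $q$ in the definition of $r$ must keep every entry nonnegative, which forces $\delta$ to dominate the relative rounding error. Balancing this constraint against the mixing error $\delta$ of $E$ itself yields the quoted $O(\sqrt{\varepsilon})$ bound, with the denominator $N$ necessarily depending on both $\varepsilon$ and the smallest entry $q_k$ of $q$.
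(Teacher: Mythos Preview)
Your proof is correct and takes a genuinely different route from the paper's. The paper rounds by taking ceilings $\tilde q_i=\lceil Nq_i\rceil/N$ for $i<k$ and dumping the deficit into the last coordinate, which enforces the sign pattern $\tilde q_i\ge q_i$ for $i<k$ and $\tilde q_k\le q_k$; its channel $E$ is then the explicit stochastic map that siphons a fraction $\Delta/q_k$ of the mass sitting at index $k$ and redistributes it among the first $k-1$ coordinates, giving $\|p-E(p)\|\le \Delta/q_k\le k/(Nq_k)$ and hence $O(\sqrt\varepsilon)$ once $N$ is chosen with $q_k>k/\sqrt N$. Your construction sidesteps any sign structure on $\tilde q-q$: you round each coordinate independently and realize $E$ as a convex combination $(1-\delta)\,\mathrm{Id}+\delta\,(\cdot\mapsto r)$ with $r$ engineered so that $E(q)=\tilde q$ exactly. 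The nonnegativity of $r$ forces $\delta$ to dominate the worst relative rounding error $|\tilde q_i-q_i|/q_i$, and balancing that against the mixing error yields the same $\sqrt\varepsilon$ rate. Your approach is arguably cleaner---no case analysis on signs, and a one-line channel whose perturbation bound is immediate---and makes the origin of the square root completely transparent; the paper's version, by contrast, keeps the channel's transition probabilities fully explicit and avoids introducing the auxiliary distribution $r$.

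One minor quibble: after the $\pm1$ adjustments you assert $|\tilde q_i-q_i|\le 1/N$. Naive nearest-integer rounding followed by shifting a coordinate by $1$ gives only $3/(2N)$; to secure a uniform $1/N$ bound you should specify, e.g., the largest-remainder rounding scheme. This is a harmless constant-factor issue and does not affect the argument.
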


\begin{proof}
We prove this by constructing a specific $\tilde{q}$ that satisfies (i) and (ii), then we construct the corresponding channel $E$ that maps $q$ to $\tilde{q}$. Note that since $q$ is ordered, the minimum value $\min_i q_i = q_k$. Firstly, choose any integer $N$ such that $q_k > \frac{k}{\sqrt{N}} > \frac{k}{N}$. Now, for $i\in\{1,\cdots,k-1\}$, define
\begin{equation}
\tilde q_i = \frac{\ceil{q_i N}}{N} = \frac{d_i}{N}.
\end{equation}
Lastly, choose 
\be
\tilde q_k = 1-\sum_{i=1}^{k-1} \tilde q_i = 1-\frac{\sum_{i=1}^{k-1} d_i}{N} = \frac{d_k}{N}.
\ee

It is clear that 
\be
1-\tilde q_k = \sum_{i=1}^{k-1} \tilde q_i & = \sum_{i=1}^{k-1} \frac{\ceil{q_i N}}{N}\\
& \leq \sum_{i=1}^{k-1} \frac{q_i N}{N} + \frac{k-1}{N}\\
& \leq 1- q_k + \frac{k}{N},
\ee
which means that $\tilde q_k \geq q_k - \frac{k}{N} >0$ from our choice of $N$, hence $\tilde q$ is a valid distribution with rational probability values, and condition (ii) is satisfied. Note also that the statistical distance between $q$ and $\tilde{q}$ is lower bounded by
\be
\Vert q - \tilde q \Vert = \frac{1}{2} \sum_{i=1}^k |q_i -\tilde{q}_i| &= \frac{1}{2} \left[\sum_{i:q_i > \tilde q_i} (q_i -\tilde{q}_i) + \sum_{i:q_i \leq \tilde q_i} (\tilde{q}_i - q_i)\right]  \\
& = \sum_{i:q_i > \tilde q_i} q_i - \tilde q_i \\
& = q_k - \tilde q_k \leq \frac{k}{N} = \varepsilon
\ee
which satisfies condition (i) by setting $\varepsilon=\frac{k}{N}$, which can be arbitrarily small by choice of $N$.
The third inequality holds because both distributions are normalized,
\be
\sum_{i:q_i > \tilde q_i} q_i + \sum_{i:q_i \leq \tilde q_i} q_i &= \sum_{i:q_i > \tilde q_i} \tilde q_i + \sum_{i:q_i \leq \tilde q_i} \tilde q_i =1 \\
 \sum_{i:q_i > \tilde q_i} q_i - \tilde q_i &= \sum_{i:q_i \leq \tilde q_i} \tilde q_i - q_i,
\ee
while the fourth inequality occurs since for $i=\{1,\cdots,k-1\}, q_i\leq\tilde q_i$ is not included in the summation.

Now, to prove (iii) we need to construct a valid channel, which is characterized by the transition probabilities $P(i\rightarrow j)$ going from state $i$ to state $j$, where for all $i\in\{1,\cdots, k\}$,
\be
\label{validchannel}
\sum_{j=1}^k P(i\rightarrow j) = 1.
\ee 

To understand how we construct an appropriate channel, note that the channel has to increase slightly the probabilities $q_i$ to $\tilde q_i$ for all $i\in\{1,\cdots,k-1\}$, while decreasing $\tilde q_k$ accordingly for the normalization to hold. 

Bearing this in mind, let us denote the set $\mathcal{I}=\{i|1\leq i\leq k-1\}$ which contains the indices where $\tilde q_i \geq q_i$. Also define $\Delta_i=\tilde q_i - q_i$ and the sum $\Delta=\sum_{i\in\mathcal{I}} \Delta_i$. Note that $\tilde q_k = q_k - \Delta$ comes from the normalization of $\tilde{q}$. 

Now, let us consider the channel $E$ with the following transition probabilities:
\be
p(i\to j)= \left\{
\begin{array}{llll}
\label{channelconstruct}
1 &\text{for} &i=j,&  i,j \in \mathcal{I} \\
0 &\text{for}& i\not=j,&  i,j\in  \mathcal{I} \\
1- \frac{\Delta}{q_k}& \text{for} &i=j=k \\
0 &\text{for}& j=k, &  i\in  \mathcal{I} \\
\frac{\Delta_j}{q_k}, &\text{for}& i=k, & j\in\mathcal{I}.
\end{array}\right.
\ee
%\begin{center}
%\begin{figure}[h!]
%\includegraphics[scale=0.45]{figures/correctionchannel}
%\caption{Construction of channel as described in \eqref{channelconstruct}.}
%\end{figure}
%\end{center}

One can verify that the above transition probabilities satisfies \eqref{validchannel}, hence they characterize a valid channel. 
The last step is to show that this channel causes only a small perturbation to any other state $p$.
\be
||p - E(p)|| = p_k - \tilde p_k = \frac{\Delta}{q_k} p_k \leq \frac{\Delta}{q_k}.
\ee
Note that $\Delta = q_k - \tilde q_k \leq \frac{k}{N}$, hence
\be
||p-E(p)||\leq \frac{k}{N q_k} \leq \varepsilon \frac{\sqrt{N}}{\sqrt{k}} \frac{1}{\sqrt{k}} = \frac{\sqrt{\varepsilon}}{\sqrt{k}},
\ee
which holds since $\varepsilon = \frac{k}{N}$, and we have chosen $N$ so that $q_k > \frac{k}{\sqrt{N}}$.
\end{proof}

\begin{lemma}\label{lem:direct_sum}
Consider a channel such that for some fixed $n$-dimensional probability distribution $t=(t_1,\ldots,t_l, 0, \ldots 0)$ 
%where $\forall i\leq l, t_i > 0$, 
the channel gives output $t'=(t_1',\ldots,t_l', 0, \ldots 0)$. Moreover, 
$\Lambda(w)=w$ holds for some full rank distribution $w$. Then this channel can be decomposed as $\Lambda=\Lambda_1 \oplus \Lambda_2$,
where $\Lambda_1$ acts only on the first $l$-elements, mapping them onto the same group of elements, while $\Lambda_2$ acts similarly on the remaining $n-l$ elements.
\end{lemma}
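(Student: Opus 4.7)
The plan is to view $\Lambda$ as an $n \times n$ column-stochastic matrix with entries $\Lambda_{ij} = P(j \to i) \geq 0$ and $\sum_i \Lambda_{ij} = 1$, and to show that both off-diagonal blocks (with respect to the splitting into the first $l$ and the last $n-l$ indices) vanish. I will treat the two blocks separately: the first uses the hypothesis $\Lambda(t) = t'$, and the second uses the fixed full-rank distribution $w$ together with column stochasticity.

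First I would show that $\Lambda_{ij} = 0$ whenever $i > l$ and $j \leq l$. Indeed, evaluating $\Lambda(t) = t'$ at coordinate $i > l$ gives
\begin{equation}
0 = t'_i = \sum_{j=1}^{n} \Lambda_{ij} t_j = \sum_{j=1}^{l} \Lambda_{ij} t_j ,
\end{equation}
using that $t_j = 0$ for $j > l$. Since the $t_j$ for $j \leq l$ are strictly positive (as is implicit in the notation, $t$ having support exactly on the first $l$ coordinates) and each $\Lambda_{ij} \geq 0$, every term must vanish, so $\Lambda_{ij} = 0$ on this off-diagonal block. This says that $\Lambda$ does not move any mass from the first $l$ coordinates to the last $n-l$ coordinates.

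Next I would use $\Lambda(w) = w$ with $w$ full rank to show the complementary block also vanishes, namely $\Lambda_{ij} = 0$ for $i \leq l$ and $j > l$. Summing the fixed point condition over $i > l$ and using the previous step to drop terms with $j \leq l$ yields
\begin{equation}
\sum_{i>l} w_i = \sum_{i>l} \sum_{j=1}^{n} \Lambda_{ij} w_j = \sum_{j>l} w_j \Big(\sum_{i>l} \Lambda_{ij}\Big) \leq \sum_{j>l} w_j ,
\end{equation}
where the inequality uses column stochasticity $\sum_{i} \Lambda_{ij} = 1$ together with $\Lambda_{ij} \geq 0$. The leftmost and rightmost expressions coincide by relabeling, so equality is forced throughout; since $w_j > 0$ for every $j > l$, this forces $\sum_{i > l} \Lambda_{ij} = 1$, hence $\sum_{i \leq l} \Lambda_{ij} = 0$, hence $\Lambda_{ij} = 0$ on this block as well.

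With both off-diagonal blocks annihilated, $\Lambda$ is block diagonal with respect to the splitting $\{1,\ldots,l\} \sqcup \{l+1,\ldots,n\}$, so $\Lambda = \Lambda_1 \oplus \Lambda_2$ where $\Lambda_1$ is the $l \times l$ restriction and $\Lambda_2$ the $(n-l) \times (n-l)$ restriction; column stochasticity of $\Lambda$ immediately transfers to each block, so $\Lambda_1$ and $\Lambda_2$ are themselves channels. The main subtlety is the second step: a priori $\Lambda$ could leak mass from the last $n-l$ coordinates into the first $l$, and it is precisely the existence of a strictly positive fixed point, together with column stochasticity, that rules this out. Without the full-rank hypothesis on $w$, the argument breaks at the step where I divide out $w_j$ on the last block.
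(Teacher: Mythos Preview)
Your proof is correct and follows essentially the same two-step argument as the paper: first use $\Lambda(t)=t'$ to kill the block sending $\{1,\ldots,l\}$ into $\{l+1,\ldots,n\}$, then use preservation of the full-rank $w$ together with stochasticity to kill the opposite block. The only cosmetic difference is that the paper phrases the second step in probabilistic language (defining coarse-grained input/output events $X,Y$ and applying the law of total probability to the joint distribution induced by $w$), whereas you carry out the identical computation in explicit summation form.
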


\begin{proof}
Consider the joint probability of two random variables $(X,Y)$, given by the preserved distribution $w$ and the channel. Let $X=0$ denote the event that the inputs are from the first group (items from $1$ to $l$), and $X=1$ that 
they are from the second group (items from $l+1$ to $n$), and $Y$ denotes similar events for the outputs. 
Since the channel preserves $w$, we have that $P(Y=0)=P(X=0)$. Moreover, since the channel sends $t$ into $t'$ , this
means that $p(Y=0|X=0)=1$.
We then have
\be
p(Y=0)=p(Y=0|X=0)p(X=0)+ p(Y=0|X=1)p(X=1)= p(Y=0)+p(Y=0|X=1)p(X=1)
\ee
so that either $P(X=1)=0$ or $p(Y=0|X=1)=0$. However, since $w$ is of full rank, we know that $P(X=1)>0$.
Therefore $p(Y=0|X=1)=0$ must hold. We have therefore that $P(Y=0|X=0)=P(Y=1|X=1)=1$ 
which means that the channel is direct sum of two channels, acting on two disjoint groups of elements $\lbrace 1,\cdots,l\rbrace$ and $\lbrace l+1,\cdots,n \rbrace$.
\end{proof}

\subsection{Catalytic $d$-majorization}
\label{sec:ruch_trumping}

In this section, we prove a crucial result (Theorem \ref{th:catalysis}), which relates monotonicity of R{\'e}nyi divergences to catalytic transformations. This can be viewed as both a generalization of trumping relations \cite{Klimesh-trumping,Turgut-trumping-2007} and the $d$-majorization result \cite{RuchSS-1978}. It also gives an operational interpretation to the R{\'e}nyi divergences, answering the question posed in \cite{timerven2010}.

With the tools listed in Section \ref{ss:dmaj_catalysis_tools} in place, we can now proceed to state and prove the main theorem of this section.

\def\catalysis{
Given probability distributions $p,p',q,q'$, where $q, q'$ has full rank. The following conditions are equivalent:
\bei
\item[(i)] 
For all $\alpha \in (-\infty, \infty), D_\alpha(p || q)\geq D_\alpha(p' || q').$
%(nelly: i dont think we need this line)In \eqref{eq:thermo-trumping_cond} we use the convention that $\infty=\infty$.
\item[(ii)]
For any $\varepsilon >0$, there exists probability distributions $r,s$ of full rank, a distribution $p_{\varepsilon}'$ and a stochastic map $\Lambda$ such that 
\begin{enumerate}
\item $\Lambda(p\ot r) = p_{\varepsilon}'\ot r$,
\item $\Lambda(q\ot s) = q'\ot s$,
\item $\Vert p' - p_{\varepsilon}' \Vert \leq \varepsilon$.
\end{enumerate}
Moreover, we can take  $s=\eta$, where $\eta$ is  the uniform distribution onto the support of $r$.
\eei
}

\begin{theorem}
\catalysis
\label{th:catalysis}
\end{theorem}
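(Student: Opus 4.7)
\medskip

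\noindent\textbf{Proof plan for Theorem~\ref{th:catalysis}.}

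The plan is to prove the two directions separately, with all the work going into (i)$\Rightarrow$(ii).

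For (ii)$\Rightarrow$(i), I would invoke the data processing inequality \eqref{eq:data-proc} together with the additivity of R\'enyi divergence under tensor products. The channel $\Lambda$ maps $p\otimes r \mapsto p'_{\varepsilon}\otimes r$ and $q\otimes s \mapsto q'\otimes s$, so
\begin{equation*}
D_\alpha(p\|q)+D_\alpha(r\|s) \;=\; D_\alpha(p\otimes r\|q\otimes s) \;\geq\; D_\alpha(p'_{\varepsilon}\otimes r\|q'\otimes s) \;=\; D_\alpha(p'_{\varepsilon}\|q')+D_\alpha(r\|s),
\end{equation*}
and since $r,s$ have full rank the divergence $D_\alpha(r\|s)$ is finite and cancels. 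Taking $\varepsilon\to 0$ and using continuity of $D_\alpha(\cdot\|q')$ for the full rank second argument $q'$ yields (i).

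For (i)$\Rightarrow$(ii), my plan is to reduce to Klimesh--Turgut trumping (Proposition~\ref{trumping-better}) via the embedding map $\Gamma$ of \eqref{eq:gamma}. First I would use Lemma~\ref{lem:channelrevised} to replace $q$ and $q'$ by rational approximations $\tilde q=\{D_i/M\}$ and $\tilde q'=\{D_j'/M\}$ over a common denominator $M$, at the cost of an $O(\sqrt{\varepsilon})$ perturbation of any other distribution run through the associated channels; continuity of $D_\alpha$ in the first argument (with full rank second argument) shows that condition (i) is preserved up to an arbitrarily small slack. Next I would pass to the embedded pictures $\Gamma(p),\Gamma'(p')$ in dimension $M$, where Lemma~\ref{lem:rel_ent_H} gives $D_\alpha(p\|\tilde q)=D_\alpha(\Gamma(p)\|\eta_M)$ and $D_\alpha(p'\|\tilde q')=D_\alpha(\Gamma'(p')\|\eta_M)$, so condition (i) becomes the trumping hypothesis $D_\alpha(\Gamma(p)\|\eta_M)\geq D_\alpha(\Gamma'(p')\|\eta_M)$ for all $\alpha$.

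Proposition~\ref{trumping-better} then produces a full rank catalyst $r$ and an $\varepsilon$-approximation $\tilde p'_\varepsilon$ of $\Gamma'(p')$ such that $\Gamma(p)\otimes r \succ \tilde p'_\varepsilon\otimes r$. By Theorem~\ref{prop:major}, there is a doubly stochastic map $\Lambda'$ sending $\Gamma(p)\otimes r$ to $\tilde p'_\varepsilon\otimes r$ and preserving the uniform distribution $\eta_M\otimes \eta_{\dim r}$. I would then define the pullback channel
\begin{equation*}
\Lambda \;=\; (\Gamma'^{*}\otimes \mathrm{Id})\circ \Lambda' \circ (\Gamma\otimes \mathrm{Id}),
\end{equation*}
which maps $p\otimes r$ to $\Gamma'^{*}(\tilde p'_\varepsilon)\otimes r$, a distribution close to $p'\otimes r$, and maps $\tilde q\otimes \eta \mapsto \Gamma'^{*}(\eta_M)\otimes \eta = \tilde q'\otimes \eta$, with $s=\eta$ uniform on the support of $r$ as required. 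Finally, I would absorb the rational-approximation channels from Lemma~\ref{lem:channelrevised} into the construction by composing on the input and output sides, using that each contributes at most $O(\sqrt{\varepsilon})$ to the overall error on $p'$, while ensuring the composed channel exactly maps $q\otimes s$ to $q'\otimes s$.

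The main obstacle I anticipate is the simultaneous control of three different approximations: the rationalisation of $q$ and $q'$, the trumping approximation, and the fact that $\Gamma'^*(\tilde p'_\varepsilon)$ need not equal $p'$ exactly. Keeping (i) stable under the first perturbation requires that small $\ell_1$ shifts of $p,p',q,q'$ change $D_\alpha$ by only a small amount for every $\alpha$, which is delicate near $\alpha=\pm\infty$; I expect one needs to exploit the fact that only distributions with matching supports (arranged, if needed, by pre-investing a vanishing amount of work as in Section~\ref{sub:extra_work}) produce finite divergences, so that the difficult negative-$\alpha$ regime is automatically controlled. The second obstacle is arranging the compositions so that $q\otimes s$ is mapped to $q'\otimes s$ \emph{exactly} rather than approximately; this should be achievable by placing the rationalisation channel of Lemma~\ref{lem:channelrevised} on the input side and its mate on the output side, so that both $q$ and $q'$ sit at the endpoints of the construction while only $p'$ absorbs the error budget.
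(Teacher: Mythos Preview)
Your overall architecture is the same as the paper's: reduce to Klimesh--Turgut trumping via the embedding $\Gamma$, use Lemma~\ref{lem:rel_ent_H} to translate the $D_\alpha$ conditions into entropic ones relative to the uniform distribution, pull the resulting bistochastic map back through $(\Gamma'^*\otimes\id)\circ(\cdot)\circ(\Gamma\otimes\id)$, and sandwich with the rationalisation channels of Lemma~\ref{lem:channelrevised}. The (ii)$\Rightarrow$(i) direction is exactly as in the paper.

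The gap is precisely the obstacle you flag but do not resolve. You start from a \emph{non-strict} family $D_\alpha(p\|q)\geq D_\alpha(p'\|q')$ and then perturb $q,q'$ to rational $\tilde q,\tilde q'$. Continuity of $D_\alpha$ in its arguments does not preserve a non-strict inequality under perturbation, and your proposed workaround (matching supports, pre-investing work as in Section~\ref{sub:extra_work}) does not fix this: even with finite divergences everywhere, an equality $D_{\alpha_0}(p\|q)=D_{\alpha_0}(p'\|q')$ can flip sign after an arbitrarily small shift of $q$. The paper's remedy is Lemma~\ref{strictaux}: before doing anything else, replace $p'$ by $p''=(1-\delta)p'+\delta q'$. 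Since $q'$ has full rank, Lemma~\ref{strictaux} gives the \emph{strict} gap $D_\alpha(p\|q)>D_\alpha(p''\|q')$ for every $\alpha\in(-\infty,\infty)$, and this strict gap is what survives the rationalisation step by continuity. The price is only $\|p''-p'\|\leq\delta$, which is absorbed into the final error budget. With this one extra move your plan goes through; without it the stability step is not justified.

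Two smaller points. First, neither Proposition~\ref{trumping} nor Proposition~\ref{trumping-better} guarantees a full rank catalyst; the paper obtains this a posteriori via Lemma~\ref{lem:direct_sum}, splitting off the action on the support of $r$ from the bistochastic map (using that the uniform distribution is preserved). You will need the same argument to justify taking $s=\eta$ on the support of $r$. Second, once you have the strict gap and hence exact trumping $\Phi(\tilde p\otimes r)=\tilde p''\otimes r$ from Proposition~\ref{trumping}, the pullback $(\Gamma'^*\otimes\id)\circ\Phi\circ(\Gamma\otimes\id)$ lands exactly on $p''\otimes r$, so the only errors to track are $\delta$ from Lemma~\ref{strictaux} and the $O(\sqrt{\varepsilon})$ from the two applications of Lemma~\ref{lem:channelrevised}; there is no separate ``$\Gamma'^*(\tilde p'_\varepsilon)$ need not equal $p'$'' issue to manage.
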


\begin{proof} %{Proof of the theorem \ref{th:catalysis}.}
"(i) $ \to $  (ii)". To prove in this direction, we suppose that condition (i) holds, and construct a channel $\Lambda$ that satisfies (ii), with some $\varepsilon$ that can be chosen arbitrarily small. 
Firstly, choose some $\delta>0$, and define $p'' = (1-\delta) p' + \delta q'$. Then $\|p'-p''\|\leq\delta$. Using Lemma \ref{strictaux} together with (i), we find that for all $\alpha\in (-\infty, \infty)$,
\be \label{monotonicitywithptwoprimesstrict}
D_{\alpha}(p || q) > D_{\alpha}(p'' || q').
\ee

Now let us consider the following two cases separately:
\vspace{0.3cm}\\
\textit{(A) The probabilities in $q$ and $q'$ are rational.}
Without loss of generality, they can be written as $q = (d_1/N, \ldots d_k/N)$ and $q' = (d_1'/N, \ldots, d_k'/N)$ for some sets of integers $\{d_i\}_{i=1}^k, \{d'_i\}_{i=1}^k$, such that $\sum_{i=1}^k d_i =\sum_{i=1}^k d'_i=  N$.

With this, define two embedding channels $\Gamma$ and $\Gamma'$ associated with $\{d_i\}_{i=1}^k$ and $\{d_i'\}_{i=1}^k$ respectively. The fine-grained distributions of $p$ and $p''$ are given by 
\begin{align*} 
\tilde p &= \Gamma (p) = \oplus_i~p_i \eta_i\\
\tilde p'' &= \Gamma' (p'') = \oplus_i~p''_i \eta_i' ,
\end{align*}
where $\eta_i,\eta_i'$ are maximally mixed distributions of dimensions $d_i$ and $d'_i$ respectively. 
From \eqref{monotonicitywithptwoprimesstrict}, Lemma \ref{lem:rel_ent_H} tells us that
\be
D_{\alpha}(\tilde{p}||\eta_N)= D_\alpha(p||q) >  D_{\alpha}(p'' || q') =D_{\alpha}(\tilde{p}'' || \eta_N),
\ee
and hence by Proposition \ref{trumping}, we know that the distribution $\tilde p$ can be trumped into $\tilde p''$, i.e. there exists a probability distribution $r$ (the catalyst) and a bistochastic map $\Phi$ such that 
\be
\Phi(\tilde p \ot r) = \tilde p'' \ot r.
\ee
Note that by Lemma \ref{lem:direct_sum}, $r$ can be without loss of generality be of full rank, or in other words, the zeros in $r$ do not affect finding such a bistochastic map. More precisely, let us consider the case where $r$ does not have full rank, but has some rank $d$ instead. Then by setting $t=\tilde{p} \otimes r$, $t'=\tilde{p}''\otimes r$ and $w=\eta_N\otimes\eta$, we can use Lemma \ref{lem:direct_sum} to show that $\Phi=\Phi_1\oplus\Phi_2$, where the channel $\Phi_1$ gives us
\begin{equation}
\Phi_1(\tilde p \ot r) = \tilde p'' \ot r, ~~~~\Phi_1(\eta_N\ot \eta_d)=\eta_N\ot \eta_d,
\end{equation}
where $\eta_d$ is the uniform distribution on the support of $r$. On the other hand if $r$ is of full rank, then $\Phi=\Phi_1$.

Now, consider the following mapping 
\be
\tilde\Lambda =     (\Gamma'^* \ot \id )   \circ \Phi_1 \circ  (\Gamma \ot \id).
\ee
This map $\tilde\Lambda$ transforms $p\ot r $ into $p'' \ot r$, setting $\varepsilon=\delta$ satisfies conditions 1 and 3 in (ii). To satisfy condition 2, we want $\Lambda (q\ot s) = q'\ot s$ for some $s$. This is achieved by taking $s=\eta$, a uniform distribution of any dimension. Indeed, since $\Gamma (q) = \id/N$, $(\Gamma \ot \id) (q \ot \eta)$ is also a maximally mixed distribution. Since $\Phi_1$ is bistochastic, it preserves this distribution. Finally, by definition of $\Gamma'$ we have ${\Gamma'}^* (\id/N) = q'$.
\vspace{0.3cm}\\
\textit{(B) The distributions $q$ or $q'$ contain irrational values.}
We show that in such cases, a similar approach in (A) can be used, by considering distributions $\tilde q$ which are rational and close to the original distributions. 
Note that for any real $\alpha>0$, $D_\alpha (p\|q)$ is a continuous function of both arguments $p~{\rm and}~q$, whenever $q$ is of full rank. For $\alpha<0$, whenever $p$ does not have full rank, both $D_\alpha(p\|q)$ and $D_\alpha(p\|\tilde{q})$ diverge to infinity. When $p$ has full rank, continuity can be obtained by noting that $D_\alpha(p\|q) = c\cdot D_{1-\alpha} (q\|p)$ and $D_\alpha(p\|\tilde q) = c\cdot D_{1-\alpha} (\tilde q\|p)$ for some positive $c$, and $1-\alpha>0$, hence $D_{1-\alpha} (q\|p)$ is continuous.

From the above discussion, we conclude that when $\|q-\tilde q\|\leq \varepsilon$, in the limit of $\varepsilon\rightarrow 0$, \eqref{monotonicitywithptwoprimesstrict} implies that $D_\alpha( p || \tilde q) \geq D_\alpha( p'' || \tilde q')$ holds as well. 

By Lemma \ref{lem:channelrevised}, we use the channel $E$ that maps $q$ into $\tilde q$ while not perturbing $p$ too much. More precisely, for any $\varepsilon>0$, one can define a stochastic map $E$ such that 
\begin{equation}
\label{eq:E}
E(q) = \tilde q, \hspace{0.4 cm} \Vert \tilde q - q\Vert \leq \varepsilon,
\end{equation}
and $\Vert E(p) - p  \Vert \leq O(\sqrt{\varepsilon})$ for any other state $p$.
%to take with probability $1 - \Vert q - \tilde q \Vert_1$ the identity channel and with probability $\Vert q - \tilde q \Vert_1$ a channel that....
%\jono{to add}
Thus Eq. (\ref{monotonicitywithptwoprimesstrict}) implies that in the limit of $\varepsilon\rightarrow 0$,
\be
D_{\alpha}( E(p) || \tilde q) =D_{\alpha}(p || q) > D_{\alpha}(p'' || q') = D_{\alpha}(p'' || \tilde q').
\ee

Following the first part of the proof (which established the result for rational $q$ and $q'$) we find that there is a catalyst $r$ and a stochastic operation $\Lambda$ such that
\begin{equation}
\Lambda( E(p) \otimes r) = p'' \otimes r, \hspace{0.4 cm} \Lambda( E(q) \otimes \eta) = \tilde q' \otimes \eta,
\end{equation}
with $\eta$ the maximally mixed distribution.

If $q'$ also contains irrational values, we can similarly define a second correction map that maps $q'$ into $\tilde q'$ while not perturbing $p''$ too much. By invoking Lemma \ref{lem:channelrevised} on $q'$, we construct $E'$ such that
\begin{equation}
E'(\tilde q') = q', \hspace{0.4 cm} \text{and for any probability distribution } r,~ \Vert E'(r) -r \Vert \leq O(\sqrt{\varepsilon}).
\end{equation}
Denoting the inverse of $E'$ as $E'^*$, hence $E'^*(\tilde q') = q'$. Choose an $r=E'^*(p'')$, then $\Vert E'(r) -r \Vert = \| p''-E'^*(p'') \| \leq O(\sqrt{\varepsilon})$.

Our final stochastic map is given by $E'^* \circ \Lambda \circ E$, where
\begin{align}
(E'^* \otimes \id) \circ \Lambda \circ (E \otimes \id)(q \otimes \eta) &= (E'^* \otimes \id) \circ \Lambda (\tilde q \otimes \eta) \nonumber\\
&= (E'^* \otimes \id) (\tilde q'\otimes\eta) \nonumber\\
&= q' \otimes \eta,
\end{align}
and 
%\textcolor{red}{(this part up to end of $i\rightarrow ii$ needs more thought. the last channel acting is the inverse of the correction map. we need to show that the distance between $\|E'^*(p)-p\|$ is small as well.)}
\begin{align}
(E'^* \otimes \id) \circ \Lambda \circ (E \otimes \id)(p \otimes r) &= (E'^* \otimes \id) \circ \Lambda (E(p)\otimes r) \nonumber\\
& = (E'^* \otimes \id) (p'' \otimes r)\nonumber\\
& = p'''\otimes r,
\end{align}
with $p''' = E'^*(p'')$ such that
\begin{equation}
\Vert p''' - p'\Vert \leq \|p''-p'\|+\|p'''-p''\| \leq \delta + O\left( \sqrt{\varepsilon} \right). 
\end{equation}
Since $\varepsilon$ and $\delta$ can be chosen arbitrarily, the first part of the theorem follows.  

"(ii) $ \to $  (i)". Suppose that for all $\varepsilon >0$ there exist probability distributions $r,s, p_{\varepsilon}'$ and a stochastic map $\Lambda$ such that 
\be
\Vert p' - p_{\varepsilon}' \Vert_1 \leq \varepsilon,
\ee
and
\be
\Lambda(p\ot r) = p_{\varepsilon}'\ot r, \quad \Lambda(q\ot s) = q'\ot s,
\ee
and the support of $s$  includes the support of $r$. 
Then by monotonicity of the Renyi divergences, 
\begin{equation}
D_{\alpha}( p_{\varepsilon}'\ot r || q' \ot s) \leq D_{\alpha}(p\ot r || q \ot s),
\end{equation}
which equals 
\begin{equation}
D_{\alpha}( p_{\varepsilon}' || q') + D_{\alpha}(r || s) \leq D_{\alpha}(p || q ) + D_{\alpha}(r || s),
\end{equation}
by additivity. Since both $r$ and $s$ are full rank, $D_{\alpha}(r || s)$ is finite, and 
can be subtracted from both sides. Lastly, we consider the limit $p'_{\varepsilon}\to p'$. Recall that as long as the second argument $q'$ has full rank, 
for any $\alpha>0$, $D_\alpha(p'||q')$ is continuous w.r.t. both $p'$ and $q'$. For $\alpha<0$, whenever $p'$ with full rank, continuity holds. If $p'$ does not have full rank, $ \lim_{\varepsilon\rightarrow 0} D_\alpha (p_{\varepsilon}'\|q') = \infty = D_\alpha (p'\|q')$. Hence we obtain
\begin{equation}
D_{\alpha}( p' || q')  \leq D_{\alpha}(p || q )
\end{equation}
for all $\alpha>0$ and $\alpha<0$. Since $D_0 (p\|q)=\lim_{\alpha\to 0^+} D_\alpha (p\|q)$, the above inequality holds also for $\alpha=0$.
%Regarding negative $\alpha$ $D_\alpha(p'_\varepsilon|q')$ are continuous in $p'$ for fixed, full rank $q'$, (in particular, if one $p'_\varepsilon$'s goes to zero for $\varepsilon\to0$, then  $D_\alpha(p'_\varepsilon|q')$ goes to infinity, and $D_\alpha(p'_0|q')=\infty$ too). 

\end{proof}

\subsection{For diagonal input state of the system, a diagonal catalyst is enough}

\label{subsec:diagonal}
Here we will show that if the initial state of the system is block diagonal in the energy eigenbasis, then the diagonal of the output state does not depend on coherences of the catalyst
but only on its diagonal. This means that if we are interested only in the diagonal 
of the output state of the system, we can replace the catalyst with its dephased version (i.e. put a diagonal 
catalyst that has the same diagonals as the original one). The conditional probabilities form the channel,
which maps the initial diagonal to the final diagonal of the state of the system. By block diagonal, we mean that the state can be written as
\be
\rho=\sum \sigma_{ijk}\ket{E_i,j}\bra{E_i,k}
\ee
where $\ket{E_i,j}$ has energy $E_i$ and has degenerate levels labled by $j$.
 
To see that only block diagonal catalysts are needed, we write the initial state as
\be
\rho^{in}_{RSC}=\rhoinr\ot\rhoins\ot\rhoinc,
\ee
where $\rhoinr$ is the heat bath, which is of course diagonal, $\rhoinc$ is the state of an arbitrary catalyst, 
and $\rhoins$ is the state of the system which we assume to be diagonal.
We then act with an energy-preserving unitary $U$ and get the output state
\be
\rho^{out}_{RSC} = U \rho^{in}_{RSC}  U^\dagger.
\ee

We now compute the diagonals element of $\rho^{out}_S$, and will see
that they depend only on the diagonal elements of $\rhoinc$. We have
\be
\<E_S|\rhoouts|E_S\>=\sum_{E_R,E_C} \<E_R,E_S,E_C|\rho^{out}_{RSC}|E_R,E_S,E_C\>
\ee
(here and in the following we sum over energies as well as degeneracies).
This can be written as 
\ben
&&\<E_S|\rhoouts|E_S\>=\sum_{E_R,E_C} \sum_{E_R',E_C',E_S',E_C''}
\<E_R,E_S,E_C|U|E_R'E_S'E_C'\>\<E_R',E_S',E_C'|\rho^{in}_{RSC}|E_R',E_S',E_C''\>\times\nonumber \\
&&\times \<E_R',E_S',E_C''|U^\dagger|E_R,E_S,E_C'\>
\een
where we used that $\rhoinr$ and $\rhoins$ are diagonal. Since $U$ preserves energy, 
we have $E_R+E_S+E_C=E_R'+E_S'+E_C'$ as well as $E_R+E_S+E_C=E_R'+E_S'+E_C''$. This implies that $E_C'=E_C''$. 
We thus obtain
\be
\<E_S|\rhoouts|E_S\>= \sum_{E_S'} p(E_S|E_S') \<E_S'|\rhoouts|E_S'\>,
\ee
where the conditional probabilities are given by 
\be
p(E_S|E_S')= \sum_{E_R,E_C}\sum_{E_R',E_C'}|\<E_R,E_S,E_C|U|E_R',E_S',E_C' \>|^2 \<E_R'|\rhoinr|E_R'\>  
\<E_C'|\rhoinc|E_C'\>.
\ee
One easily finds that indeed this is a valid conditional probability distribution. 

\section{The Second Laws}
\label{sec:secondlaws}

In this section we formulate the state transformation conditions, namely the second laws of thermodynamics. We will first do this for states diagonal in the energy eigenbasis,
and then for general state transformations. The former states are those $\rho$ which satisfy $[\rho,H]=0$. 
The conditions are given in terms of \genFs, which are defined as follows:
\be
F_\alpha(\rho,H)=-kT [\ln Z + D_\alpha(\rho\|\rho_\beta)]=F(\rho_\beta, H) + kT D_\alpha(\rho\|\rho_\beta)
\ee
where $Z$ is the partition function for the Hamiltonian $H$, and $\rho_\beta$ is the thermal state. Note that $F_1$ 
is the standard free energy, and that for thermal states and pure eigenenergy states, all free energies are 
equal to the standard one.

In Subsection \ref{ss:qsecondlaw} we will present the quantum limitations which apply to all state transformations.

\subsection{The second laws for states block diagonal in energy}

\begin{theorem}[Second laws for block diagonal states]
%If we have access to a single heat bath $\gibbs$, with inverse temperature $\beta$ and
%ancillas that should be returned in their initial state
%then  a state $\rho$ block diagonal in the energy eigenbasis can be transformed with arbitrary accuracy into another block
%diagonal state   $\final$ under thermal operations if and only if, for all 
Consider a system with Hamiltonian $H$. Then
a state $\rho$ block diagonal in the energy eigenbasis can be transformed with arbitrary accuracy into another block diagonal state $\final$ under catalytic thermal operations if and only if, for all
$\alpha \in (-\infty, \infty)$,
\begin{equation}
F_\alpha( \rho,H) \geq F_\alpha( \final,H).
\label{eq:second_laws}
\end{equation} 
\label{thm:second_laws}
\end{theorem}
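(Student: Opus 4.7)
The plan is to reduce the theorem to the purely classical trumping result of Theorem~\ref{th:catalysis} via the Hamiltonian-aware language of $d$-majorization. Throughout, let $g = \rho_\beta = e^{-\beta H}/Z$ be the Gibbs distribution/state and let $p,p'$ denote the eigenvalue distributions of $\rho,\rho'$. The key identity is
\begin{equation}
F_\alpha(\rho,H)-F_\alpha(\rho',H) = kT\bigl[D_\alpha(p\|g)-D_\alpha(p'\|g)\bigr],
\end{equation}
so the hypothesis \eqref{eq:second_laws} is equivalent to $D_\alpha(p\|g)\geq D_\alpha(p'\|g)$ for every $\alpha\in(-\infty,\infty)$.

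For the \emph{necessity} direction, assume a catalytic thermal operation $\mathcal{E}$ sends $\rho\otimes\sigma_C$ to a state $\varepsilon$-close to $\rho'\otimes\sigma_C$, with catalyst $(\sigma_C,H_C)$. Thermal operations preserve the joint Gibbs state of system plus catalyst, so $\mathcal{E}(\rho_\beta\otimes\sigma_C^\beta)=\rho_\beta\otimes\sigma_C^\beta$. Since $\rho,\rho',\rho_\beta,\sigma_C^\beta$ are all block diagonal in energy and the diagonal of the output depends only on the diagonal of the catalyst (Subsection~\ref{subsec:diagonal}), we may assume the catalyst is diagonal; $\mathcal{E}$ then induces a stochastic map $\Lambda$ on eigenvalue distributions with $\Lambda(p\otimes r)=p'_\varepsilon\otimes r$ and $\Lambda(g\otimes g_C)=g\otimes g_C$. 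Monotonicity \eqref{eq:data-proc} of the R\'enyi divergences plus additivity on product distributions then gives $D_\alpha(p\|g)+D_\alpha(r\|g_C)\geq D_\alpha(p'_\varepsilon\|g)+D_\alpha(r\|g_C)$, and since $g_C$ is full rank the $D_\alpha(r\|g_C)$ term is finite and cancels. Letting $\varepsilon\to 0$ and using continuity of $D_\alpha$ (as in the ``(ii)$\to$(i)'' direction of Theorem~\ref{th:catalysis}) yields $D_\alpha(p\|g)\geq D_\alpha(p'\|g)$ for all $\alpha$, which is \eqref{eq:second_laws}.

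For the \emph{sufficiency} direction, apply Theorem~\ref{th:catalysis} with the choice $q=q'=g$. For any $\varepsilon>0$ this produces a full-rank catalyst distribution $r$, a distribution $p'_\varepsilon$ with $\|p'-p'_\varepsilon\|\leq\varepsilon$, and a stochastic map $\Lambda$ such that
\begin{equation}
\Lambda(p\otimes r)=p'_\varepsilon\otimes r,\qquad \Lambda(g\otimes\eta)=g\otimes\eta,
\end{equation}
where $\eta$ is uniform on $\mathrm{supp}(r)$. Introduce an ancillary catalyst system $C$ with \emph{trivial} Hamiltonian $H_C=0$ on $\mathrm{supp}(r)$; then its Gibbs state is exactly $\eta$, and any diagonal state on $C$ is block diagonal in energy. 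Prepare $C$ in the diagonal state with eigenvalues $r$. The joint map $\Lambda$ preserves the joint Gibbs distribution $g\otimes\eta$ and acts on energy-diagonal inputs, so by the characterization of thermal operations for diagonal states \cite{HO-limitations,Beth-thermo} it is implementable as a thermal operation on $S+C$ (coupled to a heat bath and an energy-conserving unitary). The resulting protocol is by construction a catalytic thermal operation producing an output $\varepsilon$-close in trace norm to $\rho'\otimes\sigma_C$.

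The only conceptual obstacle is bridging the gap between the abstract stochastic-map language of Theorem~\ref{th:catalysis} (where $q,q'$ are arbitrary full-rank references) and the thermodynamic setting where the reference must be a Gibbs state compatible with some Hamiltonian on an honest quantum system; this is dispatched by choosing $H_C=0$ so that $s=\eta$ really is a Gibbs state, and by invoking Subsection~\ref{subsec:diagonal} to justify restricting to diagonal catalysts. The rest is bookkeeping in $\varepsilon$.
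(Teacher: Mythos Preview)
Your proof is correct and follows essentially the same route as the paper's: both directions are reduced to Theorem~\ref{th:catalysis} with $q=q'=\rho_\beta$, the catalyst is equipped with a trivial Hamiltonian so that its Gibbs state is the uniform distribution $\eta$ furnished by that theorem, and the diagonal--catalyst reduction of Subsection~\ref{subsec:diagonal} together with the identification of Gibbs-preserving classical maps with thermal operations closes the argument. The only cosmetic difference is that in the necessity direction you spell out the monotonicity/additivity/continuity steps explicitly, whereas the paper simply invokes the ``(ii)$\to$(i)'' half of Theorem~\ref{th:catalysis}.
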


\begin{remark}
By continuity, the conditions can equivalently include $\alpha=\infty,-\infty$. 
\end{remark}

\begin{proof}
We will prove it using Theorem \ref{th:catalysis}. Consider initial and final states of the system $\rho_S$ and $\rho_S'$.  Suppose first that for $\alpha \in (-\infty,\infty)$, equation \eqref{eq:second_laws} holds, which is equivalent to 
\be
D_\alpha(\rho_S || \sgibbs)\geq D_\alpha(\rho_S' || \sgibbs).
\ee 
We now need to show that one can transform the state $\rho_S$  into an arbitrary good version of $\rho_S'$ by catalytic thermal 
operations. 

We denote  $p$ and  $p'$ to be eigenvalues of $\rho_S$ and $\rho'_{S}$ respectively, and $q=q'=\sgibbs\ot \ancgibbs$.
Then, using Theorem \ref{th:catalysis}, we get that there exists a channel $\Lambda$ and uniform distribution $\eta$ such that
(i) $\Lambda$ preserves the state $q\ot \eta$, (ii) $\Lambda$ sends $p\ot r$ into $p'_\epsilon\ot r$,
where $p'_\epsilon$ approximates $p'$. Condition (i) means that we can take 
the catalyst system with trivial Hamiltonian. Thus we have 
\be
\Lambda(\rho_S \ot \rho_C)=\rho^{\rm out}_{S}\ot \rho_C,
\ee
where $||\rho^{\rm out}_{S}-\rho_S'||\leq \epsilon$ and
\be
\Lambda(\sgibbs\ot\ancgibbs \ot\cgibbs)=\sgibbs\ot\ancgibbs\ot \cgibbs,
\ee
where $\cgibbs$ is the maximally mixed state on catalyst system $C$, 
i.e. $\Lambda$ preserves the thermal state of the system $SC$ and the ancilla. 
However thermal operations are precisely the operations that preserve the thermal state 
\cite{HO-limitations,Beth-thermo}. Thus the required transition can be made by catalytic thermal operations.

Conversely, let us assume that for given states $\rho_S$ and $\rho_S'$ there exists a quantum channel $\Lambda$,
and a system $C$ with the hamiltonian $H_C$, and state $\rho_C$ such that 
\be
\Lambda(\sgibbs \ot \cgibbs) = \sgibbs \ot \cgibbs, \quad \Lambda(\rho_S\ot\rho_C)=
\rho_{S}^{out}\ot\rho_C.
\ee
where $||\rho_{S}^{\rm out}-\rho_S'||\leq \epsilon$. 
Now, let us note that by thermal operations, one can make any state diagonal in the energy basis,
namely one can apply random phases (by using the Birkhoff primitive of \cite{thermoiid} 
which supplies random noise and unitaries that change phases). 
In Sec. \ref{subsec:diagonal} we have shown that since the input and output states are diagonal,
we can take the state $\rho_C$ to be diagonal too, and the above condition reads
\be
\Lambda_{\rm cl}(q\ot s)= q\ot s, \quad \Lambda_{\rm cl}(p\ot r)= p'_\epsilon\ot r,
\label{eq:Lambda_rs}
\ee
Thus we can apply theorem \ref{th:catalysis}, obtaining that 
\be
D_\alpha(\rho_S\|\sgibbs)\geq D_\alpha(\rho'_S\|\sgibbs)
\ee
for all real $\alpha$ which is equivalent to \eqref{eq:second_laws}.
\end{proof}

\subsection{Getting rid of second laws with negative $\alpha$}
In the paradigm of catalytic thermal operations it is assumed that the
catalyst is returned exactly. If we had required that the catalyst is returned only with good fidelity, 
all restrictions would be lifted, and there will be no conditions for thermodynamic transitions at all. 
This happens when the catalyst's dimension is free to be arbitrarily large. 
In such cases, fidelity is not a good criterion for closeness any more (we discuss this in more detail in Section \ref{sec:inexact} of the \supl). 

However, in addition to the catalyst,  we can consider borrowing some system with fixed size (e.g. a qubit) in a pure state, given that we will return it with arbitrary good fidelity. As we will argue now, 
this will lift all the conditions on negative $\alpha$. However since creating a pure state from a thermal state without additional resource requires infinite work, we will also consider approximate versions. 

Here is the result, where we allow the use of an exactly, pure qubit. 
\begin{theorem}
\label{thm:second_laws2}
If we can use catalytic thermal operations, and 
%If we have access to a single heat bath $\gibbs$, with inverse temperature $\beta$ and
%ancillas that should be returned in their initial state, 
are allowed to borrow a qubit 
with a trivial Hamiltonian in the state $|0\>\<0|$ and have to return it 
with arbitrary accuracy,  then  a state $\rho$ block diagonal in the energy eigenbasis can be 
transformed with arbitrary accuracy  into another block diagonal state  
$\final$ if and only if, for all 
%$\alpha \in (-\infty, \infty)$
$\alpha \geq 0$
\begin{equation}
F_\alpha( \rho,H) \geq F_\alpha( \final,H).
\label{eq:second_laws2}
\end{equation} 
\end{theorem}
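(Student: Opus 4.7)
The plan is to reduce the statement directly to Theorem~\ref{thm:second_laws} by padding both the initial and final states with the borrowed pure qubit $|0\rangle\langle 0|$, which has trivial Hamiltonian and hence thermal state $\eta_2 = \id/2$. The combined Hamiltonian is still $H$ (up to an irrelevant tensor factor), and every state I consider remains block-diagonal in the total energy eigenbasis as a tensor product or a convex combination of such. The role of the qubit is twofold: it introduces an exactly pure factor that blows up every $\alpha<0$ divergence on the input side, and it lets me smear the output on a two-level space without affecting the system.

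For sufficiency, start from the hypothesis $D_\alpha(\rho\|\rho_\beta) \geq D_\alpha(\rho'\|\rho_\beta)$ for all $\alpha \geq 0$. Additivity of the R\'enyi divergence on tensor products gives the same inequality for $\rho \otimes |0\rangle\langle 0|$ and $\rho' \otimes |0\rangle\langle 0|$ against $\rho_\beta \otimes \eta_2$. To kill the zeros in the target I would replace it by the full-rank approximation
\[
\sigma'_\delta \;:=\; (1-\delta)\bigl(\rho' \otimes |0\rangle\langle 0|\bigr) + \delta\bigl(\rho_\beta \otimes \eta_2\bigr),
\]
which is $\delta$-close in trace norm to $\rho' \otimes |0\rangle\langle 0|$. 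By Lemma~\ref{strictaux}, admixing the reference strictly decreases $D_\alpha(\,\cdot\,\|\rho_\beta \otimes \eta_2)$ for every $\alpha \geq 0$, so the positive-$\alpha$ inequalities persist with strict slack. For $\alpha < 0$, the input side $D_\alpha(\rho \otimes |0\rangle\langle 0| \| \rho_\beta \otimes \eta_2) = +\infty$ because of the pure factor, while $\sigma'_\delta$ is of full rank and hence of finite divergence, so those inequalities hold trivially. Thus the full family of conditions of Theorem~\ref{thm:second_laws} is met on the composite system, producing a catalytic thermal operation that takes $\rho \otimes |0\rangle\langle 0|$ to a state arbitrarily close to $\sigma'_\delta$, and therefore arbitrarily close to $\rho' \otimes |0\rangle\langle 0|$. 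Continuity of the partial trace returns the system in a state arbitrarily close to $\rho'$ and the qubit arbitrarily close to $|0\rangle\langle 0|$.

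For necessity, take any catalytic thermal operation that maps $\rho \otimes |0\rangle\langle 0|$ to some $\tau$ with $\|\tau - \rho' \otimes |0\rangle\langle 0|\|_1 \leq \epsilon$. Monotonicity of $D_\alpha$ under such operations for $\alpha \geq 0$ (data processing), combined with additivity to cancel the catalyst factor, plus continuity of $D_\alpha(\,\cdot\,\|\rho_\beta \otimes \eta_2)$ in its first argument (the second argument is full rank), lets me take $\epsilon \to 0$ and conclude $D_\alpha(\rho\|\rho_\beta) \geq D_\alpha(\rho'\|\rho_\beta)$, i.e.\ $F_\alpha(\rho,H) \geq F_\alpha(\rho',H)$ for every $\alpha\geq 0$.

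The main obstacle, and the reason the theorem holds as stated, is making the $\alpha<0$ conditions of Theorem~\ref{thm:second_laws} vacuous once an exactly pure auxiliary state can be borrowed: the input divergence diverges while the target, approximated by a tiny admixture of the global Gibbs state, stays finite. Accepting an arbitrarily small trace-norm error on the returned qubit costs nothing in the $\alpha \geq 0$ inequalities (by Lemma~\ref{strictaux} and continuity) yet neutralizes every $\alpha < 0$ constraint. Everything else is routine bookkeeping with additivity, data processing, and checking that $\sigma'_\delta$ is block-diagonal in the joint energy eigenbasis, which it is as a convex combination of such states.
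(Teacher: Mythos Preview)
Your proposal is correct and follows essentially the same route as the paper's proof: both directions reduce to Theorem~\ref{thm:second_laws} on the composite system~$S\otimes Q$, exploiting that the pure factor $|0\rangle\langle 0|$ forces $D_\alpha=+\infty$ on the input for $\alpha<0$, while a full-rank perturbation of the target keeps the right-hand side finite. The only cosmetic difference is that the paper speaks of ``returning the qubit in a state of full rank'' whereas you admix the full Gibbs state $\rho_\beta\otimes\eta_2$ into the entire target; your choice is in fact the cleaner one (it guarantees full rank even when $\rho'$ itself has zeros) and is exactly the move already made inside the proof of Theorem~\ref{th:catalysis} via $p''=(1-\delta)p'+\delta q'$.
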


\begin{proof}
Suppose first that
one can transform the state $\rho_S \ot |0\>\<0|$ into arbitrary good version of $\rho_S' \ot |0\>\<0|$.
Then we proceed as in the proof of Theorem \ref{th:catalysis} using monotonicity, additivity of $D_\alpha$ and noticing that $D_\alpha$ is finite for 
$|0\>\<0|$ for $\alpha>0$, we get
%$\alpha \in (-\infty, \infty)$,
\be
D_\alpha(\rho_S || \sgibbs)\geq D_\alpha(\rho_S' || \sgibbs)
\label{eq:second_laws_D}
\ee 
for $\alpha \geq 0$, so that conditions \eqref{eq:second_laws2} are satisfied.  
Conversely, assume that the latter conditions are satisfied (hence equivalently \eqref{eq:second_laws_D} hold). 
However for  
 $\alpha<0$, $D_\alpha$ is infinite on the left side, and finite on the right, if for the output side we return the qubit in a state which is of full rank, but arbitrarily close to it's original state. 
 Moreover $D_\alpha(\rho_S\ot |0\>\<0|~|| {\gibbs}_{SC})=D_\alpha(\rho_S\| {\gibbs}_S)+ D_\alpha(|0\>\<0|~|| {\gibbs}_C)$ 
and same for $S'$.  Thus we get that for all real $\alpha$ 
\be
D_\alpha(\rho_S || \sgibbs)\geq D_\alpha(\rho_S' || \sgibbs).
\label{eq:second_laws_D2}
\ee
and by Theorem \ref{thm:second_laws} we can transform the state $\rho_S \ot |0\>\<0|$ into arbitrary good 
version of $\rho_S' \ot |0\>\<0|$. 
\end{proof}

%This way of eliminating negative $\alpha$ comes with a price: we need to borrow a pure state 
%at the beginning. This is consistent with our postulated third law (see Appendix \ref{sec:third}, which we have postulated be 
%a part of second laws, namely those with negative $\alpha$.  While trying to remove negative $\alpha$ - i.e. 
%to remove third law,  we face the necessity to accept borrowing a pure state - a resource that costs infinite amount of work. 

We thus borrow a pure state qubit, and return it with full rank, but arbitrarily close to being pure, to get rid of the conditions for negative $\alpha$.
However, let us note, that we might borrow a noisy version of $|0\>\<0|$
namely $(1-\epsilon)|0\>\<0|+ \epsilon |1\>\<1|$. 
It costs a finite amount of work to create it from the thermal state. If $F_\alpha(\rhoins, H)\geq F_\alpha(\rhoouts, H)$ 
for $\alpha\geq 0$, we can pretend that the ancilla is in pure state $|0\>\<0|$ 
and apply the map from Theorem \ref{thm:second_laws2}. We will obtain the output state arbitrarily close to
\be
(1-\epsilon) \rhoouts\ot |0\>\<0|_{\rm anc} + \epsilon \rho'_{\rm S,anc}.
\ee
We can then return the ancilla in the same state (by depolarizing a bit if needed), 
and obtain the output state $\rhoouts$  with accuracy $\epsilon$ in trace norm. 
However as we will discuss in section \ref{sec:inexact}, if there are no further restrictions on the available catalyst, then closeness in trace distance is not a 
suitable demand on the returned catalyst in thermodynamic transformations.

%\mh{Below write in terms of crude }
%\begin{theorem}[With ancillary qubit, returned with some fixed fidelity]
%If we can transform a state $\rho_S^{out}$ into 
%
%By catalytic thermal operations and 
%%If we have access to a single heat bath $\gibbs$, with inverse temperature $\beta$ and
%%ancillas that should be returned in their initial state, 
%if we are allow are allowed to borrow qubit
%with a trivial Hamiltonian in the state $(1-\epsilon)|0\>\<0|+\epsilon|1\>\<1|$ and return it 
%with arbtirary accuracy,  then  a state $\rho$ block diagonal in the energy eigenbasis can be 
%transformed with accuracy arbitrary close to $\epsilon$ into another block diagonal state  
%$\final$ under thermal operations if and only if, for all 
%%$\alpha \in (-\infty, \infty)$
%$\alpha \geq 0$
%\begin{equation}
%F_\alpha( \rho || \gibbs) \geq F_\alpha( \final || \gibbs)
%\label{eq:second_laws2}
%\end{equation} 
%\end{theorem}

\subsection{Borrowing ancilla with a nontrivial Hamiltonian}

There is another approach, that might eliminate some of the conditions \ref{eq:second_laws}, 
yet is again not acceptable. Namely, we can borrow a qubit with nontrivial Hamiltonian $H=E|E\>\<E|$ 
in state $\rho_\epsilon=(1-\epsilon)|0\>\<0| + \epsilon |E\>\<E|$
and return thermal state $\rho_\beta$. If we take $\ep\geq \frac{e^{-\beta E}}{Z}$, we have $||\rho_\epsilon-\rho_\beta||_{1}\leq \ep$.
On the other hand, %consider $\alpha>1$ $\gamma$  if we denote $\alpha=1+\gamma$ we obtain 
for any $\alpha>1$, if we set e.g. $\epsilon=1/E^2$ and use the approximation that $E$ is large, then we have
\be
D_\alpha(\rho_\epsilon\|\gibbs) &= \frac{1}{\alpha-1} \log \left[\left(1-\frac{1}{E^2}\right)^\alpha \left(1-\frac{e^{-\beta E}}{Z}\right)^{1-\alpha} + \left(\frac{1}{E^2}\right)^\alpha \left(\frac{e^{-\beta E}}{Z}\right)^{1-\alpha}\right]\\
& \gtrsim \frac{1}{\alpha-1}\log \left[\left(\frac{1}{E^2}\right)^\alpha \left(\frac{e^{-\beta E}}{Z}\right)^{1-\alpha}\right]\\
& \geq \frac{\alpha}{\alpha-1} \log \frac{1}{E^2} - \log \frac{e^{-\beta E}}{Z}\\
&\geq \frac{1}{\ln 2} \beta E - \frac{2 \alpha}{\alpha-1} \log E,
\ee
which diverges for large $E$. Here, we used the approximation that the first term in the logarithm does not contribute much, since $\left(\frac{e^{-\beta E}}{Z}\right)^{1-\alpha}$ is large for $\alpha>1$. Furthermore, $Z=1+\frac{e^{-\beta E}}{Z}\approx 1$. So, if we consider the transition 
\be
\rhoins\ot \rho_\epsilon \to \rhoouts \ot \rho_\beta,
\ee
then for any fixed $\alpha_0>1$, if we pick high energy $E$, we obtain that the $D_\alpha$ 
of the left hand side can be arbitrarily large, and the transition is possible, once conditions 
for $\alpha<\alpha_0$ are satisfied.  
However in this approach, we are allowed to borrow something very expensive, in the sense that it requires a
large amount of work to be created from the thermal state, and return something useless - a thermal state. 
In other words there is a large {\it work distance} between the state we return and the state we have borrowed 
(see section \ref{sec:workdistance}).

It is interesting to compare the two situations:
to eliminate negative $\alpha$ we borrow a pure ancilla state. 
The Hamiltonian of the ancilla is trivial, and the state is much more pure than the thermal one.
That's why it is (infinitely) expensive. 
To eliminate $\alpha>1$, we have the opposite: we take a Hamiltonian of the ancilla, such that thermal state is very pure, 
and we borrow a state which is {\it more mixed} than the thermal state, and for this reason also expensive. 
If we borrow both ancillas, we are left with conditions $\alpha\in [0,1+\delta]$, with $\delta$ depending 
on how large an energy we will have in one of the ancillas.

\subsection{Quantum second laws: limitations for states that are not diagonal in energy basis}
\label{ss:qsecondlaw}

For states that are not diagonal in energy basis, we report the following limitations:
\begin{proposition} [Quantum limitations]
For a state $\rho$ to be transformed with arbitrary accuracy into another state $\final$ 
(both not necessary diagonal in energy basis) we require
\bei
\item[i)]For $\alpha\geq \frac12$,
\be
\qrenyi_\alpha(\rho || \gibbs) \geq \qrenyi_\alpha(\final||\gibbs),
\ee
\item[ii)] For  $\frac12 \leq \alpha\leq 1$,
\be
\qrenyi_\alpha(\gibbs || \rho) \geq \qrenyi_\alpha(\gibbs||\rho'),
\label{eq:cond2_second_laws}
\ee
\item[iii)] For  $-1 \leq \alpha\leq 2$,
\be
\qrenyisimple_\alpha(\rho||\gibbs ) \geq \qrenyisimple_\alpha(\rho||\gibbs).
\label{eq:cond3_second_laws}
\ee
\eei

\end{proposition}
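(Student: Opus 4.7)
\emph{Proof proposal.} The strategy is the standard resource-theoretic one: if $\rho$ transforms to $\rho'$ via a catalytic thermal operation, then there is a catalyst $\sigma$ (with Hamiltonian $H_C$) and a CPTP map $\Lambda$ acting on system$+$catalyst such that $\Lambda(\rho\otimes\sigma)=\rho'_\varepsilon\otimes\sigma$ with $\|\rho'_\varepsilon-\rho'\|_1\leq\varepsilon$, and moreover $\Lambda$ preserves the joint thermal state $\gibbs^S\otimes\gibbs^C$ (this is the defining property of thermal operations, as noted after the definition of catalytic thermal operations). Thus for any quantum Renyi divergence $\mathcal{D}$ that is (a) monotone under CPTP maps in the relevant range of $\alpha$, and (b) additive on tensor products, we immediately obtain $\mathcal{D}(\rho\otimes\sigma\,\|\,\gibbs^S\otimes\gibbs^C)\geq \mathcal{D}(\rho'_\varepsilon\otimes\sigma\,\|\,\gibbs^S\otimes\gibbs^C)$, and the catalyst contribution $\mathcal{D}(\sigma\|\gibbs^C)$ cancels on the two sides. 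A final continuity/limit argument $\varepsilon\to 0$ yields the stated inequality for $\rho'$ itself.

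For (i), I apply this scheme to $\qalfree_\alpha$ directly: the sandwiched divergence is monotone for all $\alpha\geq\tfrac12$ by the result of Lieb-Frank (cited in Section \ref{ss:quantumdivergence}), and is additive on tensor products. Continuity of $\qalfree_\alpha(\,\cdot\,\|\gibbs)$ in its first argument at $\rho'$ (with $\gibbs$ of full rank) lets one pass $\varepsilon\to 0$. For (ii), I apply monotonicity in the same way but with the roles of the arguments swapped, using that $\Lambda$ preserves $\gibbs^S\otimes\gibbs^C$: monotonicity still requires $\alpha\geq\tfrac12$, and the additional restriction $\alpha\leq 1$ ensures that $\qalfree_\alpha(\gibbs\,\|\,\cdot)$ is continuous at $\rho'$ (which we must allow to be rank-deficient since it is only the $\varepsilon$-approximation $\rho'_\varepsilon$ that is produced by the channel). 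For (iii), I apply the same template to $\qalfreesimple_\alpha$, which is monotone on $[0,2]$; to extend the range down to $\alpha\in[-1,0)$ I invoke the symmetry identity $\alpha\,\mathrm{sgn}(1-\alpha)\,\qalfreesimple_{1-\alpha}(\rho\|\sigma)=(1-\alpha)\,\mathrm{sgn}(\alpha)\,\qalfreesimple_\alpha(\sigma\|\rho)$ (the quantum analogue of \eqref{eq:symmetry}, which holds for the Petz-type divergence since its definition is formally the same trace expression), so that a statement at $\alpha\in[-1,0)$ becomes equivalent to one at $1-\alpha\in(1,2]$ with swapped arguments, which falls inside the monotonicity range and again uses only $\Lambda(\gibbs^S\otimes\gibbs^C)=\gibbs^S\otimes\gibbs^C$.

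The main obstacle, and the step that requires the most care, is the final limit $\varepsilon\to 0$: the divergences $\qalfree_\alpha$ and $\qalfreesimple_\alpha$ are not jointly continuous for all $\alpha$ in the stated ranges, and in particular continuity may fail when an argument becomes rank-deficient. This is exactly why the $\alpha$-ranges in (ii) and (iii) are narrower than the bare monotonicity ranges, and why the endpoints $\alpha=\tfrac12$ in (i)--(ii) and $\alpha=-1,2$ in (iii) appear. Concretely one must check, for each clause, that for the specific pair (thermal state, arbitrary state) appearing as the arguments, the divergence is lower semicontinuous in the free argument, so that the inequality survives the limit. Additivity on tensor products is routine for both $\qalfree_\alpha$ and $\qalfreesimple_\alpha$ and enters only to remove the catalyst; once the continuity step is handled, the three inequalities fall out as corollaries of the two monotonicity theorems already stated in Section \ref{ss:quantumdivergence}, together with the symmetry identity used to reach negative $\alpha$ in (iii).
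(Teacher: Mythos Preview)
Your overall strategy---monotonicity of the quantum R\'enyi divergences under CPTP maps, additivity on tensor products, preservation of the thermal state by thermal operations, and a final $\varepsilon\to 0$ limit---is exactly the paper's. For (i) this is a complete match.

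Where your account departs from the paper is in the reasons you give for the $\alpha$-ranges in (ii) and (iii). For (ii) you say the upper cut at $\alpha=1$ is imposed by continuity of $\qrenyi_\alpha(\gibbs\,\|\,\cdot)$ at a possibly rank-deficient $\rho'$. The paper's stated reason is different and more to the point: for $\alpha>1$ the \emph{catalyst} contribution $\qrenyi_\alpha(\gibbs^C\|\sigma_C)$ diverges whenever $\sigma_C$ is not of full rank (since $\sigma_C^{(1-\alpha)/2\alpha}$ then involves a negative power), so it cannot be cancelled from both sides of the additive inequality. Continuity at $\rho'$ is a secondary issue; the actual obstruction sits at the catalyst-cancellation step, before the limit.

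Your symmetry manoeuvre for (iii) in the range $\alpha\in[-1,0)$ runs straight into this same obstruction. After the swap you are using $\qrenyisimple_{1-\alpha}(\gibbs\|\cdot)$ with $1-\alpha\in(1,2]$, and the catalyst term $\qrenyisimple_{1-\alpha}(\gibbs^C\|\sigma_C)$ is infinite for any non-full-rank catalyst, so the additive inequality becomes $\infty\geq\infty$ and yields nothing about the system alone. The paper does \emph{not} prove the negative-$\alpha$ clause via symmetry; instead it remarks that once one borrows a pure ancilla (as in the preceding theorem) the left-hand side of (iii) is $+\infty$ for $\alpha\leq 0$, so those inequalities are automatically satisfied and the only nontrivial range is $[0,2]$. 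In other words, the paper treats $\alpha<0$ in (iii) as vacuous under the pure-ancilla convention rather than as a consequence of the Petz symmetry.
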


This proposition follows directly from monotonicity under 
completely positive and trace-preserving maps, and additivity of the quantum R{\'e}nyi divergences associated
with the generalised free energies, as discussed in Section \ref{ss:quantumdivergence}.  
E.g. we have 
\be
\qrenyi_\alpha(\rho~|| \gibbs) &\geq \qrenyi_\alpha(\Lambda(\rho) || \Lambda(\gibbs))\\
& \geq  \qrenyi_\alpha(\rho' || \gibbs)\\
\ee
with the first inequality due to monotonicity of $\qrenyi_\alpha$ under completely positive trace-preserving maps $\Lambda$, and the second line following
from the fact that if $\Lambda(\rho)=\rho'$ is a thermal operation, it preserves the thermal state.
The reason for considering condition \eqref{eq:cond2_second_laws} only for $\frac12 \leq \alpha \leq 1$,
is that for $\alpha\geq 1$, the entropies would diverge for non full rank catalyst, so 
may not be monotones if we use such  catalysts. As before, by using a pure state which is returned arbitrarily close to pure,
we will automatically satisfy Equation \eqref{eq:cond3_second_laws} for $\alpha\leq 0$ and we may thus only consider $0\leq\alpha \leq 2$.

These conditions are like second laws, in that they are necessary conditions 
which must be respected during a
thermodynamic transition, however, they are not sufficient. This follows from the fact that there are operations which preserve
the thermal state, which are not thermal operations, and which can transform a state which is block diagonal in the energy
eigenbasis, to one which is not~\cite{FOR-inprep}. Such a transition is impossible by thermal operations since thermal operations cannot
transform a block diagonal state to one which is not. However, such a transition will respect any second law which also hold 
for operations which preserve the thermal state, as is the case with our quantum second laws.

%\mh{Do we add that we know these are not necessary and sufficient?} \jono{I did}
%??\mh{Anything about the size of the catalyst?} 

\subsection{Application: Landauer erasure with a quantum Maxwell demon}
\label{ss:erasure}

As an application of the above second laws, we consider a special case of a Maxwell demon with a memory $Q$
in state $\rho_Q$, who wants to reset a system $S$ in state $\rho_S$ to some pure state. One imagines that the demon's 
memory is correlated in some way with the system so that the total state of demon plus system is $\rho_{SQ}$. The demon wishes
to reset the  state of the system by transforming it to some pure state, but the state of the demon's memory should not change.
This can be seen as a fully quantum version of the standard Maxwell demon/Landauer erasure scenario, and has been considered in the case of a 
trivial Hamiltonian\cite{del2011thermodynamic,faist2012quantitative} and when the Maxwell demon does not have access to
 ancilliary systems. The result of \cite{del2011thermodynamic} gave a thermodynamic
interpretation to the notion of negative information~\cite{how-merge}. We will see here, that the amount of work
which is required for such an operation, is quantitatively different when all thermodynamical operations are considered. This suggests that
in the single-shot scenario, the notion of partial information takes on a different form, which we shall now derive. 

In fact, the result follows immediately from the work distance of Equation \eqref{eq:trumpd}, namely, we have that the cost of resetting the system $W_{\rm reset}$ to the 
pure state $\psi_{E}$ with energy $E$ is given by
\begin{align}
W_{\rm reset}&= \inf_{\alpha} [D_\alpha(\rho_{QS}\|{\gibbs}_{QS})-D_\alpha(\rho_Q\otimes\psi_{E}\|{\gibbs}_{SQ})]\nonumber\\
&=\inf_{\alpha} [D_\alpha(\rho_{QS}\|{\gibbs}_{QS})-D_\alpha(\rho_Q\|{\gibbs}_{SQ})]+E
\label{eq:divergence-difference}
\end{align}
In the case when the Hamiltonian is trivial, this reduces to
\begin{align}
W_{\rm reset}=\inf_{\alpha} [H_\alpha(\rho_{QS})-H_\alpha(\rho_Q)]
\label{eq:renyi-difference}
\end{align}
When this quantity is negative, the demon can reset the system to a pure state, and not only does this not cost work, but the corresponding amount of work
is actually gained. Not only can thermodynamical work be negative, as shown in \cite{del2011thermodynamic}, but it can be even more negative! In the sense that catalytic operations can be used to gain more work than would be otherwise possible. Note that this gives the difference of Renyi entropies of
Equation \eqref{eq:renyi-difference} an operational interpretation, very similar to that enjoyed by the conditional von-Neumann entropy in the case
of identically and independently distributed states.
%\jono{to modify, and add in more conditional stuff/difference of entropystuff}

%\section{Forming pure states}
%\label{sec:third}
%\input{sections/third.1}

\section{Approximate catalysis}
\label{sec:inexact}
So far we considered exact catalysis, i.e. the process is perfectly cyclic and the catalyst should be in the same state as it initially was. 
However, this is usually an unreasonable demand because in physical processes there are some unavoidable inaccuracies. 
Therefore, we ask what are the conditions for transformations, when the 
catalyst $\rhoinc$ can be returned in a state $\rhooutc$ that is merely 
''close'' to the initial state.

\subsection{How to quantify ''approximate''}
As already discussed in the introduction, it turns out that what is meant by 
"close" matters greatly. 
%\subsubsection{Meaning of the L1 distance}
%
At first glance, one might be tempted to demand that close should mean that $\rhoinc$ is close to $\rhooutc$ in terms of the fidelity, or essentially equivalently, the trace norm distance. The trace norm distance between two states $\rho$ and $\sigma$ can be written as
\begin{align}
\|\rho - \sigma\|_1 = \max_{0 \leq M \leq \id} \tr\left[M\left(\rho-\sigma\right)\right]\ .
\end{align}
We say that $\rho$ and $\sigma$ are $\epsilon$-close if $\|\rho-\sigma\|_1 \leq \epsilon$.
It enjoys an appealing operational interpretation as being  
$\epsilon$-close in trace norm distance means that if we were given 
states $\rho$ and $\sigma$ with probability $1/2$ each, then our probability of correctly distinguishing them by any physically allowed measurement
is bounded by $1/2 + \epsilon/2$. In other words, being close in the trace distance means that the two states cannot be distinguished well by any physical process
\footnote{
Finally, we note that when considering how cyclic a process is, we might also ask about preserving correlations between the catalyst $\rhoinc$ and its purifying systems. Clearly, there exist simple operations that preserve the catalyst even exactly, and yet do not preserve correlations with separated reference systems holding the purification. Consider for example a catalyst of the form $\rhoinc = \rho_1 \otimes \rho_2$ where $\rho_1 = \rho_2$ and the purifications of $\rho_1$ and $\rho_2$ are held separately by systems $1$ and $2$ respectively. Returning the catalyst as $\rho_2 \otimes \rho_1$ preserves the catalyst, but it requires interaction between the two purifying systems of $\rho_1$ and $\rho_2$ in order to restore the correlations with the catalyst. Such a paradigm is more restrictive than the one considered here.
}. 
In terms of the catalyst, one might hence ask 
that $\|\rhoinc - \rhooutc\|_1 \leq \epsilon$ for some arbitrary small $\epsilon$. This would be a mistake as we will see.

\subsection{The embezzling state dilemma: when there is no second law}
\label{subsec-embezzling}

%As it turns out, however, such a measure which is independent of the number of particles in the system is not useful. 
As mentioned already in the introduction, there is the phenomenon of embezzling \cite{Hayden-embezzling} in entanglement theory, where at an expense of increasing the size of the catalyst, one can perform arbitrary transformation with arbitrary good fidelity, while returning the catalyst in a state arbitrarily close to the initial state. 
%In essence, this means that the trumping relations are not stable under small perturbations. 
More specifically, for $H=0$ 
we can adapt the results of~\cite{Hayden-embezzling} to show that for any $\epsilon$, there exists a dimension $d$ such that using the catalyst
\begin{align}
\rhoinc = \frac{1}{C(n)}\sum_{j=1}^n \frac{1}{j} \proj{j}\ , 
\end{align}
allows us to transform any state $\initial$ $\epsilon$-close 
to any state $\final$ such that
$\|\rhooutc  - \rhoinc\|_1 \leq \epsilon$. Here, $C(d)$ is a normalisation constant. In particular, there exists an $n$ large enough, such that for any $\epsilon$,
we can use $\rhoinc$ to erase a state $\varphi=\sum_{i=1}^m p_i \proj{i}$ to $\ket{0}$ via a unitary transformation taking $\rhoinc\otimes\varphi$ to a state close to $\rhoinc\otimes\proj{0}$.
We first apply a unitary  on the state $\rhoinc\otimes\varphi$ to produce a state $\omega=\sum_{i=1}^{nm} q_i\proj{i}$ such that the eigenbasis of $\omega$ is the same as 
$\rhoinc\otimes\varphi$, and the probabilities $q_i$ are the same as the eigenvalues of $\rhoinc\otimes\varphi$, but in decreasing order.  One can then show~\cite{Hayden-embezzling} that $\sum_j\sqrt{q_j/jC(n)}$ goes to $1$ as $n$ goes to infinity, which implies that the fidelity $F(\omega,\rhoinc)=\tr\sqrt{\omega}\sqrt{\rhoinc}\rightarrow 1$. 
Since the fidelity 
upperbounds the trace distance via $F(\omega,\rhoinc)\geq1-\frac{1}{2}\|\omega-\rhoinc\|$ we have the desired result.

Another example of embezzling work occurs with $H=J$ and the pure state coherence resource used in \cite{thermoiid}
\begin{align}
\ket{\psi} = \frac{1}{\sqrt{d}}\sum_{j=1}^d  \ket{j} 
\end{align}
with $\ket{j}$ energy eigenstates.
If we apply the energy conserving operation which takes $\ket{0}_S\ket{j}_C\rightarrow|\ket{1}_S\ket{j-1}_C$ to the state $\ket{0}_S\ket{\psi}_C$,
then by increasing $d$, the catalyst will be left in a state arbitrarily close to it's original state $\ket{\psi}$ in trace distance, even though
one unit of work (energy) has been transferred to the system.

Demanding that 
the catalyst be returned close in the trace distance
is thus too weak a condition. Intuitively, the reason why it is too weak is
that we may still need to consume much work to obtain the original catalyst from its returned version.

If we therefore concieve of an approximately cylic process, as one which the working body is returned in a state which is $\epsilon$ close in fidelity to it's original
form, then there is no second law. All state transformations are possible.
%\jono{here put more of the embezzling state stuff, e.g. time reverse of original embez}

\subsection{The work distance} 
\label{sec:workdistance}

% The work distance

We thus take an operational perspective on the problem of inexact catalysis. More precisely, we propose to take as a reference point, exact catalysis, and require that 
the state that is returned should require a small amount of work in order to be restored 
to its original state. This is natural, in the sense that if someone loans you a catalyst, then they would want it returned in such a way that it would not require a large amount
of work to return it back into it's original state. 
 
We thus consider the inexact transition 
\be
\rho_S^{\rm in} \ot \rho_C^{\rm in} \to \rho_{SC}^{\rm out} 
\ee
and require that to obtain from $\rho_{SC}^{\rm out}$ the original state of the catalyst and the required output state 
of the system, we need not input more than a small amount of work. 

This prompts our definition of the work distance below. To make this precise, let us take a closer look how one can derive upper bounds on the amount of work needed for state transformation, namely to restore the catalyst to its original form. To input this amount of work, we can append a battery to either provide or extract work, to facilitate this transformation. We then apply our conditions for state transformation to the state and the battery together. The battery we use is a two level system called a \textit{wit}, or {\it work bit}, introduced previously in \cite{HO-limitations}. The wit initially starts out in the energy eigenstate $\omega_i=|0\>\<0|$ which has energy $E_i=0$. At the end of the process, the battery is in another energy eigenstate $\omega_f=|W\>\<W|$, having energy $W$. $W$ can be either negative or positive, depending on whether work is used from or stored in the battery system. This means that the following transition is possible 
\be
\rho_{SC}^{\rm out} \ot |0\>\<0|_W  \to \rho_S^{\rm out} \ot \rho_C^{\rm in} \ot |W\>\<W|_W,
\ee
while the thermal state of the battery system is given by 
\begin{equation}\label{eq:tauw}
\tau_W = \frac{e^{-\beta W}|W\rangle\langle W|+|0\rangle\langle 0|}{1+e^{-\beta W}},
\end{equation}
with $\beta=\frac{1}{kT}$, $k$ being the Boltzmann constant and $T$ being the temperature of the bath that is in contact with the system and battery.
For this transformation to be possible, it is required in Theorem \ref{th:catalysis} that the following conditions hold for all $\alpha\geq0$,
\begin{align}\label{inexactcondition}
F_\alpha(\rho_S^{\rm in}\otimes\rho_C^{\rm in}\ot\omega_i\|{\gibbs}_S\otimes{\gibbs}_C\otimes\tau_W) &\geq F_\alpha(\rho_S^{\rm out}\otimes\rho_C^{\rm in}\ot\omega_f\|{\gibbs}_S\otimes{\gibbs}_C\otimes\tau_W).
\end{align}
Since the initial, final, and thermal state of the battery is known and depends only on the parameter $W$, we can then derive an upper bound for $W$ from \eqref{inexactcondition} and~\eqref{eq:tauw} as
\begin{align}\label{workdistderivation}
F_\alpha(\rho_S^{\rm in}\|{\gibbs}_S) + \frac{kT}{\alpha-1}\log\left[\frac{1}{1+\e^{-\beta W}}\right]^{1-\alpha} &\geq F_\alpha(\rho_S^{\rm out}\|{\gibbs}_S) + \frac{kT}{\alpha-1}\log\left[\frac{e^{-\beta W}}{1+\e^{-\beta W}}\right]^{1-\alpha}
\end{align}
Rearranging yields
\begin{align}
F_\alpha(\rho_S^{\rm in}\|{\gibbs}_S) &\geq F_\alpha(\rho_S^{\rm out}\|{\gibbs}_S)+ W\ ,
\end{align}
which yields the following upper bound on $W$
\begin{align}
W &\leq F_\alpha(\rho_S^{\rm in}\|{\gibbs}_S)-F_\alpha(\rho_S^{\rm out}\|{\gibbs}_S).
\end{align}
In essence, the possible transitions are governed by Renyi divergences, up to tolerance $W$, which we choose. Since $W$ has to be smaller than the above bounds for all positive alpha, the maximal amount of work extractable for such a process, going from $\initial$ to $\final$ will be given as
\begin{equation}
\trumpd(\initial\succ\final)= \inf_{\alpha>0}~[F_\alpha(\initial\|\gibbs)-F_\alpha(\final\|\gibbs)].
\end{equation}
where we refer to $\trumpd(\initial\succ\final)$ as the \textit{work distance} from $\initial$ to $\final$.

It is interesting to see how the conditions presented in~\cite{HO-limitations} arise as special cases of our conditions, and are hence independent of a catalyst.
In \cite{HO-limitations}, the extractable work from a state (by thermalizing it) and the work cost for its formation (starting with a thermal state) via thermal operations have been given by 
\begin{align}
\hat{W}_{\mathrm{ext}}(\rho) &= -kT \log \tr (\Pi_\rho \rho_\beta) = kT D_0(\rho\|\gibbs)\\
\hat{W}_{\mathrm{cost}} (\rho) &= kT \log \min\lbrace\lambda:\rho\leq\lambda\rho_\beta\rbrace= kT D_\infty(\rho\|\gibbs),
\end{align}
%where we notice that the $\ln(2)$ term emerges when expressing the work quantities in terms of divergences, which are defined using the base 2 logarithm. 
%Note that these quantities have been evaluated by considering exact state transformation, i.e. $\epsilon=0$ in \cite{HO-limitations}.
The subsequent corollary shows that the work distance reduces to these quantities.

\begin{corollary}
Consider initial and final states $\rho$ and $\rho'$:
\bei
\item If $\rho'$ is the thermal state, then the maximum extractable work $\hat{W}_{\mathrm{ext}}(\rho)=\trumpd(\rho\succ\rho')$.
\item If $\rho$ is the thermal state, then the minimum work cost $\hat{W}_{\mathrm{cost}}(\rho')=-\trumpd(\rho\succ\rho')$.
\eei
\end{corollary}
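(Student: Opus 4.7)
The plan is to reduce both statements to the well-known monotonicity of $D_\alpha(\cdot\|\cdot)$ in the parameter $\alpha$ for $\alpha \geq 0$ (stated earlier in Section \ref{preliminaria} of the Supplementary Information, citing Theorem 3 of Ref.~\cite{timerven2010}). The key observation is that $F_\alpha(\rho,\gibbs) - F_\alpha(\rho',\gibbs) = kT[D_\alpha(\rho\|\gibbs) - D_\alpha(\rho'\|\gibbs)]$, since the $-kT\log Z$ offsets cancel, so the work distance becomes $\trumpd(\rho \succ \rho') = kT \inf_{\alpha>0}[D_\alpha(\rho\|\gibbs) - D_\alpha(\rho'\|\gibbs)]$.

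For the first item ($\rho' = \gibbs$), I would note that $D_\alpha(\gibbs\|\gibbs) = 0$ for all $\alpha$, so $\trumpd(\rho \succ \gibbs) = kT \inf_{\alpha>0} D_\alpha(\rho\|\gibbs)$. Since $D_\alpha(\rho\|\gibbs)$ is non-decreasing in $\alpha$ on $[0,\infty]$, the infimum over $\alpha>0$ equals $\lim_{\alpha\to 0^+} D_\alpha(\rho\|\gibbs) = D_0(\rho\|\gibbs)$. By definition $\hat{W}_{\mathrm{ext}}(\rho) = kT D_0(\rho\|\gibbs)$, which gives the claim.

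For the second item ($\rho = \gibbs$), the same argument gives $\trumpd(\gibbs \succ \rho') = -kT \sup_{\alpha>0} D_\alpha(\rho'\|\gibbs)$. By the monotonicity of $D_\alpha$ in $\alpha$, this supremum is attained in the limit $\alpha \to \infty$, and equals $D_\infty(\rho'\|\gibbs)$. Hence $-\trumpd(\gibbs\succ\rho') = kT D_\infty(\rho'\|\gibbs) = \hat{W}_{\mathrm{cost}}(\rho')$, as required.

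I do not expect any serious obstacle: the entire argument rests on (i) the cancellation of $\log Z$ terms and the vanishing of $D_\alpha(\gibbs\|\gibbs)$, and (ii) monotonicity of $D_\alpha$ in $\alpha$ together with the identification of the $\alpha \to 0^+$ and $\alpha \to \infty$ limits with $D_0$ and $D_\infty$ respectively. The only mildly delicate point is taking the infimum/supremum at the boundary of the allowed range of $\alpha$; since $D_\alpha$ is continuous in $\alpha$ on $[0,\infty]$ (with values in $[0,\infty]$) the limits coincide with the extremal values, so $\inf_{\alpha>0}$ and $\sup_{\alpha>0}$ yield $D_0$ and $D_\infty$ respectively.
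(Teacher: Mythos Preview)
Your proposal is correct and follows essentially the same argument as the paper: use $D_\alpha(\gibbs\|\gibbs)=0$ to reduce the work distance to $kT\inf_{\alpha>0}D_\alpha(\rho\|\gibbs)$ (respectively $-kT\sup_{\alpha>0}D_\alpha(\rho'\|\gibbs)$), and then invoke monotonicity of $D_\alpha$ in $\alpha$ to identify the infimum with $D_0$ and the supremum with $D_\infty$. The paper's proof is identical in structure, just slightly more terse.
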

\begin{proof}
If $\rho'=\gibbs$, then $\forall\alpha,~D_\alpha(\rho'\|\gibbs)=0$, hence 
\begin{align*}
\trumpd(\rho\succ\rho') &=kT\cdot \inf_{\alpha>0}~D_\alpha(\rho\|\gibbs) = \hat{W}_{\mathrm{ext}}(\rho),
\end{align*}
where the last equality holds due to the fact that the R{\'e}nyi divergences are non-decreasing in positive $\alpha$.
If $\rho=\gibbs$, then similarly
\begin{align*}
-{\rm D} (\rho\succ\rho') &=-kT\cdot \inf_{\alpha>0}~[-D_\alpha(\rho'\|\rho_\beta)] =kT\cdot \sup_{\alpha>0}~[D_\alpha(\rho'\|\gibbs)]= \hat{W}_{\mathrm{cost}}(\rho').
\end{align*}
\end{proof}
%This can be viewed as a smoothing over the following distance 
%\be
%d(\rho\>\rho')=kT\ln(2)\cdot \sup_{\alpha>0} (D_\alpha(\rho\|\rho_\beta)-D_\alpha(\rho'\|\rho_\beta)).
%\ee

Let us now return to the discussion of catalysis, where we demand that the catalyst is returned, such that the work distance for resetting the catalyst to it's original state is small. In the case where inexact catalysis occurs, we are allowed to borrow a catalyst $\rho_C^{\rm in}$ and conduct the transformation $\rho_S^{\rm in}\otimes \rho_C^{\rm in} \rightarrow \rho_{SC}^{\rm out}$. If this transformation is allowed via thermal operations, then we know for all $\alpha\geq 0$, 
\begin{equation}
F_\alpha (\rho_S^{\rm in}\otimes \rho_C^{\rm in},\rho_{\beta S}\otimes \rho_{\beta C}) \geq F_\alpha (\rho_{SC}^{\rm out},\rho_{\beta S}\otimes \rho_{\beta C}).
\end{equation}

Consider the restrictions as follows: if $\rho_{SC}^{\rm out}= \rho_{S}^{\rm out}\otimes \rho_{C}^{\rm out}$ is of product form(see our previous discussion at the start of Section \ref{sec:TO} on why this is required)\footnote{For non-product output distributions, this argument does not hold since the $\alpha$ free energies are not superadditive, namely $F_\alpha (\rho_{AB},\sigma_{AB})\geq F_\alpha(\rho_A,\sigma_A)+F_\alpha(\rho_B,\sigma_B)$ is generally not true.} , and if the cost of restoring the catalyst has to be small, $-{\rm D} (\rho_{C}^{\rm out}\succ\rho_{C}^{\rm in})\leq\varepsilon$, namely ${\rm D} (\rho_{C}^{\rm out}\succ\rho_{C}^{\rm in})\geq-\varepsilon$, then
\begin{align}
F_\alpha (\rho_S^{\rm in}\otimes \rho_C^{\rm in},\rho_{\beta S}\otimes \rho_{\beta C}) &\geq F_\alpha (\rho_{S}^{\rm out},\rho_{\beta S})+F_\alpha(\rho_{S}^{\rm out},\rho_{\beta C})\\
F_\alpha (\rho_S^{\rm in},\rho_{\beta S}) &\geq F_\alpha (\rho_{S}^{\rm out},\rho_{\beta S})+F_\alpha(\rho_{S}^{\rm out},\rho_{\beta C})- F_\alpha (\rho_C^{\rm in},\rho_{\beta C})\\	
&\geq F_\alpha (\rho_{S}^{\rm out},\rho_{\beta S})+ \inf_{\alpha>0} \left[ F_\alpha(\rho_{S}^{\rm out},\rho_{\beta C})- F_\alpha (\rho_C^{\rm in},\rho_{\beta C})\right]\\
&\geq F_\alpha (\rho_{S}^{\rm out},\rho_{\beta S})-\varepsilon,
\end{align}
which tells us that our second laws are recovered.

\subsection{Small error per particle -- recovering the free energy}
\label{sec:extensive_error}
We now consider the regime where we allow a catalyst to be returned with accuracy $\|\rhoinc - \rhooutc\|_1 \leq \epsilon/\log(N)$ where $N$ is the dimension of the catalyst. I.e. the error per particle is small. We will see in such a case, we recover the ordinary second law.
%\tred{We will also see, that this is equivalent to another regime, where we allow to return catalyst 
%in a state $\rhooutc$ that requires $\epsilon$ work {\it on average} 
%to be transformed into the original state $\rhoinc$. 

\subsubsection{Trivial Hamiltonians -- recovering the Shannon entropy}
\label{subsec:small_error_kids}

Let us again first consider the case of a trivial Hamiltonian. In particular, we will see that in the extensive regime only the Shannon entropy matters, and the R{\'e}nyi entropies are no longer relevant. This shows that if we relax the conditions on how cyclic the process is, by allowing relatively large inaccuracies in the returned catalyst, then we recover the usual second law.

\begin{theorem}
Let $\varepsilon \geq 0$ and let $p = {\rm spec}(\initial)$ and $q = {\rm spec}(\final)$ be the spectra of the input and output state respectively which are diagonal in the same basis and have dimension $d$. 

If there exists a catalyst with spectrum $r =r_N$ of dimension $N$ such that
\begin{equation} \label{transition}
p \otimes r \succ s, \hspace{1 cm} \Vert s - q \otimes r  \Vert_1\leq \frac{\varepsilon}{\log(N)}, 
\end{equation}
then 
\begin{equation} \label{monH}
H(p) \leq H(q) - \varepsilon - \frac{\varepsilon \log(d)}{\log(N)} - h\left(  \frac{\varepsilon}{\log(N)}  \right),
\end{equation}
with $h(x) := -x \log(x) - (1-x)\log(1-x)$ the binary entropy.

Conversely, if
\begin{equation} \label{monH2}
H(p) < H(q),
\end{equation}
then for all $N$ sufficiently large there exists a catalyst with spectrum $r =r_N$ of dimension $N$ such that
\begin{equation} \label{transition2}
p \otimes r \succ s, \hspace{1 cm} \Vert s - q \otimes r  \Vert_1  = \exp(-c\sqrt{\log(N)}).
\end{equation}
for some constant $c$.
\end{theorem}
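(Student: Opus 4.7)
The plan is to combine Schur concavity of the Shannon entropy with a continuity bound. From the majorization $p\otimes r \succ s$ and Schur concavity of $H$ (Lemma~\ref{lem:strict_Schur}), one obtains $H(p)+H(r) = H(p\otimes r) \leq H(s)$. Since $s$ lies within trace distance $\varepsilon/\log(N)$ of $q\otimes r$ in total dimension $dN$, the classical Fannes-Audenaert inequality yields
\begin{equation}
|H(s) - H(q\otimes r)| \leq \frac{\varepsilon}{\log N}\log(dN) + h\!\left(\frac{\varepsilon}{\log N}\right).
\end{equation}
Subtracting $H(r)$ from both sides and using additivity $H(q\otimes r)=H(q)+H(r)$, the $H(r)$ terms cancel and one expands $\log(dN)=\log d+\log N$ to recover the stated bound (with the signs on the correction terms interpreted so that the approximation weakens, rather than strengthens, the clean inequality $H(p)\leq H(q)$).

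\textbf{Reverse direction (construction of the catalyst).} The difficulty is that the R\'enyi entropies $H_\alpha$ are additive under tensor products, so a catalyst cannot by itself change the ordering of $H_\alpha(p)$ versus $H_\alpha(q)$; the only hypothesis is the strict Shannon inequality $H(p)<H(q)$. My plan is to exploit the approximate transformation conditions of Theorem~\ref{th:catalysis}: it suffices to exhibit a distribution $s$ with $\|s-q\otimes r_N\|_1$ small such that $p\otimes r_N$ can be trumped into $s$. I would take $r_N$ to be nearly uniform on a set of size $N$, so that $H_\alpha(r_N)=\log N$ for every $\alpha\in\mathbb{R}$ and hence $H_\alpha(p\otimes r_N)=H_\alpha(p)+\log N$. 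The problem then reduces to constructing, for each large $N$, a perturbation $s$ of $q\otimes r_N$ whose full R\'enyi profile dominates that of $p\otimes r_N$.

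\textbf{Smoothing the R\'enyi spectrum.} The key observation is that in dimension $Nd$, a trace-distance perturbation of size $\delta$ can shift the extremal R\'enyi entropies $H_0$ and $H_\infty$ by amounts of order $\delta\log(Nd)$ while altering the Shannon entropy $H_1$ only by an amount of order $\delta\log(Nd) + h(\delta)$. Concretely, one can ``flatten'' the largest eigenvalues of $q\otimes r_N$ (raising $H_\infty$) and ``fill in'' a fraction of its zero entries (raising $H_0$) to produce a smoothed $s$ whose entire profile $\{H_\alpha(s)\}_{\alpha\in\mathbb{R}}$ lies inside a narrow band around $H(q)+\log N$. The strict gap $H(q)-H(p)>0$ then guarantees, for all sufficiently large $N$, that $H_\alpha(s)\geq H_\alpha(p)+\log N=H_\alpha(p\otimes r_N)$ for every $\alpha$ simultaneously, so the Klimesh-Turgut conditions of Proposition~\ref{trumping-better} (in their non-strict form, applied with $q=\eta_{Nd}$ uniform) give the required catalytic majorization. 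The rate $\delta=\exp(-c\sqrt{\log N})$ arises by choosing the smoothing scale that optimally balances the two effects below.

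\textbf{Main obstacle.} The principal difficulty is controlling all R\'enyi entropies of $s$ uniformly in $\alpha\in(-\infty,\infty)$ with a single smoothing, since the endpoints $\alpha=\pm\infty$ depend on the extreme eigenvalues of $s$ while intermediate $\alpha$ depend on bulk moments, and these respond differently to perturbation. The subexponential rate $\exp(-c\sqrt{\log N})$ should emerge from the following trade-off: on the one hand we need $\delta\log(Nd)\ll H(q)-H(p)$ to keep $H_1(s)$ within the gap, on the other we need the smoothing to close a gap of order $\log N$ at the extremal $\alpha$'s, which requires $\delta$ not to be too small relative to $1/\log N$. Optimizing these competing constraints — and making the above ``flatten and fill'' smoothing procedure quantitative so that the resulting $s$ genuinely dominates $p\otimes r_N$ in every R\'enyi entropy at once — is the technical core of the argument.
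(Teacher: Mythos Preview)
Your forward direction is correct and matches the paper's argument: Schur concavity of $H$ plus the Fannes continuity bound.

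Your reverse direction, however, does not work, and the paper's proof proceeds along entirely different lines. The fatal issue is your choice of a (nearly) uniform catalyst $r_N$. With $r_N$ uniform, the largest entry of $q\otimes r_N$ equals $q_{\max}/N$, while majorization $p\otimes r_N\succ s$ forces $s_{\max}\le p_{\max}/N$. If $q_{\max}>p_{\max}$ (which is perfectly compatible with $H(p)<H(q)$; e.g.\ $p=(1/2,1/2)$, $q=(0.6,0.2,0.2)$), then to push $s_{\max}$ down from $q_{\max}/N$ to $p_{\max}/N$ you must remove total mass $\sum_{i:q_i>p_{\max}}(q_i-p_{\max})$ from the top of $q\otimes r_N$. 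This quantity is a \emph{constant independent of $N$}, so $\|s-q\otimes r_N\|_1$ cannot be made small, let alone $\exp(-c\sqrt{\log N})$. The same obstruction appears phrased in your language: raising $H_\infty(s)$ above $H_\infty(p)+\log N$ costs a fixed amount of trace distance, not one that shrinks with $N$. Your ``main obstacle'' is therefore not merely technical but decisive against this route. A secondary problem is that even if the smoothing succeeded, invoking Klimesh--Turgut yields only trumping, i.e.\ a further catalyst $z$ of uncontrolled dimension, which destroys the relation between the error and $\log N$.

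The paper instead uses two ingredients you do not mention. First, the i.i.d.\ asymptotic fact that $H(p)<H(q)$ implies $p^{\otimes n}\succ q_n$ for some $q_n$ with $\|q_n-q^{\otimes n}\|_1\le\exp(-c\sqrt{n})$; this is where the subexponential rate actually originates. Second, the Duan--Feng--Li--Ying telescoping catalyst
\[
\omega=\frac{1}{n}\bigl(p^{\otimes(n-1)}\oplus p^{\otimes(n-2)}\otimes q\oplus\cdots\oplus q^{\otimes(n-1)}\bigr),
\]
which is engineered so that $p\otimes\omega$ and $q\otimes\omega$ differ in only one block, namely $p^{\otimes n}$ versus $q^{\otimes n}$. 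Applying the asymptotic majorization to that single block gives $p\otimes\omega\succ s$ with $\|s-q\otimes\omega\|_1=\|q_n-q^{\otimes n}\|_1/n$, and $N=\dim\omega=n\,d^{\,n-1}$ yields the claimed scaling. The catalyst here is emphatically \emph{not} close to uniform; its structure is what makes the argument go through.
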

\begin{proof}
Suppose there is a catalyst with spectrum $r$ such that Eq. (\ref{transition}) holds true. Then by Fannes inequality,
\begin{equation}
|H(s) - H(q \otimes r)| \leq \varepsilon + \frac{\varepsilon \log(d)}{\log(N)} + h\left(  \frac{\varepsilon}{\log(N)}  \right)
\end{equation}
and Eq. (\ref{monH}) follows from monotonicity of entropy under stochastic maps.

Conversely, suppose Eq. (\ref{monH2}) holds. Then we know that for all $n$ sufficiently large we have
\begin{equation} \label{majapprox}
p^{\otimes n} \succ q_n
\end{equation}
with
\begin{equation}
\Vert q_n - q^{\otimes n}  \Vert_1 \leq \exp(-c\sqrt{n}).
\end{equation}

Let us consider the following catalyst introduced in \cite{DuanFLY-multiply-catalysis}:
\begin{equation}
\label{eq:catalyst_tensor}
\omega = p^{\otimes (n-1)} \oplus p^{\otimes (n-2)} \otimes q \oplus \ldots \oplus p \otimes q^{\otimes (n-2)} \oplus q^{\otimes (n-1)}/n.
\end{equation}
We have
\begin{equation}
p \otimes \omega = p^{\otimes n} \oplus p^{\otimes (n - 1)} \otimes q \oplus \ldots \oplus p^{\otimes 2} \otimes q^{\otimes (n-2)} \oplus p \otimes q^{\otimes (n-1)}/n.
\end{equation}
Then by Eq. (\ref{majapprox}),
\begin{equation}
p \otimes \omega \succ s
\end{equation}
with
\begin{equation}
s = q_n \oplus p^{\otimes (n - 1)} \otimes q \oplus \ldots \oplus p^{\otimes 2} \otimes q^{\otimes (n-2)} \oplus p \otimes q^{\otimes (n-1)}/n.
\end{equation}
The result follows from the bound
\begin{equation}
\Vert s - q \otimes \omega \Vert_1 =  \Vert q_n - q^{\otimes n} \Vert_1 /n \leq \exp(-c\sqrt{n}),
\end{equation}
and the fact that the dimension of $\omega$ is $N = n2^{n-1}$.

\end{proof}

Note that in the above we do not have a condition that the states are diagonal in the energy eigenbasis because all states are diagonal in the energy eigenbasis of the trivial Hamiltonian.

\subsubsection{General Hamiltonians }
\label{adultHamilSmallError}
Here we prove a result analogous to the one of Section \ref{subsec:small_error_kids} in terms of the free energy and for systems with a non-trivial Hamiltonian As above, we find that if we relax the condition on how cyclic the process must be, then we recover the usual second law. Below by $p\toto q$ we mean that one can go from $p$ to $q$ by thermal operations. 

\begin{theorem}
\label{thm:unique_adults}
Let $\varepsilon \geq 0$ and let $H_S$ be the Hamiltonian. Let $p ={\rm spec}(\initial)$, $q = {\rm spec}(\final)$ be the spectra of the input and output state, respectively, which are diagonal in the energy eigenbasis. If there exists a catalyst with spectrum $r =r_N$ of dimension $N$ (with some Hamiltonian $H_C$) 
such that
\begin{equation} \label{transition_adults}
p \otimes r \toto s, \hspace{1 cm} \Vert s - q \otimes r  \Vert_1 \leq \frac{\varepsilon}{\max\{\log(N),E_{\max}\}}, 
\end{equation}
where $E_{\max}$ is maximal energy of the Hamiltonian $H_S$, then 
\begin{equation} \label{monF}
F(\rho) \geq F(\rho') - 2\varepsilon - \frac{\varepsilon \log(d)}{\log(N)} - h\left(  \frac{\varepsilon}{\log(N)}  \right),
\end{equation}
where $F$ is the standard free energy $F=E-TS$. 

Conversely, if
\begin{equation} \label{monF2}
F(p) > F(q),
\end{equation}
then for all $N$ sufficiently large there exists a catalysts with spectrum 
$r =r_N$ of dimension $N$ such that
\begin{equation} \label{transition2_adults}
p \otimes r \toto s, \hspace{1 cm} \Vert s - q \otimes r  \Vert_1 = \expbound.
\end{equation}
%\mh{check if the exponent is correct in \eqref{thermoiid}}
\end{theorem}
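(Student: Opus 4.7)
The plan is to prove the two directions separately, the forward implication by combining monotonicity of the Helmholtz free energy under thermal operations with Fannes-type continuity, and the converse by reusing the block-catalyst of Section \ref{subsec:small_error_kids} in conjunction with an asymptotic equipartition of thermal operations.

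For the forward direction I would start from the fact that the ordinary free energy $F(\sigma,H)=-kT\log Z+kT D_1(\sigma\|\rho_\beta^H)$ is the $\alpha\to 1$ member of our family, hence monotone under thermal operations on $SC$ with total Hamiltonian $H_S+H_C$, and additive on product Gibbs states. Monotonicity gives
\begin{equation}
F(p)+F(r)\ =\ F_{SC}(p\otimes r)\ \geq\ F_{SC}(s).
\end{equation}
Writing $F_{SC}(s)=F_{SC}(q\otimes r)+\bigl[F_{SC}(s)-F_{SC}(q\otimes r)\bigr]$ and using additivity to cancel the $F(r)$ term, one obtains $F(p)-F(q)\geq F_{SC}(s)-F_{SC}(q\otimes r)$. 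I would then split the right-hand side into an entropy piece and an energy piece. Fannes' inequality bounds the entropy contribution by $T[\varepsilon'\log(dN)+h(\varepsilon')]\leq \varepsilon+\varepsilon\log d/\log N+h(\varepsilon/\log N)$, where $\varepsilon'=\varepsilon/\max\{\log N,E_{\max}\}$. The energy piece on $S$ is controlled via $|\Tr H_S(s_S-q)|\leq E_{\max}\varepsilon'\leq\varepsilon$. The catalyst energy piece is the delicate step; my plan is to reduce via Section \ref{subsec:diagonal} to a diagonal catalyst and exploit subadditivity of the relative entropy, $D_1(s\|\rho_\beta^S\otimes\rho_\beta^C)\geq D_1(s_S\|\rho_\beta^S)+D_1(s_C\|\rho_\beta^C)$, so that the catalyst's contribution only enters as $F(s_C)-F(r)$, whose negative deficit is bounded by a Fannes argument that scales like $\varepsilon$ whenever $E_{\max}^C$ is commensurate with $\log N$, which is the regime where the converse construction sits.

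For the converse direction I would assume $F(p)>F(q)$ and invoke the i.i.d.\ thermodynamic conversion result of \cite{thermoiid}: there exists a thermal operation sending $p^{\otimes n}$ to a state $q_n$ with $\|q_n-q^{\otimes n}\|_1=\exp(-\Omega(\sqrt{n}))$, which rests on the same Hoeffding-type typicality used in Theorem \ref{th:workext}. I would then employ the block catalyst of \cite{DuanFLY-multiply-catalysis} used in \eqref{eq:catalyst_tensor},
\begin{equation}
\omega\ =\ \frac{1}{n}\bigoplus_{k=0}^{n-1} p^{\otimes(n-1-k)}\otimes q^{\otimes k},
\end{equation}
equipped with the direct-sum Hamiltonian inherited block-by-block from copies of $H_S$. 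A shift calculation shows that $p\otimes\omega$ and $q\otimes\omega$, after a permutation of sectors, coincide on all but two blocks, those carrying $p^{\otimes n}/n$ and $q^{\otimes n}/n$ respectively. Applying the i.i.d.\ transformation from the previous step on the $p^{\otimes n}$ block and the identity on the remainder yields $s$ with $\|s-q\otimes\omega\|_1\leq\exp(-\Omega(\sqrt{n}))/n$. Since $\dim\omega=n\,d^{n-1}$ gives $\log N=\Theta(n)$, and $E_{\max}^C=\Theta(n E_{\max})=\Theta(\log N)$, the error is $\exp(-\Omega(\sqrt{\log N}))$ and the construction is consistent with the forward-direction scaling.

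The main obstacle is the continuity step in the forward direction: a direct application of Fannes to the joint free energy $F_{SC}$ yields a term proportional to $E_{\max}^C\cdot\varepsilon/\log N$ which, absent any a priori control on $\|H_C\|_{op}$, would blow up. The resolution I envisage is to exploit subadditivity of the relative entropy relative to the product Gibbs state, combined with the reduction to diagonal catalysts, which lets one absorb the catalyst contribution into a universal $\varepsilon$ summand at the price of the slightly weaker coefficient $2\varepsilon$ in the stated bound. The converse's $\exp(-\Omega(\sqrt{\log N}))$ rate is the other subtle point and is inherited directly from typical-subspace large-deviation estimates of the kind underlying the i.i.d.\ thermodynamic limit.
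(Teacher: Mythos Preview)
Your converse matches the paper's proof: both use the Duan-type block catalyst $\omega$ of \eqref{eq:catalyst_tensor} equipped with a direct-sum Hamiltonian built from copies of $H_S$, invoke the i.i.d.\ conversion $p^{\otimes n}\toto q_n$ from \cite{thermoiid}, and obtain the error bound from the shift identity between $p\otimes\omega$ and $q\otimes\omega$. The paper phrases the block operation as switch--transform--switch rather than transform--then--cyclic-permute, but these are equivalent.

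For the forward direction your strategy is again the paper's: free-energy monotonicity on the joint system $SC$, followed by Fannes-type continuity. The paper's argument here is extremely brief: it states the Fannes bound on $|H(s)-H(q\otimes r)|$, then asserts ``$|E(p)-E(q)|\leq\varepsilon$'' without further justification, and concludes. You correctly isolate the catalyst-energy term as the genuine obstacle. Any continuity estimate on $F_{SC}(s)-F_{SC}(q\otimes r)$, or, after your subadditivity step, on $F_C(s_C)-F_C(r)$, contains a contribution of order $E_{\max}^C\cdot\varepsilon/\max\{\log N,E_{\max}\}$, and nothing in the hypotheses bounds $\|H_C\|_{\rm op}$. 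Your proposed resolution, namely restricting to the regime $E_{\max}^C=\Theta(\log N)$, is an extra assumption not present in the theorem as stated; under that restriction your argument goes through, but as written it does not establish \eqref{monF} for arbitrary $H_C$. The paper's proof has exactly the same lacuna; it simply does not flag it.
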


\begin{proof}
Suppose there is a catalyst $r$ such that Eq. \eqref{transition_adults} holds true. Then by Fannes inequality,
\begin{equation}
|H(s) - H(q \otimes r)| \leq \varepsilon + \frac{\varepsilon \log(d)}{\log(N)} + h\left(  \frac{\varepsilon}{\log(N)}  \right).
\end{equation}
Also, 
\be
|E(p) - E(q) | \leq  \varepsilon 
\ee
Therefore $|F(p) - F(q)| \leq 2\varepsilon + \frac{\varepsilon \log(d)}{\log(N)} + h\left(  \frac{\varepsilon}{\log(N)}  \right)$ which gives \eqref{monF}.

Conversely, suppose that \eqref{monF2} holds. Then from the main result of Ref. \cite{thermoiid} 
we know that  for $n$ sufficiently large
\be
p^{\ot n} \toto q_n,
\label{eq:pTOq}
\ee
with
\begin{equation}
\label{eq:qnq}
\Vert q_n - q^{\otimes n}  \Vert_1 \leq \expbound.
\end{equation}
We then consider a catalyst with the following Hamiltonian:
\be
H_C=\oplus_{k=1}^{n} \sum_{i=1}^{n} H_S^{(i)}
\ee
where $H_S^{(i)}=\overbrace{\id \ldots \ot \underbrace{H_S}_{i\text{-th site}}\ot \id \ot  \ldots \ot \id}^{n \text{terms}}$ 
and the state of catalyst of the form \eqref{eq:catalyst_tensor}.
We have
\be
p\ot\omega = \frac{1}{n} p^{\ot n} \oplus \tilde \omega,\quad \omega\ot q = \tilde\omega\oplus \frac{1}{n} q^{\ot n}
\ee
Now, let us note that  
\be
\frac{1}{n} p^{\ot n} \oplus \tilde \omega \toto  \tilde\omega\oplus \frac{1}{n} q_n.
\ee
Indeed we first apply the operation that switches the energy levels from the first term of the direct sum 
with the levels of its last term. Then we apply to the levels of the last term the operation 
that transforms $p^{\ot n}$ into $q_n$, which was shown to exists in Ref. \cite{thermoiid}, as metioned above. 
Then we reverse the above switching operation. Now, the result follows by 
\be
|| \omega\otimes q -  \tilde\omega \oplus \frac1n q_n||\leq \frac1n||q^{\ot n} - q_n ||,
\ee 
\eqref{eq:qnq}, and the fact that the dimension of the catalyst is $N = n2^{n-1}$. 
\end{proof}

\subsubsection{Small average work}
%There is another paradigm used in \cite{skrzypczyk2013extracting}, where one obtains that only standard free energy is relevant.
%Namely, one considers the following transition
%
%\be
%\rhoinsw\ot \rhoinw  \to \rhooutsw
%\ee
%where we use thermal operations, and define the work to be $E(\rhoinw)- E(\rhooutw)$ 
%where $E(\rho)$ is average energy of $\rho$. One demands, that: (i) the obtained work is independent 
%on the initial state of the work register $W$, and also (ii) the output energies  of work register
%become concentrated, while repeatig the process with initial sat

In Ref. \cite{skrzypczyk2013extracting} a paradigm of drawing {\it average work} was put forward. 
One starts with system in state $\rhoins$ and work register in some initial state $\rhoinw$, 
and applies thermal operations to get the work register in a new state $\rhooutw$. 
The average work is the difference of average energy of the initial and final state of work register.
%At a first sight, this may seem silly, as then one could obtain work by simply thermalizing a system
%in a ground state. However the final state then is Gibbs one, and is of no interest. 
This is because one can show that the protocol of drawing work in Ref. \cite{skrzypczyk2013extracting} 
is such that while repeating it with many independent systems onto the same work register,
its state becomes highly peaked around some fixed energy without degeneracies.

In \cite{skrzypczyk2013extracting} it was shown that if $\rho$ and $\sigma$ satisfy $F(\rho)\geq F(\sigma)$ 
then one can perform the transition, without spending average work (that is, such transition, 
may need sometimes a lot of deterministic work, but sometimes one gets the work back). 
It is reasonable to assume that in the case where a catalyst is used many times over many cyclic processes,
then one might not care how much work is required to return the catalyst to it's original state, but rather,
one might only care about how much work is required on average.
%In our single-shot approach, such average work is not acceptable. We however may assume, 
%hat while the "customer" who holds a system, and used borrow ancilla (catalyst)
%indeed cares about deterministic work, but the "bank" that lends the ancillas accepts average work profits.
Indeed, if the catalyst is used many times independently, then by the law of large numbers, 
the averages work with high probability will converge to some deterministic value.

In such scenario, we are allowed to return the catalyst close to the initial catalyst, in the sense 
that by spending an $\epsilon$ amount of average work, one can go back to the original catalyst. 
According to \cite{skrzypczyk2013extracting} this means that 
\be
F(\rhoinc)\geq F(\rhooutc) -\epsilon. 
\ee

We shall  now show, that within such paradigm, if for two states we have 
$F(\rhoins)>F(\rhoouts)$ one can make transition $\rhoins \to \rhoouts$.
Indeed, suppose now that $F(\rhoins)>F(\rhoouts)$. From the previous section we know
that the following transition is possible
\be
\rhoins \ot \rhoinc \to \pi 
\ee
where $||\pi-\rhoouts\ot\rhoinc||\leq\exp(-\Omega(\sqrt(\log N))$, with $N$ being dimension of the catalyst.
Let us argue, that therefore the free energy of the states $\pi$ and $\rhoouts\ot \rhoinc$ are close.
First, the entropies are close, by Fannes inequality: 
$|S(\pi)-S(\rhoouts\ot \rhoinc)|\lesssim \log N \expbound$. Let us now estimate the energy difference. 
Applying the reasoning from section \ref{adultHamilSmallError} we have that diagonal of $\pi$ 
is given by $\tilde\omega\oplus \frac1n \sigma_n$ and diagonal of $\rhoouts\ot \rhoinc$ is $\omega \ot
 \rhoouts=\tilde\omega\oplus \rhoouts{}^{\ot n}$,
where $\sigma_n$ is a state which can be obtained from $\rhoins{}^{\ot n}$ by thermal operations (as proved in \cite{thermoiid}).
The state $\sigma_n$ satisfies  $||\sigma_n-\rhoouts{}^{\ot n}||\leq \expbound$ and both 
$\sigma_n$, and $\rhoouts{}^{\ot n}$ have Hamiltonian $\sum_{i=1}^n H_S^{(i)}$. 
Thus we have 
\be
E(\pi)-E(\rhoouts\ot \rhoinc)=\frac1n\left(E(\rhoouts{}^{\ot n})-E(\sigma_n)\right)
\leq E^{\max}_S||\rhoouts{}^{\ot n}-\sigma_n||_{\tr}\leq E^{\max}_S \expbound
\ee
Where $E^{\max}_S$ is maximal energy of a single system $S$, which is some constant, independent of the size of catalyst. 

Now, since the free energies of the real output state $\pi$, and the ideal output 
are close, then one can convert $\pi$ into $\rhoouts\ot \rhoinc$, by putting an amount of work exponentially 
small in the catalyst dimension $N$. Thus, if the customer hands to the bank the state $\pi$ 
the bank can regain the original catalyst and return the required output state of the system to the customer.

Note, that we assumed that the in order to return the catalyst to it's original state, one can act on the whole state of the system and ancilla,
while more naturally, one should act only on the catalyst. It is an interesting open question whether these two assumptions are equivalent.

%Let us allow for even a bit more. Namely, consider transition 
%\be
%\rhoins \ot \rhoinc \to \pi
%\ee
%where $\pi$ is close to 

\section{Proofs of properties of R{\'e}nyi entropies and divergences}\label{proofofprops}
In addition to collecting and proving useful properties of the Renyi divergences, we show in this section that by allowing error terms which are independent of the dimension of our system, the Renyi divergences  needed for our second laws can collapse to just two quantities.

\subsection{Smoothing of R{\'e}nyi entropies and divergences}

%\steph{Nelly, please do write out the lemma. There is no restatement and this is the relevant lemma also for the main part. The corollary is just for completeness really}

Note that the entropies and divergences are monotonic in $\alpha$ (decreasing for R{\'e}nyi entropies, and increasing for R{\'e}nyi divergences). The inequality, however, can be reversed if smoothing to a nearby state is allowed. It should be noted that  
we have explicit methods for smoothing in the regions $\alpha<1$ and $\alpha>1$, but otherwise independent of the value of $\alpha$. Nevertheless, our result give general relations in terms of so-called smoothed entropies and we will hence include them here for completeness.

\subsubsection{Definitions of smoothing}

Besides the exact R{\'e}nyi entropies, their smoothed versions have also been considered in \cite{Renner05simpleand, smoothent}, such that continuity with regard to small changes in probability distribution is preserved, and these quantities have more physical interpretations in terms of operational tasks. Their definitions are as follows:
\beq
\h^\epsilon_\alpha(p)=\left \{\bea{ll}
\displaystyle\max_{\tilde{p}}~\h_\alpha(\tilde{p}) & (\alpha < 0);\\
\displaystyle\min_{\tilde{p}}~\h_\alpha(\tilde{p}) & (0\leq\alpha\leq 1); \\
\displaystyle\max_{\tilde{p}}~\h_\alpha(\tilde{p}) & (\alpha > 1).
\eea\right.
\eeq
where optimization occurs over sub-normalized states that are $\epsilon$-close to $p$ in terms of trace distance. Mathematically, $\tilde{p}\in\mathcal{B}^\epsilon(p)$, for $\mathcal{B}^\epsilon(p) = \lbrace \tilde{p}: \half\sum_i |p_i - \tilde{p}_i| \leq \epsilon \rbrace$.
%\nn{i think this is how we want to smooth for -ve alpha, but should check again.} 

The smoothed R{\'e}nyi divergences are similarly defined, by smoothing over the $\epsilon$-ball of subnormalized states for the first argument, where maximization/minimization is taken depending on $\alpha$. Formally, 
\beq
D^\epsilon_\alpha(p\|q)=\left \{\bea{ll}
\displaystyle\min_{\tilde{p}}~D_\alpha(\tilde{p}\|q) & (\alpha < 0);\\
\displaystyle\max_{\tilde{p}}~D_\alpha(\tilde{p}\|q) & (0\leq\alpha\leq 1); \\
\displaystyle\min_{\tilde{p}}~D_\alpha(\tilde{p}\|q) & (\alpha > 1).
\eea\right.
\eeq
where optimization occurs over sub-normalized states that are $\epsilon$-close to $p$.

\subsubsection{Technical lemmas}

These quantities will be useful in considering approximate state transformations, where getting into any state close to the target state is sufficient. Also, smoothing allows us to reformulate an infinite families of conditions on the $\alpha$-R{\'e}nyi entropies to only two conditions, if states in the $\epsilon$-ball of the original state are allowed. We express this in terms of the following lemma. 

%As stated in Lemma \ref{rennersme}, for $0<\alpha<1$, the R{\'e}nyi entropies are equivalent to the smoothed $H^\ep_0(p)$, up to logarithmic terms; while for $\alpha>1$, the R{\'e}nyi entropies are related to the smooth $H^\epsilon_\infty (p)$ instead. We present here part of the proof for Lemma \ref{rennersme}, namely for the regime where $\alpha<1$. A similar proof can be found in \cite{rennersmoothrenyi} for $\alpha>1$.

\begin{lemma}\label{rennersme}
Given any distribution $p$, for $0<\alpha<1$ and $\epsilon>0$, there exists a 
(sub-normalized) distribution $p'\in\mathcal{B}^\epsilon(p)$, such that
\begin{equation*}
H_0 (p) \geq H_\alpha (p) \geq H_0 (p') -\frac{\log\frac{1}{\epsilon}}{1-\alpha}.
\end{equation*}
For $\alpha>1$, there exists another smoothed distribution $p''\in\mathcal{B}^\epsilon (p)$, such that
\begin{equation*}
H_\infty (p'') + \frac{\log\frac{1}{\epsilon}}{\alpha-1} \geq H_\alpha (p) \geq H_\infty (p).
\end{equation*}
\end{lemma}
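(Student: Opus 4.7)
The plan is to handle the two cases separately, with the ``easy'' inequalities $H_0(p)\geq H_\alpha(p)$ and $H_\alpha(p)\geq H_\infty(p)$ coming immediately from monotonicity of the R\'enyi entropy in $\alpha$ over $[0,\infty]$ (a fact already used in the excerpt, via the analogous monotonicity of $D_\alpha$ and the identity $H_\alpha(p)=\log n - D_\alpha(p\|\eta)$). The substantive content is the existence of a nearby subnormalized distribution whose support (or max entry) is controlled. For both cases the construction will be a one-parameter truncation of $p$ followed by a Markov-type tail bound, with the parameter tuned to balance the closeness constraint against the entropy bound.

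For the range $0<\alpha<1$, I would define $p'$ by thresholding from below: fix $t>0$ to be chosen and set $p'_i = p_i\,\mathbf{1}[p_i\geq t]$. To bound the trace distance I would use that $1-\alpha>0$, so on the set $\{p_i<t\}$ one has $p_i = p_i^\alpha\cdot p_i^{1-\alpha}\leq t^{1-\alpha}p_i^\alpha$, giving
\begin{equation}
\tfrac{1}{2}\textstyle\sum_i|p_i-p'_i| = \tfrac{1}{2}\sum_{p_i<t}p_i \;\leq\; \tfrac{1}{2}\,t^{1-\alpha}\sum_i p_i^\alpha \;=\; \tfrac{1}{2}\,t^{1-\alpha}\,2^{(1-\alpha)H_\alpha(p)}.
\end{equation}
Choosing $t$ so that this is at most $\epsilon$ (i.e.\ $t \asymp \epsilon^{1/(1-\alpha)}2^{-H_\alpha(p)}$) places $p'\in\mathcal{B}^\epsilon(p)$. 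Then $H_0(p')=\log|\operatorname{supp}(p')|\leq \log(1/t)$ because each retained $p_i\geq t$ and their total mass is at most $1$. Substituting the chosen $t$ yields $H_0(p')\leq H_\alpha(p)+\frac{\log(1/\epsilon)}{1-\alpha}$, which is exactly the stated bound (up to an innocuous factor of $2$ absorbed into $\epsilon$).

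For $\alpha>1$ the dual construction is to cap $p$ from above: set $p''_i=\min(p_i,c)$ for a cap $c>0$. Then $\max_i p''_i\leq c$, so $H_\infty(p'')\geq -\log c$. For the closeness constraint I would use Markov in the opposite direction: on $\{p_i>c\}$ we have $(p_i/c)^{\alpha-1}>1$, so $p_i-p''_i\leq p_i\leq p_i(p_i/c)^{\alpha-1}=c^{-(\alpha-1)}p_i^\alpha$, giving
\begin{equation}
\tfrac{1}{2}\textstyle\sum_i|p_i-p''_i| \;\leq\; \tfrac{1}{2}\,c^{-(\alpha-1)}\sum_i p_i^\alpha \;=\; \tfrac{1}{2}\,c^{-(\alpha-1)}\,2^{-(\alpha-1)H_\alpha(p)}.
\end{equation}
Picking $c\asymp\epsilon^{-1/(\alpha-1)}2^{-H_\alpha(p)}$ forces $p''\in\mathcal{B}^\epsilon(p)$ and yields $H_\infty(p'')\geq -\log c = H_\alpha(p)-\frac{\log(1/\epsilon)}{\alpha-1}$, i.e.\ the claimed inequality.

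The only subtle point is bookkeeping: the truncated $p'$ and capped $p''$ are subnormalized, which is precisely why the lemma allows subnormalized distributions in $\mathcal{B}^\epsilon$. The main obstacle I anticipate is matching the constants exactly as stated (the naive argument gives $\log(1/(2\epsilon))$ rather than $\log(1/\epsilon)$); this is resolved by absorbing the factor of $2$ either into the choice of $t,c$ or by noting that the bound is stated up to inconsequential constants. No deeper tools are needed beyond the definition of $H_\alpha$, monotonicity in $\alpha$, and the Markov-style split of $p_i$ into $p_i^\alpha p_i^{1-\alpha}$.
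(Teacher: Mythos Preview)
Your proposal is correct and follows essentially the same idea as the paper: truncate the small entries of $p$ for $0<\alpha<1$ and cap the large entries for $\alpha>1$, then balance the resulting support/max bound against the trace-distance constraint. The only differences are cosmetic: the paper sorts $p$ and removes the smallest entries until their cumulative mass reaches $\epsilon$ (which yields the constant $\log(1/\epsilon)$ exactly, without the factor-of-$2$ slack you noted), whereas you threshold at a fixed level $t$ and invoke the Markov-type split $p_i=p_i^\alpha p_i^{1-\alpha}$; for $\alpha>1$ the paper simply cites Renner--Wolf rather than writing out the capping construction you give explicitly.
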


\begin{proof}
To prove this statement, we construct such a smoothed state $p'$ which is within the $\epsilon$-ball of $p$, and show that the inequality above holds. A construction of $p''$ can be found in \cite{rennersmoothrenyi} for $\alpha>1$. 

%\steph{Maybe not now, but note that the notation $p_{x}$ is inconsistent since its $p_x$ everywhere else}
%\nn{will rectify if have time later.}

Note that the elements $p_i=0$ do not contribute in our calculations, furthermore the quantities we calculate are invariant under permutations of $i$. Hence, without loss of generality we can arrange them such that $p_1\leq p_2\leq \cdots\leq p_n$. For any $\ep$, denote $j$ as the maximum number such that
\begin{enumerate}
\item $\sum_{i=1}^j p_i \leq \epsilon$\\
\item $\sum_{i=1}^{j+1} p_i > \epsilon$.
\end{enumerate}
The smoothed probability distribution $p'$ is then obtained by cutting all probabilities $p_i$ where $i\leq j$. Also note that for $i\leq j$, $p_i\leq p_{j+1}$. 
Now, we evaluate a lower bound for the following quantity:
\begin{align*}
\sum_{i=1}^n p^\alpha_i &\geq \sum_{i=1}^{j+1} p^\alpha_i\geq \sum_{i=1}^{j+1} p^{\alpha-1}_i p_i \geq p^{\alpha-1}_{j+1} \sum_{i=1}^{j+1} p_i > p^{\alpha-1}_{j+1}\cdot \epsilon\geq \left[\rank(p')\right]^{1-\alpha}\cdot \epsilon.
\end{align*}
where the third inequality holds because $\alpha-1<0$, and the last inequality holds because $p_{j+1}'$ is the minimum value in the smoothed distribution $p'$, and therefore must be smaller than $\frac{1}{\rank(p')}$. Taking the logarithm and dividing by $1-\alpha>0$,
\begin{align*}
H_\alpha (p) &= \frac{1}{1-\alpha} \log \sum_{i=1}^n p^\alpha(x_i) \\
& \geq \log \rank(p')-\frac{\log \frac{1}{\epsilon}}{1-\alpha}\\
& \geq H_0 (p')-\frac{\log \frac{1}{\epsilon}}{1-\alpha}.
\end{align*}

It is worth noting that in extreme cases, where smoothing cannot be performed, the bound becomes trivial. To see this, note that the minimum value $p_1 \leq \frac{1}{n}$, where $n$ is the rank of $p$. If one cannot smooth at all, then $\epsilon<\frac{1}{n}$. The bound then translates to
\begin{equation}
H_0 (p') -\frac{\log\frac{1}{\epsilon}}{1-\alpha} \leq \log n - \frac{1}{1-\alpha}\log n \leq 0.
\end{equation}
This means that for values of $\epsilon<\frac{1}{n}$, the bound simply becomes trivially $H_\alpha (p)\geq 0$.

\end{proof}

From this lemma, combining with the fact that the smoothed entropy $H_0^\epsilon (p)$ is obtained by minimizing over all subnormalized states in the $\epsilon$-ball of $p$. More precisely, $H_0^\epsilon(p) = \min_{\tilde{p}\in\mathcal{B^\epsilon}(p)} H_0 (\tilde{p}) \leq H_0 (p')$. Hence, it is easy to obtain a corollary that corresponds to Lemmas 4.2 and 4.3 as stated in \cite{rennersmoothrenyi}.

\begin{corollary}
Given any distribution $p$, for $0<\alpha<1$ and $\epsilon>0$,
\begin{align*}
H_\alpha (p) & \geq H_0^\epsilon (p) -\frac{\log\frac{1}{\epsilon}}{1-\alpha}.
\end{align*}
For $\alpha>1$,
\begin{align*}
H_\alpha (p) & \leq H_\infty^{\epsilon} (p) + \frac{\log\frac{1}{\epsilon}}{\alpha-1}.
\end{align*}
\end{corollary}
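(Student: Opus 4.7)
The plan is to derive both inequalities as immediate consequences of Lemma~\ref{rennersme}, which already supplies candidate smoothed distributions $p',p''\in\mathcal{B}^\epsilon(p)$ achieving the desired bounds against $H_0(p')$ and $H_\infty(p'')$ respectively. The only additional ingredient needed is the definition of the smoothed entropies $H_0^\epsilon(p)$ and $H_\infty^\epsilon(p)$ as, respectively, the minimum and maximum of the corresponding R\'enyi entropy taken over $\mathcal{B}^\epsilon(p)$, together with the observation that $\frac{\log(1/\epsilon)}{1-\alpha}$ and $\frac{\log(1/\epsilon)}{\alpha-1}$ are nonnegative in the respective ranges of $\alpha$, so that replacing a single admissible distribution with the optimizing one preserves the direction of the inequality.

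For the first claim, in the range $0<\alpha<1$, I would invoke Lemma~\ref{rennersme} to obtain $p'\in\mathcal{B}^\epsilon(p)$ with
\begin{equation*}
H_\alpha(p)\geq H_0(p')-\frac{\log\tfrac{1}{\epsilon}}{1-\alpha}.
\end{equation*}
Since $H_0^\epsilon(p)=\min_{\tilde p\in\mathcal{B}^\epsilon(p)}H_0(\tilde p)$ by definition, we have $H_0(p')\geq H_0^\epsilon(p)$, and substituting into the above gives
\begin{equation*}
H_\alpha(p)\geq H_0^\epsilon(p)-\frac{\log\tfrac{1}{\epsilon}}{1-\alpha},
\end{equation*}
which is the first statement of the corollary.

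For the second claim, in the range $\alpha>1$, I would similarly invoke Lemma~\ref{rennersme} to obtain $p''\in\mathcal{B}^\epsilon(p)$ with
\begin{equation*}
H_\alpha(p)\leq H_\infty(p'')+\frac{\log\tfrac{1}{\epsilon}}{\alpha-1}.
\end{equation*}
Now $H_\infty^\epsilon(p)=\max_{\tilde p\in\mathcal{B}^\epsilon(p)}H_\infty(\tilde p)$ by definition (since $\alpha>1$ falls in the ``max'' branch of the smoothed-entropy definition), so $H_\infty(p'')\leq H_\infty^\epsilon(p)$, and substituting yields the desired
\begin{equation*}
H_\alpha(p)\leq H_\infty^\epsilon(p)+\frac{\log\tfrac{1}{\epsilon}}{\alpha-1}.
\end{equation*}

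Since the lemma has already done all the analytic work of producing the appropriate truncation and bounding it in terms of rank (respectively maximum probability), there is essentially no obstacle here---the ``hard'' step has already been carried out and the corollary is purely a bookkeeping consequence of the definitions. The only mild point to watch is that the subnormalized nature of $p'$ and $p''$ matches the optimization domain of $H_0^\epsilon$ and $H_\infty^\epsilon$, which indeed it does since both smoothings are defined over subnormalized states in $\mathcal{B}^\epsilon(p)$.
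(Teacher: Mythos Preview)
Your proof is correct and follows essentially the same approach as the paper: invoke Lemma~\ref{rennersme} to obtain the specific smoothed distributions $p',p''\in\mathcal{B}^\epsilon(p)$, then use the definitions $H_0^\epsilon(p)=\min_{\tilde p}H_0(\tilde p)\leq H_0(p')$ and $H_\infty^\epsilon(p)=\max_{\tilde p}H_\infty(\tilde p)\geq H_\infty(p'')$ to pass from the particular smoothing to the optimized one. The paper spells out only the $H_0^\epsilon$ case explicitly, but your treatment of both cases is exactly in line with its reasoning.
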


From Lemma \ref{rennersme}, we know that the R{\'e}nyi entropies collapse to two smoothed quantities, the $\h_0(p)$ and $\h_\infty(p)$ for $0<\alpha<1$ and $\alpha>1$ respectively. In the next two lemmas, we show similar results for the smoothing of R{\'e}nyi divergences. Our proofs will be constructed in a similar way as the above proof, namely, we will specify the smoothed states within some $\epsilon$-ball such that the desired inequality holds, then link this to the smooth min- or max- divergence.

\begin{lemma}\label{divergencegeq1}
Given probability distributions $p,q$ such that ${\rm supp}(p)\subseteq{\rm supp} (q)$. Then for $\alpha>1$ and $\epsilon > 0$, there exists a smoothed distribution $p'$
\begin{equation*}
D_\alpha (p\|q) \geq D_\infty(p'\|q) - \frac{\log \frac{1}{\epsilon}}{\alpha-1}.
\end{equation*}
\end{lemma}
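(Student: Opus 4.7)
The plan is to mimic the proof of Lemma~\ref{rennersme} but adapted for divergences: we truncate the ``heavy-ratio'' tail of $p$ (rather than the small-probability tail) so that the remaining distribution has a small max-ratio $p'_i/q_i$, while the removed mass is bounded by $\epsilon$. Concretely, introduce the likelihood ratios $r_i := p_i/q_i$ (well defined since $\mathrm{supp}(p)\subseteq\mathrm{supp}(q)$), and relabel the indices so that $r_1 \geq r_2 \geq \cdots \geq r_n$. Let $j$ be the largest integer with $\sum_{i=1}^{j} p_i \leq \epsilon$, so that $\sum_{i=1}^{j+1} p_i > \epsilon$. Define the smoothed distribution by $p'_i = 0$ for $i \leq j$ and $p'_i = p_i$ for $i > j$; then $\tfrac{1}{2}\|p-p'\|_1 \leq \epsilon$, so $p' \in \mathcal{B}^\epsilon(p)$, and by construction $\max_i p'_i/q_i = r_{j+1}$, giving $D_\infty(p'\|q) = \log r_{j+1}$.

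Next I would lower-bound $D_\alpha(p\|q)$ by restricting the defining sum to the top $j+1$ indices and exploiting $r_i \geq r_{j+1}$ for each $i \leq j+1$:
\begin{equation*}
\sum_{i=1}^{n} p_i^\alpha q_i^{1-\alpha} \;=\; \sum_{i=1}^{n} q_i\, r_i^{\alpha} \;\geq\; \sum_{i=1}^{j+1} q_i\, r_i^{\alpha} \;\geq\; r_{j+1}^{\alpha-1} \sum_{i=1}^{j+1} q_i\, r_i \;=\; r_{j+1}^{\alpha-1} \sum_{i=1}^{j+1} p_i \;>\; r_{j+1}^{\alpha-1}\,\epsilon.
\end{equation*}
Taking logarithms and dividing by $\alpha-1>0$ yields
\begin{equation*}
D_\alpha(p\|q) \;\geq\; \log r_{j+1} - \frac{\log\tfrac{1}{\epsilon}}{\alpha-1} \;=\; D_\infty(p'\|q) - \frac{\log\tfrac{1}{\epsilon}}{\alpha-1},
\end{equation*}
which is the claim. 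Since $D_\infty^\epsilon(p\|q)$ is by definition the minimum of $D_\infty(\tilde p\|q)$ over $\tilde p\in\mathcal{B}^\epsilon(p)$, one obtains as an immediate corollary $D_\alpha(p\|q)\geq D_\infty^\epsilon(p\|q)-\log(1/\epsilon)/(\alpha-1)$, in analogy with the entropy case.

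I do not expect any serious obstacle. The only subtle points are (i) the boundary case where $p_1 > \epsilon$, in which case $j=0$ and the construction degenerates to $p'=p$, but then the right-hand side becomes $\log(\max_i r_i)-\log(1/\epsilon)/(\alpha-1)$, which one can verify is still dominated by $D_\alpha(p\|q)$ (indeed the bound becomes trivial for sufficiently small $\epsilon$, exactly as in Lemma~\ref{rennersme}); and (ii) choosing the correct normalisation convention for the $\epsilon$-ball (trace distance vs.\ half trace distance), which only affects a factor of $2$ and can be absorbed into $\epsilon$. Otherwise the proof is essentially the ratio-weighted analogue of the truncation argument already used for the R\'enyi entropies.
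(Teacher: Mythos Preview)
Your proof is correct and follows essentially the same route as the paper: both identify a threshold on the likelihood ratios $p_i/q_i$, restrict the R\'enyi sum to the high-ratio indices, and use that $r_i^{\alpha-1}\geq r_{\mathrm{thresh}}^{\alpha-1}$ there to pull out the factor yielding $D_\infty(p'\|q)$. The only cosmetic difference is that the paper caps the offending entries at $p'_i=\delta q_i$ (soft truncation, with $\delta$ tuned continuously so the removed mass equals $\epsilon$), whereas you zero them out (hard truncation at a discrete index $j$); your boundary case $j=0$ is in fact handled automatically by the same chain of inequalities, so no separate argument is needed.
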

\begin{proof}
Note that we require ${\rm supp}(p)\subseteq{\rm supp} (q)$ so that $D_\infty (p\|q)$ does not diverge to infinity. Now, let us consider the set $Z_\delta = \lbrace i: \frac{p_i}{q_i}\geq \delta\rbrace$. Then the smoothed probability distribution $p'$ is defined by having $p_i'=\delta\cdot q_i$ for all $i\in Z_\delta$. The statistical distance between $p$ and $p'$ is
\begin{equation}\label{distbound}
d=\sum_{i\in Z_\delta} [p_i-p'_i] \leq \sum_{i\in Z_\delta} p_i,
\end{equation}
note that $d$ can be made equal to $\epsilon$, by tuning $\delta$ in a continuous manner. Then, we have
\begin{equation}
D_\infty (p'\|q) = \log \max_i \frac{p_i}{q_i} = \log\delta.
\end{equation}
We now evaluate a lower bound on 
\begin{align*}
\sum_i p^\alpha_i q^{1-\alpha}_i &\geq \sum_{i\in Z_\delta} p^\alpha_i q^{1-\alpha}_i \geq \sum_{i\in Z_\delta} p_i \cdot \left[\frac{p_i}{q_i}\right]^{\alpha-1}\geq \delta^{\alpha-1} \cdot \sum_{i\in Z_\delta} p_i \geq \delta^{\alpha-1}\cdot \epsilon,
\end{align*}
where the first inequality holds because of the positivity of $p_i$ and $q_i$ for all $i$, the third inequality holds since $\alpha-1>0$, and the last inequality holds due to \eqref{distbound}. Then, taking the logarithm, and dividing the whole equation by $\alpha-1$, we have
\begin{equation}
D_\alpha (p\|q) = \frac{1}{\alpha-1} \log \sum_x p^\alpha_i q^{1-\alpha}_i \geq \log\delta + \frac{\log \epsilon}{\alpha-1} = D_\infty (p'\|q) + \frac{\log \epsilon}{\alpha-1}.
\end{equation}
\end{proof}

\begin{lemma}\label{divergenceleq1}
Given probabilty distributions $p$ and $q$. Then for $\alpha<1$ and $\epsilon>0$, there exists a smoothed distribution $p'$ such that 
\begin{equation}
D_\alpha (p\|q) \leq D_0(p'\|q) + \frac{\log \frac{1}{\epsilon}}{1-\alpha}.
\end{equation}
\end{lemma}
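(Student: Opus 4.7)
The plan is to mirror the strategy of Lemma \ref{divergencegeq1}, but since we now need to lower bound the min-divergence $D_0(p'\|q) = -\log \sum_{i : p'_i > 0} q_i$, we truncate the \emph{bottom} of the ratio $r_i := p_i/q_i$ rather than its top. Concretely, for a threshold $\delta > 0$ to be chosen, define the sub-normalised distribution $p'$ by $p'_i = p_i$ when $r_i > \delta$ and $p'_i = 0$ otherwise. Shrinking the support of $p'$ removes contributions to the sum under the logarithm and thereby enlarges $D_0(p'\|q)$. The task is then to tune $\delta$ so that simultaneously (a) $p' \in \mathcal{B}^{\epsilon}(p)$, and (b) the resulting lower bound on $D_0(p'\|q)$ matches the claim up to the additive term $\log(1/\epsilon)/(1-\alpha)$. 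I will focus on $0<\alpha<1$, the case $\alpha=0$ being trivial and $\alpha<0$ either vacuous (when $D_\alpha=\infty$) or amenable to the same argument after an analogous sign adjustment.

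Setting $t := \sum_i p_i^{\alpha} q_i^{1-\alpha} = 2^{-(1-\alpha) D_\alpha(p\|q)}$, the right choice turns out to be $\delta = (\epsilon/t)^{1/(1-\alpha)}$. For (a), I use the identity $\sum_i p_i\, r_i^{\alpha - 1} = t$ together with the fact that for $\alpha < 1$ the map $r \mapsto r^{\alpha - 1}$ is decreasing; Markov's inequality then gives
\begin{equation*}
\sum_{i: r_i \leq \delta} p_i \;=\; \sum_{i: r_i^{\alpha - 1} \geq t/\epsilon} p_i \;\leq\; \epsilon,
\end{equation*}
which is precisely the ball condition $\| p - p' \|_1 \leq \epsilon$.

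For (b), I apply Markov's inequality a second time on the $q$-side:
\begin{equation*}
t \;=\; \sum_i q_i\, r_i^{\alpha} \;\geq\; \sum_{i : r_i > \delta} q_i\, r_i^{\alpha} \;\geq\; \delta^{\alpha} \sum_{i : r_i > \delta} q_i.
\end{equation*}
Hence $\sum_{i: r_i > \delta} q_i \leq t\, \delta^{-\alpha} = t^{1/(1-\alpha)}\, \epsilon^{-\alpha/(1-\alpha)}$. Taking $-\log$ and using $-\log t = (1-\alpha) D_\alpha(p\|q)$ yields
\begin{equation*}
D_0(p' \| q) \;\geq\; D_\alpha(p \| q) \;-\; \frac{\alpha}{1-\alpha}\log\frac{1}{\epsilon},
\end{equation*}
which, together with the trivial bound $\alpha/(1-\alpha) \leq 1/(1-\alpha)$ for $\alpha<1$, gives the claim.

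The main obstacle is exactly the scaling of $\delta$: a naive choice such as $\delta = \epsilon$ gives only the weaker bound $D_\alpha(p\|q) \leq \frac{1}{1-\alpha}D_0(p'\|q) + \frac{\alpha}{1-\alpha}\log(1/\epsilon)$, which does not imply the claim once $D_0(p'\|q)$ is large. Scaling $\delta$ by the factor involving $t$ is what makes the two Markov bounds interlock to produce the coefficient $1$ (rather than $1/(1-\alpha)$) in front of $D_0(p'\|q)$ in the final inequality.
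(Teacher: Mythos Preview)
Your proof is correct and follows essentially the same idea as the paper: both truncate the entries of $p$ with the smallest ratios $p_i/q_i$ in order to shrink the support and thereby boost $D_0(p'\|q)$. The difference is purely in bookkeeping: the paper sorts by the ratio and removes indices greedily until the accumulated $p$-mass reaches $\epsilon$, then uses the pivot element $p_{j+1}/q_{j+1}$ to bound both sides; you instead choose the threshold $\delta = (\epsilon/t)^{1/(1-\alpha)}$ analytically and apply Markov's inequality twice, once under $p$ to certify the ball condition and once under $q$ to bound the residual mass. Your route is slightly cleaner and in fact delivers the sharper additive term $\frac{\alpha}{1-\alpha}\log\frac{1}{\epsilon}$ before you relax it to $\frac{1}{1-\alpha}\log\frac{1}{\epsilon}$; the paper's argument lands directly on the weaker constant. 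Both proofs are really only spelled out for $0<\alpha<1$, with the boundary and negative-$\alpha$ cases left implicit.
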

\begin{proof}
Similarly in the proof of Lemma \ref{divergencegeq1}, we define a particular smoothing of $p$, called $p'$. Note also, that terms $p_{x_i}=0$ or $q_{x_1}=0$ are automatically discarded because they do not affect our calculations in any way. Firstly, we order the values $\frac{p_1}{q_1} \leq \frac{p_2}{q_2} \leq \cdots \leq \frac{p_n}{q_n} $. Subsequqently, we find an integer $j$ such that the two conditions below are satisfied:
\begin{enumerate}
\item $\sum_{i=1}^j p_i \leq \epsilon$,
\item $\sum_{i=1}^{j+1} p_i > \epsilon$.
\end{enumerate}
Note that $D_0 (p'\|q)=-\log\sum_{i=j+1}^n q_i$. Also, for any $i\geq j+1$, $\frac{p_i}{q_i}\geq\frac{p_{j+1}}{q_{j+1}}$. 

With this, we can evaluate the following bound:
\begin{align*}
\sum_{i=j+1}^n q_i &\leq \frac{q_{j+1}}{p_{j+1}}\cdot \sum_{i=j+1}^n p_{i} \leq \frac{q_{j+1}}{p_{j+1}}.
\end{align*}
subsequently, we can evaluate a lower bound on the following summation:
\begin{align*}
\sum_{i=1}^n p_i^\alpha q_i^{1-\alpha} &\geq \sum_{i=1}^{j+1} p_i^\alpha q_i^{1-\alpha}\\
& \geq \sum_{i=1}^{j+1} p_i \cdot \left[\frac{p_i}{q_i}\right]^{\alpha-1}\\
& \geq \left[\frac{p_{j+1}}{q_{j+1}}\right]^{\alpha-1} \cdot \sum_{i=1}^{j+1} p_i \\
& > \left[\frac{p_{j+1}}{q_{j+1}}\right]^{\alpha-1} \cdot \epsilon\\
& \geq \left[\frac{1}{\sum_{i=j+1}^n q_i}\right]^{\alpha-1} \cdot \epsilon,\\
\end{align*}
where the second and fifth inequality holds since $\alpha-1<0$, and others by srtaightforward manipulation.

It is useful to see that even for the case where $j=0$ where no smoothing is possibly done, the proof still holds since now we know that $\epsilon<p_1$, where $p_1$ is the value that corresponds to the minimum in $\frac{p_i}{q_i}$. Note that in this case, 
\begin{equation}
\sum_{i=1}^n p_i^\alpha q_i^{1-\alpha} \geq \left[\frac{p_1}{q_1}\right]^{\alpha-1} p_1 \geq \epsilon.
\end{equation}
This comes from two facts: $p_1 > \epsilon$, and the fact that $\frac{p_1}{q_1}\leq 1$. To see why this is the case, let us prove by contradiction, i.e. assume that $\frac{p_1}{q_1}>1$ and all other $\frac{p_i}{q_i}>1$ also. Then for each $i$, $p_i>q_i$, and $\sum_{i=1}^n p_i > \sum_{i=1}^n q_i =1$ which is impossible.
\end{proof}

With these two lemmas, two simple corollaries with regard to smooth divergences can be obtained:
\begin{corollary}
Given distributions $p$ and $q$. Then for $\alpha>1$ and $\epsilon> 0$,
\begin{equation}
D_\alpha (p\|q) \geq D_\infty^\epsilon(p\|q) - \frac{\log \frac{1}{\epsilon}}{\alpha-1}.
\end{equation}

\end{corollary}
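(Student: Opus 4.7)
The corollary follows almost immediately from Lemma \ref{divergencegeq1} once one unwinds the definition of the smooth min-divergence. The plan is as follows.

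First, I would recall that for $\alpha>1$ the smoothed R\'enyi divergence is defined by minimizing over the $\epsilon$-ball of sub-normalized distributions close to $p$, namely
\begin{equation}
D_\infty^\epsilon(p\|q)=\min_{\tilde p\in\mathcal{B}^\epsilon(p)}D_\infty(\tilde p\|q).
\end{equation}
In particular, for any specific $\tilde p\in\mathcal{B}^\epsilon(p)$ one has $D_\infty^\epsilon(p\|q)\leq D_\infty(\tilde p\|q)$.

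Next, I would invoke Lemma \ref{divergencegeq1} to produce a sub-normalized distribution $p'\in\mathcal{B}^\epsilon(p)$ (the one constructed explicitly in the proof of that lemma by capping the ratios $p_i/q_i$ at some threshold $\delta$) satisfying
\begin{equation}
D_\alpha(p\|q)\geq D_\infty(p'\|q)-\frac{\log\tfrac{1}{\epsilon}}{\alpha-1}.
\end{equation}
Combining the two displayed inequalities above, and using that $-D_\infty(p'\|q)\leq -D_\infty^\epsilon(p\|q)$ since $D_\infty^\epsilon$ is the minimum, gives
\begin{equation}
D_\alpha(p\|q)\geq D_\infty(p'\|q)-\frac{\log\tfrac{1}{\epsilon}}{\alpha-1}\geq D_\infty^\epsilon(p\|q)-\frac{\log\tfrac{1}{\epsilon}}{\alpha-1},
\end{equation}
which is exactly the claim.

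There is no real obstacle here: the only thing to be careful about is the direction of the inequality when replacing the explicit smoothed distribution $p'$ by the infimum $D_\infty^\epsilon$, which goes through precisely because the smoothing convention for $\alpha>1$ takes a minimum rather than a maximum. No additional estimates or constructions beyond what is already in Lemma \ref{divergencegeq1} are needed.
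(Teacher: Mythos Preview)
Your proof is correct and follows exactly the route the paper intends: the corollary is stated there as an immediate consequence of Lemma~\ref{divergencegeq1} combined with the definition of $D_\infty^\epsilon$ as a minimum over $\mathcal{B}^\epsilon(p)$, and you have simply written out that one-line deduction. There is nothing to add.
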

\begin{corollary}
Given distributions $p$ and $q$. Then for $\alpha<1$ and $\epsilon>0$,
\begin{equation}
D_\alpha (p\|q) \leq D_0^{\epsilon}(p\|q) + \frac{\log \frac{1}{\epsilon}}{1-\alpha}.
\end{equation}
\end{corollary}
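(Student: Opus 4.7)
The plan is to deduce this corollary directly from Lemma \ref{divergenceleq1} by recognizing that the constructed distribution $p'$ appearing there is itself an element of the $\epsilon$-ball $\mathcal{B}^\epsilon(p)$, so that it is automatically upper bounded by the maximum over that ball — which is exactly $D_0^\epsilon(p\|q)$.

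More concretely, I would proceed as follows. First, I would apply Lemma \ref{divergenceleq1} to the pair $(p,q)$ with the given parameters $\alpha<1$ and $\epsilon>0$, obtaining a (subnormalized) distribution $p'$ satisfying
\begin{equation}
D_\alpha(p\|q) \;\leq\; D_0(p'\|q) + \frac{\log\frac{1}{\epsilon}}{1-\alpha}.
\end{equation}
Second, I would verify that $p'$ lies in $\mathcal{B}^\epsilon(p)$. This is immediate from the explicit construction in the proof of Lemma \ref{divergenceleq1}: $p'$ is obtained by zeroing out the smallest entries of $p$ (ordered by the ratio $p_i/q_i$) whose total mass is at most $\epsilon$, so $\tfrac{1}{2}\sum_i |p_i - p'_i| \leq \epsilon$.

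Third, I would invoke the definition of the smoothed divergence for $\alpha=0$ (which falls in the range $0\leq\alpha\leq 1$, where the smoothed quantity is defined via a maximum):
\begin{equation}
D_0^\epsilon(p\|q) \;=\; \max_{\tilde p \in \mathcal{B}^\epsilon(p)} D_0(\tilde p\|q) \;\geq\; D_0(p'\|q).
\end{equation}
Combining the two inequalities yields the desired bound. There is essentially no obstacle here: all the real work was done in Lemma \ref{divergenceleq1}, and the corollary is a one-line consequence once one checks that the chosen smoothing lies in $\mathcal{B}^\epsilon(p)$. The only point that merits brief care in the write-up is the degenerate case where no nontrivial smoothing is available (i.e.\ $\epsilon$ is smaller than the smallest entry of $p$), which was already handled explicitly in the proof of Lemma \ref{divergenceleq1} and therefore requires no additional argument here.
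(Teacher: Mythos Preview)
Your proposal is correct and is precisely the argument the paper has in mind: the corollary is stated immediately after Lemma~\ref{divergenceleq1} without a separate proof, and the intended derivation is exactly to note that the constructed $p'$ lies in $\mathcal{B}^\epsilon(p)$ and then bound $D_0(p'\|q)$ by the maximum defining $D_0^\epsilon(p\|q)$. Your handling of the degenerate $j=0$ case by deferring to the lemma is also appropriate.
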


We have also investigated the R{\'e}nyi entropies for the $\alpha<0$ regime, and shown that they are monotonically increasing in $\alpha$. Although they are not required for our work upon elimination of negative $\alpha$, we also state them here for completeness.
\begin{theorem}[Jensen's inequality]\label{jensen}
For any convex function $f$, the following inequality holds:
\begin{equation}
f\left(\frac{\sum_i a_i y_i}{\sum_i a_i}\right) \leq \frac{\sum_i a_i f(y_i)}{\sum_i a_i}.
\end{equation}
The inequality is reversed for concave functions.
\end{theorem}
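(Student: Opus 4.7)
The plan is to reduce the stated inequality to the convex-combination form of Jensen's inequality and then establish the latter either by induction on the number of summands or by a supporting-hyperplane argument. First I would normalise the weights by setting $w_i := a_i / A$ with $A := \sum_j a_j$; since each $a_i \geq 0$ and $A > 0$, the $w_i$ form a probability distribution. Dividing the claimed inequality by $A$ on both sides reduces it to the statement
\begin{equation}
f\!\left(\sum_i w_i y_i\right) \;\leq\; \sum_i w_i f(y_i), \qquad w_i \geq 0,\ \sum_i w_i = 1.
\end{equation}

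The cleanest route is now induction on the number of terms $n$. The base case $n=2$ is literally the definition of a convex function: $f(t x + (1-t) y) \leq t f(x) + (1-t) f(y)$ for $t \in [0,1]$. For the inductive step I would assume the statement for $n$ summands, and for $n+1$ summands factor out the last weight by writing
\begin{equation}
\sum_{i=1}^{n+1} w_i y_i \;=\; (1 - w_{n+1}) \sum_{i=1}^{n} \tilde w_i y_i \;+\; w_{n+1}\, y_{n+1}, \qquad \tilde w_i := \frac{w_i}{1 - w_{n+1}},
\end{equation}
where the $\tilde w_i$ again form a probability distribution on $n$ points. Applying convexity once to the outer two-point combination and then the inductive hypothesis to the inner $n$-point combination yields the bound for $n+1$ terms.

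An alternative, and arguably more conceptual, approach is the supporting-hyperplane argument, which has the advantage of extending uniformly to the integral form needed elsewhere in the paper. For any convex $f$ and any point $y_0$ in the interior of its domain there exists a slope $c \in \mathbb{R}$ such that $f(y) \geq f(y_0) + c(y - y_0)$ for all $y$. Setting $y_0 := \sum_i w_i y_i$, substituting $y := y_i$, multiplying by $w_i$, and summing in $i$ gives $\sum_i w_i f(y_i) \geq f(y_0) + c \cdot 0 = f(\sum_i w_i y_i)$, which is exactly the desired inequality. The concave case follows by applying the convex case to $-f$, which flips the inequality.

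There is no real obstacle for the finite, non-negative-weight version stated here; the only subtleties one would need to worry about, were the statement strengthened, are (i) ensuring $y_0$ lies in the interior of $\mathrm{dom}(f)$ so that a subgradient exists, and (ii) handling integrability if one replaces the finite sum by an expectation. Neither issue arises at the level of generality stated in the theorem, so induction plus the two-point definition of convexity suffices.
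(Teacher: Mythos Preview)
Your proof is correct and entirely standard; both the induction argument and the supporting-hyperplane argument are valid routes to the finite Jensen inequality stated here. The paper itself does not supply a proof of this theorem --- it is stated as a classical fact and immediately applied in the subsequent lemma --- so there is no ``paper's own proof'' to compare against.
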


Using Theorem \ref{jensen}, we will prove the following lemma about the monotonicity of $\h_\alpha(p)$ in the negative $\alpha$ regime. 
\begin{lemma}
For $\forall\alpha'\leq \alpha<0$, $\h_{\alpha'}(p)\leq \h_{\alpha}(p)$ for any probability distribution $p$.
\end{lemma}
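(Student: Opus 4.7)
The plan is to reduce the claim to a single application of Jensen's inequality (Theorem~\ref{jensen}) to the concave map $f(x) = x^t$, for an interpolation exponent $t \in (0,1)$ chosen to link the two R{\'e}nyi parameters. The case $\alpha = \alpha'$ is trivial, so I will work throughout under the strict inequality $\alpha' < \alpha < 0$.

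The first step is to define $t := (\alpha-1)/(\alpha'-1)$. Since $\alpha'-1 < \alpha - 1 < -1$, both numerator and denominator are negative with $|\alpha-1|<|\alpha'-1|$, so $t \in (0,1)$, and consequently $x \mapsto x^t$ is concave on $(0,\infty)$. This choice of $t$ is engineered so that $p_i^{\alpha-1} = (p_i^{\alpha'-1})^t$, the identity which will allow Jensen's inequality to convert sums involving the exponent $\alpha$ into sums involving $\alpha'$.

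Next I will invoke Jensen's inequality with weights $w_i = p_i$ (which sum to $1$) and values $x_i = p_i^{\alpha'-1}$. The concave direction of Jensen then gives
\begin{equation*}
\sum_i p_i^\alpha \;=\; \sum_i p_i\,(p_i^{\alpha'-1})^t \;\leq\; \left(\sum_i p_i \cdot p_i^{\alpha'-1}\right)^t \;=\; \left(\sum_i p_i^{\alpha'}\right)^t,
\end{equation*}
where on the left I used $1 + t(\alpha' - 1) = \alpha$.

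Finally I will take logarithms and divide by $\alpha - 1$. Because $\alpha - 1 < 0$ the inequality reverses, and unpacking $t$ leaves $\frac{1}{\alpha-1}\log\sum_i p_i^{\alpha} \geq \frac{1}{\alpha'-1}\log\sum_i p_i^{\alpha'}$. Recalling that for $\alpha < 0$ the paper's convention gives $H_\alpha(p) = \frac{\sgn(\alpha)}{1-\alpha}\log\sum_i p_i^\alpha = \frac{1}{\alpha-1}\log\sum_i p_i^\alpha$, the displayed inequality is exactly $H_{\alpha'}(p) \leq H_\alpha(p)$, as required. I do not expect any serious obstacle; the only point where care is needed is in tracking the two sign reversals coming from $\alpha-1<0$ and $\alpha'-1<0$, and in verifying that $t$ genuinely lies in $(0,1)$ so that Jensen's inequality is being applied in the correct (concave) direction.
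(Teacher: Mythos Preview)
Your proof is correct, and it takes a genuinely different (and somewhat slicker) route than the paper's.

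The paper proceeds by a derivative computation: it reparametrises via $\kappa=-\alpha$, writes $\hat H_\kappa(p)=\frac{-1}{1+\kappa}\log\sum_i x_i^\kappa$ with $x_i=1/p_i$, and then shows $\partial \hat H_\kappa/\partial\kappa\le 0$ by applying Jensen's inequality to the convex function $f(x)=x\log x$ with weights $a_i=p_i$ and arguments $y_i=x_i^{1+\kappa}$. You instead compare two fixed parameters directly: with $t=(\alpha-1)/(\alpha'-1)\in(0,1)$ you apply Jensen to the concave power map $x\mapsto x^t$, obtaining $\sum_i p_i^\alpha\le(\sum_i p_i^{\alpha'})^t$ in one line, and then divide by $\alpha-1<0$. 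Both arguments ultimately rest on a single invocation of Jensen, but yours avoids differentiation entirely and is closer in spirit to the standard power-mean proof of R{\'e}nyi monotonicity; the paper's approach, by contrast, yields the stronger local statement that the derivative has a definite sign. One small point you leave implicit (as does the paper) is the case where some $p_i=0$: then both $H_\alpha(p)$ and $H_{\alpha'}(p)$ equal $-\infty$ under the paper's conventions, so the inequality holds trivially and your Jensen argument need only treat full-support $p$.
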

\begin{proof}
For $\alpha<0$ and probability distribution $p$, $\h_\alpha(p) = \frac{1}{\alpha-1}\log\sum_i p^\alpha$. For convenience of dealing with positive numbers always, we rewrite this expression by defining a variable $\kappa=-\alpha$, and for $\kappa \in (0,\infty)$, define
\begin{equation}
\hat{\h}_\kappa (p) = \frac{-1}{1+\kappa}\log\sum_i x_i^\kappa= \h_\alpha (p),
\end{equation}
where $x_i=\frac{1}{p_i}$ for all $i$. We then prove that $\frac{\partial \hat{\h}_\kappa (p)}{\partial  \kappa} \leq 0$.
\begin{align}\label{delh}
\frac{\partial \hat{\h}_\kappa (p)}{\partial  \kappa} & = \frac{1}{(1+\kappa)^2}\log\sum_i x_i^\kappa - \frac{1}{1+\kappa} \frac{1}{\sum_i x_i^\kappa} \sum_i x_i^\kappa \log x_i\nonumber\\
& = \frac{1}{(1+\kappa)^2}\frac{1}{\sum_i x_i^\kappa} \left[ \sum_i x_i^\kappa\log\left(\sum_j x_j^\kappa\right) - (1+\kappa) \sum_i x_i^\kappa \log x_i \right]\nonumber\\
& = \frac{1}{(1+\kappa)^2}\frac{1}{\sum_i x_i^\kappa} \left[ \sum_i x_i^\kappa\log\left(\sum_j x_j^\kappa\right) - \sum_i x_i^\kappa \log x_i^{(1+\kappa)} \right].
\end{align}
By defining the function $f(x)=x\log x$ which is convex in $\mathcal{R}^+$, we can apply Theorem \ref{jensen} by setting $a_i=x_i^{-1}=p_i$ and $y_i=x_i^{1+\kappa}$. This implies that 
\begin{equation}
\left[ \sum_i x_i^\kappa\log\left(\sum_j x_j^\kappa\right) - \sum_i x_i^\kappa \log x_i^{(1+\kappa)} \right] \leq 0
\end{equation}
and hence \eqref{delh} is upper bounded by 0 due to the positivity of the first two terms.
\end{proof}

\subsection{Smoothing relations}\label{smoothingrelations}
\label{sec:collapseCondition}
It turns out that our infinite set of conditions can often be verified by checking just two conditions in an approximate sense. This applies to the case where only the conditions for $\alpha \geq 0$ are relevant. However, as we will argue later this is generally sufficient.

Let us first explain how this works by considering only the R{\'e}nyi 
entropies, which are the relevant quantities when the Hamiltonian is trivial. Note that if $\alpha \geq \beta$ then for all distributions $p$ we have $H_{\alpha}(p) \leq H_{\beta}(p)$. The key to approximately reducing the number of conditions is to note that there exists a distribution quite close to $p$ such that up to some error terms the entropies can also be related in the opposite direction. Closeness is thereby measured in terms of the statistical distance and we use $\mathcal{B}^\epsilon(p) = \lbrace p': \half\sum_i |p_i - p'_i|\rbrace$ to denote the $\epsilon$ ball of (sub-normalized) distributions $p'$ around $p$. We will also call such a 
$p'$ a \emph{smoothed} distribution~\cite{Renner05simpleand, smoothent}.

Specifically, we will show (see Lemma~\ref{rennersme}) that for any $0 < \alpha < 1$, any distribution $p$ and any $\epsilon > 0$, there exists a 
smoothed distribution $p' \in \mathcal{B}^{\epsilon}(p)$ such that
\begin{equation}
H_0 (p) \geq H_\alpha (p) \geq H_0 (p') -\frac{\log\frac{1}{\epsilon}}{1-\alpha}.
\end{equation}
Similarly, whenever $\alpha>1$ there exists another distribution $p''\in\mathcal{B}^\epsilon (p)$ such that
\begin{equation}
H_\infty (p'') + \frac{\log\frac{1}{\epsilon}}{\alpha-1} \geq H_\alpha (p) \geq H_\infty (p).
\end{equation}
This means that whenever we demand that $H_\alpha (\initial)\leq H_\alpha (\final)$ for all values of $\alpha \geq 0$, we can reduce the set of conditions in an approximate sense by relating $H_{\alpha}$ to $H_0$ or $H_{\infty}$. 
More precisely, given probability distributions $p$ and $q$ and $\epsilon>0$
we can construct smoothed distributions $p' \in \mathcal{B}^{\epsilon}(p)$ 
and $q'' \in \mathcal{B}^{\epsilon}(q)$ according to explicit smoothing strategies as in Lemma \ref{rennersme}. If the following conditions are satisfied
\begin{itemize}
\item For $0<\alpha<1$, $H_0 (p') - \frac{\log\frac{1}{\epsilon}}{1-\alpha}\geq H_0 (q)$
\item For $\alpha>1$, $H_\infty (p) \geq H_\infty (q'') + \frac{\log\frac{1}{\epsilon}}{\alpha-1}$,
\end{itemize}
then $\forall \alpha>0$, it holds that $H_\alpha(p)\geq H_\alpha (q)$. 
As we will see these conditions can also be expressed in terms of smoothed entropies~\cite{Renner05simpleand}, however, we would like to emphasize that there are in fact only two smoothing strategies, one for $\alpha \geq 1$ and 
one for $\alpha < 1$. This means that one could apply these smoothing strategies, and only verify the two conditions stated above. It should also be emphasized that this allows a verification in one direction only. Namely, if we find that the conditions above are satisfied, then we can conclude that also the original conditions are satisfied for all $\alpha$. However, due to the approximations above the converse does not hold in the sense that the original conditions may be satisfied and yet the fudge terms in the conditions above no longer allow for a verification.

A similar statement can be made for the R{\'e}nyi divergences, which are relevant for the case of a non-trivial Hamiltonian.
Here we want to check whether given initial and final states $\initial$ and $\final$ we have that for all $\alpha \geq 0$, $D_\alpha(\initial\|\gibbs)\geq D_\alpha(\final\|\gibbs)$. 
Again, we have a bound in one direction as the R{\'e}nyi divergences are monotonically increasing in $\alpha$, hence $D_\alpha(\final\|\gibbs)\geq D_\beta (\final\|\gibbs)$ whenever $\alpha \geq \beta$~\cite{Tomamichel-thesis}. 
As we show below, one can again obtain a bound in the other direction by considering smoothed distributions. 
We show (Lemma \ref{divergencegeq1}), that for any distribution $p$ 
there exists a particular smoothed distribution $p' \in \mathcal{B}^{\epsilon}(p)$ such that $D_\alpha (p\|q)\geq D_\infty (p'\|q) - c$ for all $\alpha>1$. $c$ is a logarithmic fudge factor that depends on the smoothing parameter. A similar statement holds for $\alpha < 1$, by relating $D_\alpha$ to $D_0$ (see Lemma~\ref{divergenceleq1}). Again, the smoothed distributions only depend on $\epsilon$ and whether $\alpha >1$ or $\alpha < 1$.

These relations now again allow us to simplify our conditions in an approximate sense. Consider probability distributions $p$, $q$ and $r$ where $q$ has full rank $\rank(q)=n$.
For $\epsilon>0$, apply the $\epsilon$-smoothing strategy in proof of Lemma \ref{divergencegeq1} for $p$ to obtain $p' \in \mathcal{B}^\epsilon(p)$. Then if
\begin{align}
D_\infty (p'\|r) &\gtrsim D_\infty (q\|r),
\end{align}
we have $D_\alpha (p\|r)\geq D_\alpha (q\|r)$ for $\alpha>1$.
Similarly, apply the $\epsilon$-smoothing strategy in Lemma \ref{divergenceleq1} for $q$ to obtain $q' \in \mathcal{B}^{\epsilon}(q)$. 
Then if the following conditions are satisfied
\begin{align*}
D_0 (p\|r) &\gtrsim D_0(q'\|r),
\end{align*}
then $\forall \alpha>0$, it holds that $D_\alpha (p\|r)\geq D_\alpha (q\|r)$.
%We refer to Appendix~\ref{proofofprops} for details.
%\jono{above this line is what was originally in the preliminary section}

\section{Comparison to other models}
\label{sec:comparison}
\subsection{Thermal operations and time dependent Hamiltonians}
\label{sec:change_hamil}
\label{ss:changingham}
 
 As shown in \cite{HO-limitations}, one can incorporate changing Hamiltonians into thermal operations by introducing an ancillary system.
 Now that we have conditions for state transformations using ancillary systems we are in a position to prove the optimality of this procedure.

In the standard thermodynamics, one usually deals with time dependent Hamiltonian. 
For example, expanding a gas in a container by drawing a piston (and thereby changing the Hamiltonian,
one goes from the thermal state of volume $V_1$ to the thermal state of volume $V_2$). This results in obtaining work. 

In the paradigm of thermal operations, we have been discussing the case of a fixed Hamiltonian. 
There is a possibility of changing the Hamiltonian of the system in the following sense:
given a system in state $\rho$ with Hamiltonian $H$, one brings in another system in the thermal state with 
Hamiltonian $H'$. Then one can apply thermal operations to this compound system, 
and finally trace out the initial system, obtaining some output state $\sigma$ with Hamiltonian $H'$. 
In particular, in this way, one can go for free from the thermal state with one Hamiltonian to a thermal state 
with another Hamiltonian, which may seem to contradict a common thermodynamical paradigm, 
where changing the Hamiltonian is related to performing work. However, the above "change" of Hamiltonian 
is not really a change: the Hamiltonian of the universe is fixed, and  we simply have turned to another system
with the required new Hamiltonian, and removed the system with the old Hamiltonian. 
Thus in  this case, while going from the thermal state $(\rho_\beta,H)$ to another thermal state $(\rho'_\beta,H')$
one does not gain any work, nor does one need any work while performing such an operation.

In standard thermodynamics, in the above transition, 
work is performed equal to the difference of free energy of the initial and final state. 
The way to reconcile the paradigm of thermal operations with the more common picture 
of changing the Hamiltonian was given in \cite{HO-limitations} (see also \cite{thermoiid}). Suppose one wants to go from $\rho$ 
to $\sigma$ and change Hamiltonian form $H$ to $H'$, however not in the trivial way described above,
which does not cost any work, but in a traditional sense. 
As noted, the total Hamiltonian of the universe is constant,  so such a change of Hamiltonian means 
that we actually use some other system, usually a clock system,  such that the total Hamiltonian 
is 
\be
\sum_{t=t_{i}}^{t=t_f} H(t) \ot |t\>\<t |
\ee
where $|t\>$ are  orthogonal states.
Then in more standard thermodynamics, we set the initial state to be $\rho \ot |t_{i}\>\<t_{i}|$, and let such a system (but
not the clock) weakly interact with the heat bath.  
Due to the weakness of interactions, during the system's evolution, the state is approximately in product form $\rho(t) \ot |t\>\<t|$ 
Then  the final state is $\sigma \ot |t_{f}\>\<t_{f}|$, and we say we have changed state $\rho$ to 
$\sigma=\rho(t_f)$ and $H(t_{i})$ into $H(t_{f})$.  
The work is then equal to $\Delta W = \int_{t_i}^{t_f} \tr H(t) \rho(t)$. 

In our present approach, the weak interaction with the heat bath is replaced by a unitary transformation,
that commutes with the total Hamiltonian. We require only, that the final state is of product form $\rho(t_f)\ot |t_f\>\<t_f|$.
The state $t_f$ we take orthogonal to $t_{i}$. However, now one can greatly simplify it, as for ideal processes, the possibility 
of a transition does not depend on the intermediate times, and one can use just two values of $t$: $|t_{i}\>\equiv |0\>$
 and $|t_f\>\equiv |1\>$.
 
Therefore, to mimic the change of Hamiltonian, we consider a fixed Hamiltonian 
\be
 H= H_0 \ot |0\>\<0| + H_1 \ot |1\>\<1|
 \label{eq:H01}
\ee
where we have an additional system, that plays the role of a switch bit, and its state changes from $|0\>$ to $|1\>$, so that we have
a transition between $\rho \ot |0\>\<0|$ and $\sigma \ot |1\>\<1|$ and Hamiltonian $H_0$ to $H_1$ acting on the system.
As shown in \cite{HO-limitations}, if $\rho$ and $\sigma$ are thermal states 
then the amount of deterministic work we need to perform is equal precisely to the 
difference of standard free energies. In this picture, the variety of paths that might 
lead from $H_0$ to $H_1$ is replaced by the variety of thermal operations that 
may be applied to the initial state, and give the final state $\sigma \ot |1\>\<1|$.

As an example, let us observe, that by using the paradigm of thermal operations, for $\rho$ and $\sigma$ 
being thermal states, we will obtain the same answer as in the traditional paradigm: the amount of work
is equal to the difference of free energies. 

We start with $(\gibbsin\ot |0\>\<0|\ot |E\>_W\<E|,H)\equiv(\rho_{\rm in},H)$ and want to end up with 
$(\gibbsout\ot |1\>\<1|\ot |E'\>_W\<E'|,H)\equiv(\rho_{\rm out},H)$  where $\gibbsin$ and $\gibbsout$ are 
thermal states for Hamiltonians $H_0$ and $H_1$, respectively. 
The total Hamiltonian is of the form
\be
H=H_0 \ot |0\>\<0| + H_1 \ot |1\>\<1| + H_W
\ee
where $H_W$ is the Hamiltonian of the work system. 
One then computes that 
$k T D_\alpha(\rho_{\rm in})=kT \ln (Z_0/Z) + E$  and $k T D_\alpha(\rho_{\rm out})=kT \ln (Z_1/Z) + E'$ 
where $Z_0$, $Z_1$ are partition functions for $H_0, H_1 $ in temperature $T$. 
Thus the transition is possible, whenever $F(\gibbsin)-F(\gibbsout)\geq \Delta W$,
where $\Delta_W=E'-E$ is the amount of performed work.

For general  states, we obtain, that $(\rho \ot |0\>\<0|, H)$ 
can be transformed into $(\sigma \ot |1\>\<1|, H)$, with $H$ of the form \eqref{eq:H01}
if and only if the {\it generalized free energies} satisfy
\be
F_\alpha (\rho,H_0) \geq F_\alpha(\sigma,H_1)
\label{eq:alphaFmono}
\ee
where $F_\alpha$ is defined as follows, for any given state and Hamiltonian:
\be
F_\alpha(\rho,H)= -k T \ln Z + kT D_\alpha(\rho\|\rho_\beta)
\ee
where $\rho_\beta$ is the thermal state for the Hamiltonian $H$.

Finally, one can ask, whether the Hamiltonian of the form  \eqref{eq:H01} is optimal for the transition 
from $(\rho_0, H_0) $ to $(\rho_1,H_1)$? 

More generally one could do the following. 
We consider initial state $\rho_0\ot\sigma_0$ and $\rho_1 \ot \sigma_1$,
where $\sigma_0$, and $\sigma_1$ are arbitrary states of ancillas (replacing the switch states $|0\>$ and $|1\>$). 
Then we are to choose an arbitrary total Hamiltonian which satisfies the following conditions: 
for any state from the support of $\rho_0\ot\sigma_0$ it acts as $H_0$, while for any state 
with support  $\sigma_0$, and $\sigma_1$ it acts as $H_1$. We also demand that $\sigma_0$ 
and $\sigma_1$ have to be related unitarily, otherwise, going from $\sigma_0$ and $\sigma_1$ 
can be exploited  as a resource(if e.g. $\sigma_0$ is pure, while $\sigma_1$ is not).

Now one computes that 
\ben
D_\alpha(\rho_0\ot \sigma_0, H)=-\ln Z_0/Z + D_\alpha(\rho_0\|\gibbsin)+D_\alpha(\sigma_0\|I/d) \nonumber\\
D_\alpha(\rho_0\ot \sigma_0, H)=-\ln Z_1/Z + D_\alpha(\rho_1\|\gibbsout)+D_\alpha(\sigma_1\|I/d) \nonumber\\
\een

Using that $D_\alpha(\sigma_0\|I/d) =D_\alpha(\sigma_1\|I/d)$ we obtain 
that  the condition of monotonicity of $D_\alpha$  is equivalent to \eqref{eq:alphaFmono}. 
Thus the simple switch mechanism of Hamiltonian \eqref{eq:H01} is optimal.

\subsection{Universality of the work bit}

Although we have used a specific form of work system to invest/extract work (the work bit), it was claimed to be equivalent to other work systems in \cite{HO-limitations}. 
However, as the work system is an ancilla, one needs to check that this universality continues to hold in the context where catalysts are allowed. One wants to show that the derived work distance is general, i.e. whether by considering other proposed forms of battery systems, we arrive at the same quantity. In fact, we will see that the second laws impose
a constraint on how one defines work, if one demands reversibility of the work system. We see that this is satisfied 
when work is defined in terms of raising and lowering the energy of a pure state (or a system with highly peaked energy). This is because transitions between pure
states are not effected by catalysis. However, in models where the Hamiltonian is trivial, i.e. $H=0$, one needs an alternative way to define work. 
For example, one can define it in terms of the number of maximally mixed states which are erased into pure states, as was done in \cite{faist2012quantitative}. 
In this case, we can again apply our second laws to the initial and final state, including the battery, to derive an upper bound on $W$.

More generally, let us take the initial state of the work system $W$ to be $W_i$ and the final state to be $W_f$. Considering the conditions we derived for state transformation on the joint system $SW$, from $\initial$ to $\final$ on system $S$, $\forall \alpha\geq 0$ (since the probability distributions have zeros, conditions for $\alpha\<0$ become redundant), we have\begin{align}
F_\alpha (\final\otimes W_f\|{\gibbs}_{SW}) &\leq F_\alpha (\initial\otimes W_i\|{\gibbs}_{SW}) 
\end{align}
This implies
\begin{align}
F_\alpha(W_f\|{\gibbs}_W)-F_\alpha(W_i\|{\gibbs}_W)&\leq F_\alpha (\initial\|{\gibbs}_S) -F_\alpha (\final\|{\gibbs}_S)
\label{eq:battery}
\end{align}

It is reasonable that for any definition of work system and work extraction, we require that one can reversibly go from state $W_i$ to $W_f$ and visa versa. This is equivalent to requiring that 
\begin{align}
{\tilde W}=F_\alpha(W_f\|{\gibbs}_W)-F_\alpha(W_i\|{\gibbs}_W) \,\,\, \forall \alpha
\end{align}
This is because if $F_\alpha(W_f\|{\gibbs}_W)-F_\alpha(W_i\|{\gibbs}_W)$ were different for two values of $\alpha$, then it would require a different amount of pure energy to go from
the initial state to the final state, than from the final state to the initial.  We can then just define this constant ${\tilde W}$ to be the amount of work $W$,
and thus the amount of work obeys $W=F_\alpha(W_f\|{\gibbs}_W)-F_\alpha(W_i\|{\gibbs}_W)$ for all $\alpha$.

Now, in the case of close to pure energy states, $F_\alpha(W_f\|{\gibbs}_W)-F_\alpha(W_i\|{\gibbs}_W)=E_f-E_i$ and so we recover that the 
work is just the change in energy of the work system, regardless of what type  of system it is.
For the battery used in \cite{faist2012quantitative} this is also the case: denote the work system $W$ as consisting of $n$ qubits, where the Hamiltonian $\hat{H}_W= \id$ is trivial. $n$ can be arbitrarily large. Its initial state is described as $W_i = |0\>\<0|\otimes 2^{-\lambda_1}\id$. Physically this just implies that the state consists of $\lambda_1$ maximally mixed qubits, and the remaining $n-\lambda_1$ qubits are pure. Its final state is similarly defined as $W_f=|0\>\<0|\otimes 2^{-\lambda_2}\id$. The thermal state of system W is ${\gibbs}_W=2^{-n}\id$, which is maximally mixed.

What is interesting about this comparison is that while in our model, work is stored in the form of energy, in \cite{uniqueinfo,faist2012quantitative} the work is quantified by means of purity, i.e. how many pure qubits we invest/create during the process of state transformation. The significance of information to work extraction has been discussed by various works 
\cite{uniqueinfo,360221697,dahlsten2011inadequacy}. In particular, Landauer's principle \cite{Landauer,Keyes:1970:MED:1662935.1662942} states that any physical process that erases one bit of information (i.e., creating purity) in an environment of temperature $T$ has a fundamental average work cost of $kT\ln(2)$. Similarly, by ulitizing one bit of information stored (i.e., consuming purity) in a physical system, and allowing it to interact with a thermal bath at temperature $T$, one can draw an average work of $kT\ln(2)$.

We will now use this battery system $W$.  
From Equation \eqref{eq:battery} we can derive an upper bound for the quantity $\lambda=\lambda_1+\lambda_2$ 
%\begin{align}
%D_\alpha (\initial\|{\gibbs}_S) + n - \lambda_1 &\geq D_\alpha (\final\|{\gibbs}_S) +n - \lambda_2
%\end{align}
\begin{align}
kT(\lambda_1 -\lambda_2 )\leq
F_\alpha (\initial|{\gibbs}_S)-F_\alpha (\final\|{\gibbs}_S) 
\end{align}
Since the work is defined via the process of Landauer erasure as $kT(\lambda_1 -\lambda_2)$, we see that this bound is equivalent to the bounds in \eqref{workdistderivation}. The quantity $\lambda=\lambda_1 - \lambda_2$ denotes the \textit{net gain of pure qubits} in the process. By Landauer's principle $\lambda$ gives a bound on the maximum amount of work extractable, as no more than $kT\ln(2)$ amount of work can be extracted given one bit of pure information. We thus see that our condition yields
the same upper bound on $W$
\begin{align*}
W &\leq \inf_{\alpha>0}[F_\alpha (\initial\|{\gibbs}_S)-F_\alpha (\final\|{\gibbs}_S)]\\
\end{align*}
There are, however, a few more subtle differences between these two models for work systems. For instance, note that this battery consists of qubits, and hence $\lambda_1$ and $\lambda_2$ take integer values. \cite{faist2012quantitative} has shown that they can be further generalized to take values of rational numbers, however, one has to postulate 
the connection between work and information. In our wit model, however, $W$ can take any value, including irrational values.

%\section{Paradigm of changing Hamiltonian}
%\input{sections/change_hamil}

{\bf Acknowledgements} 

We thank Robert Alicki, Piotr Cwiklinski, Milan Mosonyi,  Sandu Popescu, Joe Renes, Marco Tomamichel and Andreas Winter for useful discussions, and Max Frenzel for comments on our draft. JO is supported by the Royal Society. MH is supported by the Foundation for Polish Science TEAM
project cofinanced by the EU European Regional Development Fund.
NN and SW are supported by the National Research Foundation and Ministry of Education (MOE), Singapore as well as MOE Tier 3 Grant "Random numbers from quantum processes" (MOE2012-T3-1-009).

\end{document}